\documentclass[a4paper,12pt]{article} 
\usepackage[english]{babel}   
\usepackage[utf8]{inputenc} 
\usepackage[babel]{csquotes}
\usepackage[T1]{fontenc} 
\usepackage{lmodern}     
\usepackage{microtype}
\usepackage{comment}
\usepackage{enumitem}
\usepackage{nicefrac, xfrac}
%%%%%%%%%%%%%%%%%%%%%%%%%%%%%%%%%%%%%%%
\usepackage{setspace}
\onehalfspacing
\usepackage{geometry}
\geometry{margin=2.54cm, marginparwidth=2.44cm, marginparsep=1mm}
\setlength{\parskip}{.5em}
%%%%%%%%%%%%%%%%%%%%%%%%%%%%%%%%%%%%%%%
\usepackage[natbibapa]{apacite}
\bibliographystyle{apacite}
%%%%%%%%%%%%%%%%%%%%%%%%%%%%%%%%%%%%%%%
%%%%%%%%%%%%%%%%%%%%%%%%%%%%%%%%%%%%%%%
\usepackage{amsmath,amssymb,amsthm,amsfonts,array,mathrsfs,ifthen,mathtools}
\usepackage{dsfont}
%%%%%%%%%%%%%%%%%%%%%%%%%%%%%%%%%%%%%%%
%%%%%%%%%%%%%%%%%%%%%%%%%%%%%%%%%%%%%%%
\usepackage[dvipsnames]{xcolor}
\definecolor{coolblack}{rgb}{0.0, 0.18, 0.39}
\usepackage[colorlinks=true,linkcolor=coolblack,urlcolor=coolblack,citecolor=coolblack,pdftex]{hyperref}
%%%%%%%%%%%%%%%%%%%%%% 
%%%%%%% TO DO NOTES
\usepackage[colorinlistoftodos]{todonotes}
%Freddie comments
\newcommand{\FN}[1]{\todo[color=cyan!50!white!50,size=\tiny]{#1}}
%%%%%%%%%%%%%%%%%%%%%%%%%%%%%%%%%%%%%%%
\usepackage{graphicx,subcaption}
\usepackage{booktabs} 
\usepackage{caption}                      
%%%%%%%%%%%%%%%%%%%%%%%%%%%%%%%%%
\usepackage{quoting} 
\usepackage{soul} 
\usepackage{tikz}                          
\usetikzlibrary{arrows,decorations,backgrounds,calc,shapes}

%%%%%%%%%%%%%%%%%%%%%%%%%%%%%%%%%%%%% 

%%%%%%%%%%%%%%% 
  % numeri reali
  % numeri interi
 % le funzioni continue

\DeclareMathOperator*{\argmax}{\arg\!\max}

\theoremstyle{plain}
\newtheorem*{theorem*}{Theorem}

\newtheorem{definition}{Definition}
\newtheorem{proposition}{Proposition}
\newtheorem{lemma}{Lemma}
\newtheorem{corollary}{Corollary}
\newtheorem{step}{Step}

%%%%%%%%%%%%%%%%%%%%%%%%%%%%%%%%%%%%%%%%
%\usepackage[acronym]{glossaries}
%\makeglossaries
%\newacronym{dm}{DM}{decision-maker}
%\newacronym{cb}{CB}{Confirmatory behavior}
%\newacronym{db}{DB}{Disconfirmatory behavior}
%\newacronym{disb}{DisB}{Disproving behavior}
%\newacronym{bs}{BS}{Belief Swap}

\title{Belief patterns with information processing\thanks{I thank Ennio Bilancini, Santiago Oliveros, Ludovic Renou, and Katharine Rockett for helpful comments and suggestions on a much earlier draft. All errors are mine.}}

\author{Federico Vaccari\thanks{Department of Economics, University of Bergamo, \textit{e-mail}: \href{mailto:vaccari.econ@gmail.com}{\sf vaccari.econ@gmail.com}.}}

\date{}		

\begin{document}
%\pagenumbering{gobble}

%\cleardoublepage
%\newpage\null\thispagestyle{empty}\newpage
%\tableofcontents 

%\clearpage

\maketitle 	

%\section*{Abstract}
%\addcontentsline{toc}{section}{Abstract}
\begin{abstract}
\noindent This paper presents a model of costly information acquisition where decision-makers can choose whether to elaborate information superficially or precisely. The former action is costless, while the latter entails a processing cost. Within this framework, decision-makers' beliefs may polarize even after they have access to the same evidence. From the perspective of a Bayesian observer who neglects information processing constraints, the decision-makers' optimal behavior and belief updating may appear consistent with biases such as disconfirmation, underreaction to information, and confirmation bias. However, these phenomena emerge naturally within the model and are fully compatible with standard Bayesian inference and rational decision-making when accounting for the costs of information acquisition.
\end{abstract}

	\noindent {\bf JEL codes:} D81, D80, D83, D91 
	
	\noindent {\bf Keywords:} information processing, beliefs, polarization, confirmation, acquisition
	
	\thispagestyle{empty}

\newpage

{\footnotesize

\tableofcontents
}
%\clearpage
%\pagenumbering{arabic}
%\setcounter{page}{2}

\section{Introduction}

Individuals with access to the same information can arrive at increasingly divergent beliefs or update their opinions in ways that fail to reflect the available evidence. This observation is consistent with a substantial body of literature in economics and psychology that documents systematic biases in how information is processed. Understanding how individuals incorporate new information is vital, as belief polarization can drive financial frenzies, escalate conflicts, and deepen economic, political, and cultural divisions. This paper employs a fully rational Bayesian framework to examine decision-makers' incentives for information acquisition and their implications on belief updating.  

There are three key observations that define the approach taken here. First, having access to information does not necessarily mean fully incorporating its content---just as owning an encyclopedia does not equate to knowing all of its entries. Second, scrutinizing evidence with greater precision is possible but comes at a cost. Third, individuals with different beliefs have different incentives for examining information. This is because the value of information lies in its potential to influence optimal decision-making. For example, staunch supporters of a political party may have little interest in carefully analyzing the platforms or debates of opposing parties, as their choice is already decided. By contrast, undecided voters are more motivated to seek political information that helps them make an informed decision when casting their ballot.

This paper examines a parsimonious model of information acquisition where rational and Bayesian decision-makers must determine which of two states is true. They have access to an imperfect signal that provides information about the true state, and they can choose to scrutinize it either superficially or with greater precision. Superficial scrutiny comes at no cost, while precise acquisition requires decision-makers to incur a processing cost. Their trade-off is between making a less informed decision for free or incurring a cost to make a better-informed choice.

Processing costs take the form of a disutility required to obtain additional information, encapsulating concepts developed in both economics and psychology. In their simplest form, they may represent a direct economic cost, such as a fee to consult an expert or the price of accessing a database. More broadly, they encompass the resources needed to process larger volumes of information, including time, opportunity costs, and cognitive effort. This approach is also consistent with the theory of rational inattention, which posits that ``agents cannot process all available information, but they can choose which exact pieces of information to attend to'' \citep{matejka2023}.

The first part of the paper examines the incentives driving information acquisition. Decision-makers with different prior beliefs may exhibit different willingness to pay for a precise scrutiny, with some entirely unwilling to incur any cost for additional information. The shape of the willingness-to-pay function is determined by the specifics of the underlying decision problem and by the information received through a superficial scrutiny. 

Analyzing the belief patterns that emerge after optimal information acquisition is the focus of the paper's second part. The analysis begins by examining the relative belief patterns of two decision-makers, with a particular emphasis on belief polarization. The main proposition in this section establishes the necessary and sufficient conditions for polarization and derives the ex-ante probability of its occurrence. The analysis then shifts to the information acquisition choices and belief patterns of individual decision-makers. This part identifies the necessary and sufficient conditions for various phenomena, including disconfirmation, confirmatory belief patterns, and underreaction to information.

These results remark the importance of considering decision-makers' incentives for information acquisition when evaluating their responses to information. Beliefs can naturally polarize under common evidence because decision-makers with different priors make different information acquisition decisions. Ignoring these incentives may lead to choices and belief patterns that, from the perspective of an informed Bayesian observer, appear similar to well-documented biases in information processing, such as confirmation bias. These findings also offer several testable implications regarding beliefs and information acquisition choices, providing a foundation for empirical exploration.

This work is related to the literature on information acquisition and processing, and specifically to those papers which focus on patterns such as belief polarization and confirmation bias. \cite{calvert1985} uses a rational choice model to show how Bayesian decision-makers optimally select biased and imperfect advice based on their prior beliefs. In \cite{suen}, decision-makers acquire information through ``experts'' who coarsen continuous data into discrete recommendations. This process leads rational Bayesian individuals to gather information confirming their priors. Decision-makers with different prior beliefs may choose different experts, selecting those who coarsen information in ways that resonate with their priors. This behavior can foster the polarization of beliefs. %they do not have info acquisition!

\cite{nimark2019inattention} examine the persistence of disagreement over objective truths. They develop a model in which rational Bayesian agents learn about a state through noisy signals, and identify two key mechanisms driving disagreement: the confirmation effect, where agents select signal structures that reinforce their prior beliefs, and the complacency effect, where agents with more precise prior beliefs opt for less informative signals. When priors are sufficiently precise, agents may even choose completely uninformative signals. Together, these effects can lead to clustering of beliefs into opposing groups over time, despite agents being ex ante identical. Differently from the aforementioned work, this paper keeps the information structure fixed and focuses on decision-makers' optimal acquisition choices and belief patterns.

\cite{borgers} study the conditions under which two signals are substitutes or complements in the spirit of \cite{blackwell}. They show that the complementarity of signals is closely tied to the phenomenon where two agents, starting with differing prior beliefs, may become even more polarized after receiving additional public evidence. \cite{kondor} and \cite{andreoni} describe information structures in which increased polarization following the public disclosure of evidence is consistent with Bayesian updating. Polarization arises because agents disagree on how to interpret the public signal, implying that signals are complements according to \cite{borgers}.

%%%%%%%%%%%%%%%%%%%%%%%%%%%%%%%%

The seminal paper by \cite{rabin} models confirmation bias as an exogenous probability that a decision-maker misinterprets signals contradicting her prior hypothesis. Specifically, the decision-maker unwittingly misinterprets only contrasting information, creating a distinction between the information she \emph{receives} and what she \emph{perceives}. Key implications of this bias in their model include overconfidence, belief polarization, and potential wrongness. The latter occurs when a Bayesian observer, aware of the agent’s confirmation bias, concludes that she supports an incorrect hypothesis. Similarly, this paper introduces a distinction between the information a decision-maker receives and the information she chooses to observe due to costly acquisition. A fully informed Bayesian observer, disregarding the cost constraints, may interpret the decision-maker’s behavior as irrational, despite its optimality within the model.

%%%%%%%%%%%%%%%%%%%%%%%%%%%%%%%

There is a tight connection between costly information acquisition and cognitive limitations \citep{dewatripont}. This connection resonates with psychological theories, such as the dual-process theory, which distinguish between two distinct modes of information processing. In this framework, individuals either rely on automatic, intuitive judgments or engage in more deliberate, reasoning-based processing \citep{kahneman}. The role of processing costs fits  with these theories, as cognitive limitations can constrain the ability to engage in deliberate yet straining processing. Furthermore, this idea connects to the Elaboration Likelihood Model (Petty and Cacioppo, 1986) and the Heuristic-Systematic Model (Chaiken, Liberman, and Eagly, 1989), which also propose that individuals process information through either heuristic shortcuts or more systematic, effortful strategies, depending on the cost and motivation to process information.\footnote{The Elaboration Likelihood Model and the Heuristic-Systematic Model both emphasize how individuals process persuasive messages either through a superficial, heuristic approach or through more effortful, systematic thinking. These models are foundational in understanding how information is processed under varying levels of motivation and cognitive resources.}

%%%%%%%%%%%%%%%%%%%%%%%

\subsection{An introductory example}\label{sec:example}

There are two possible states of the world, $A$ and $B$. Two decision-makers are uncertain about which state is true and must bet on it. Each decision-maker makes her independent guess and receives her payoff independently of the other decision-maker’s choice. If they guess correctly, they receive a payoff of $1$; otherwise, they receive $0$. Decision-maker \emph{High} ($H$) initially considers state $A$ more likely to be true, with prior belief $p^H(A) = 0.7$. In contrast, decision-maker \emph{Low} ($L$) initially believes state $B$ is more likely, with $p^L(A) = 0.3$. A bi-dimensional signal $\sigma = (\sigma_1, \sigma_2)$ is then publicly disclosed, consisting of two mutually independent components. Each dimension of $\sigma$ can take one of two possible values, $\alpha$ and $\beta$, where $\alpha$ supports $A$ as the true state of the world, and $\beta$ favors $B$.

Decision-makers can choose whether to scrutinize $\sigma$ 
superficially or precisely. A superficial acquisition is free of
charge but reveals only the realization of the first dimension, $\sigma_1$. By contrast, a full and precise acquisition
entails the payment of a processing cost $c>0$, which reveals the
realizations of both dimensions of $\sigma$. The processing cost can
be interpreted as additional time, effort, or simply a monetary amount
required to acquire $\sigma$ in its entirety. I make the
following assumptions regarding the informative content of the signals:
\begin{displaymath}
P(\sigma_1 = \alpha \mid  A) = P(\sigma_1 = \beta \mid  B)=\theta_1 = \nicefrac{3}{5},
\end{displaymath}
\begin{displaymath}
P(\sigma_2 = \alpha \mid  A) = P(\sigma_2 = \beta \mid  B)=\theta_2 = \nicefrac{4}{5}.
\end{displaymath}

Suppose that the processing cost is $c=\frac{1}{10}$ and the signal is
$\sigma=(\alpha , \beta)$. Both decision-makers acquire the first, free
component $\sigma_1=\alpha$, resulting in Bayesian posteriors of $p^H_{\alpha}(A)\approx0.78$
for $H$ and $p^L_{\alpha}(A)\approx0.39$ for $L$. Compared to their
prior beliefs, both agents have increased their support for state $A$. Furthermore, we observe \emph{convergence}, as
$p^H_{\alpha} - p^L_{\alpha} < p^H - p^L$. The decision-makers have different willingness to pay for observing $\sigma_2$. Specifically, $H$ is willing to pay at most $c^H\approx0.02$, while
$L$ is willing to pay at most $c^L\approx0.19$. Given that $c=\frac{1}{10}$, only $L$ opts for a precise acquisition of $\sigma$. After observing $\sigma_2=\beta$,
her posterior is $p^L_{\alpha \beta}\approx0.14$. 

The decision-makers' posterior beliefs now lie further apart compared to their priors, resulting in belief
\emph{divergence}. Moreover, while $H$ has increased her support for $A$,
$L$ has decreased it. This pattern is reminiscent of what is known as \emph{belief polarization}, as the decision-makers' beliefs diverge and move further apart even though they were endowed with the same signal. Despite the signal overall supports state $B$ because its second component is more informative than the first, decision-maker $H$ nonetheless increases her confidence in state $A$. Compared to the signal's informative content, she updates her beliefs in a way that confirms her previously held hypothesis about the state. This pattern is resembles what is known as \emph{confirmation bias}. 

Suppose now that the signal gives $\sigma_1=\beta$. In this case, the decision-makers would switch their willingness to pay for $\sigma_2$, with $c^H\approx0.19$ and $c^L\approx0.02$. Only decision-maker $H$ chooses to incur the processing cost to observe $\sigma_2$. This example highlights how decision-makers are more willing to scrutinize with greater accuracy information that contradicts their prior beliefs. This behavior recalls other biases in information selection and acquisition.\footnote{From \citet[p.~2099]{lord}: ``The biased assimilation processes underlying this
effect may include a propensity to [...] accept confirming evidence
at face value while scrutinizing disconfirming
evidence hypercritically.''} Finally, suppose instead that $\sigma=(\alpha,\alpha)$. We have seen before that, after observing $\sigma_1=\alpha$, decision-maker $H$ decides not to incur the processing cost to observe $\sigma_2$, and thus her posterior is $p^H_{\alpha}(A)\approx0.78$. However, if she instead observes $\sigma_2=\alpha$, her posterior would be $p^H_{\alpha\alpha}(A)\approx0.93$. This belief pattern, compared to the signal's whole informational content, is reminiscent of \emph{under-reaction} to information, as $0.78<0.93$.

\section{The model}\label{sec:model}

\subsection{Set-up}\label{sec:setup}

%{Set-up}
There is a state of the world represented as a random variable, $\tilde{\omega}$, with realization $\omega \in\Omega = \{ A,B \}$. A decision-maker (DM, she) must take an action $s\in\Omega$, which is interpreted as a literal guess of the realized state. The decision-maker believes that the state $\omega=A$ realizes with prior probability $p=Pr\left(\omega=A\right)\in[0,1]$. %With some abuse of language, I will sometimes say that $p$ is the \emph{type} of a decision-maker with prior $p$.

%{Information}
Before choosing an action $s\in\Omega$, the decision-maker is endowed with a two-dimensional signal, $\sigma$. The signal is a random variable $\tilde\sigma=\left(\tilde\sigma_1,\tilde\sigma_2 \right)$ with realization $\sigma=(\sigma_1 , \sigma_2)\in \Theta^2$, where $\Theta= \{ \alpha, \beta\}$. I indicate the conditional and unconditional probability distribution of $\sigma$ by $P$. Each component $\sigma_j$, $j\in\{1,2\}$, is itself an informative signal about the state, and $\sigma_1$ and $\sigma_2$ are mutually independent. The information structure is
\begin{displaymath}
\theta_1:= P\left(\tilde\sigma_1 = \alpha \mid  \tilde{\omega} =A \right) = P\left(\tilde\sigma_1 = \beta \mid  \tilde{\omega}= B\right) >\nicefrac{1}{2},
\end{displaymath}
\begin{displaymath}
\theta_2 := P\left(\tilde\sigma_2 = \alpha \mid  \tilde{\omega} = A \right) = P\left(\tilde\sigma_2 = \beta \mid  \tilde{\omega} =B \right) > \nicefrac{1}{2}.
\end{displaymath}
I will sometimes use $\alpha_j$ and $\beta_j$ to denote $\sigma_j=\alpha$ and $\sigma_j=\beta$, respectively.

The signal's first component, $\sigma_1$, is freely observable. By contrast, its second component, $\sigma_2$, can be scrutinized only by incurring a processing cost $c>0$. After receiving $\sigma$, but before choosing $s\in\Omega$, the decision-maker chooses whether to observe only the signal's first component for free, or to acquire also its second component at a cost. Formally, this choice is represented by action $a \in \Pi=\{
\rho, \neg\rho\}$, where $\rho$ indicates the decision of paying $c$ to
observe both $\sigma_1$ and $\sigma_2$, and $\neg\rho$ the decision of acquiring only $\sigma_1$. Action $a$ is selected after observing $\sigma_1$ but before choosing $s$, thus $a(p,c,\sigma_1):[0,1]\times \mathbb{R}^+ \times\Theta\to\Pi$. Conversely, $s(p,\sigma_1):[0,1]\times \Theta\to\Omega$ if $a=\neg\rho$, and $s(p,\sigma_1,\sigma_2):[0,1]\times \Theta^2\to\Omega$ otherwise.

%{Preferences}
The decision-maker obtains a utility of $v(s=\omega,\omega)=\overline{U}$ when she selects an action $s$ that matches with the state's realization $\omega$, and obtains $v(s\neq\omega,\omega)=\underline{U}$ otherwise. I denote $\Delta U :=
\overline{U} -\underline{U}$, and assume that $\Delta U > 0$. The decision-maker's utility when the state is $\omega\in\Omega$ and her actions are $a\in \Pi$ and $s\in\Omega$ is $u(a, s,\omega)$, where\footnote{Formally, $a(p,c,\sigma_1)$}
\[
 u(a, s,\omega) =
  \begin{cases}
   v(s,\omega)    & \text{if } a=\neg \rho, \\
   v(s,\omega) - c  & \text{otherwise}.
  \end{cases}
\]

The decision-maker updates information according to Bayes' rule. Her posterior beliefs after receiving $\sigma$ is denoted by $p_{\sigma_1}$ when $a=\neg \rho$, and by either $p_{\sigma_1\sigma_2}$ or $p_{\sigma}$ otherwise. The decision-maker selects actions $a\in \Pi$ and $s\in \Omega$ to maximize her expected utility. I indicate her optimal decisions given beliefs by $a^*\in \Pi$ and $s^*\in\Omega$. When choosing $a\in\Pi$, her beliefs are necessarily $p_{\sigma_1}$. Denote by $p(\sigma)$ the decision-maker's posterior beliefs at the time of choosing $s\in\Omega$, and conditional on optimal information acquisition choice, $a^*$. We have that $p(\sigma)=p_{\sigma_1}$ if $a^*=\neg\rho$, and $p(\sigma)=p_{\sigma}$ otherwise. Therefore,  
\[
a^* := \argmax_{a \in \Pi} \mathbb{E}_{p_{\sigma_1}}\left[u(a, s^*,\omega)\mid  \sigma\right], \; \text{ and } \; s^* :=
\argmax_{s \in \Omega} \mathbb{E}_{p(\sigma)}\left[u(a^*, s,\omega)\mid \sigma\right].
\]

%Without loss of generality, we assume that the decision-maker resolves indifference by choosing action $A$ and opting not to acquire additional information. The decision-maker optimally selects $s^* = A$ when $p(\sigma) \geq \frac{1}{2}$ and $s^* = B$ otherwise. She decides to incur the information processing cost and observe the second component of the signal only if doing so yields a strictly higher expected utility than not acquiring additional information.

Part of the analysis will consider two decision-makers, say $i$ and $j$, that differ in their prior beliefs only. In those cases, I indicate their prior beliefs with $p^i$ and $p^j$, and their posterior beliefs with $p^k_{\sigma_1}$, $p^k_{\sigma_1 \sigma_2}$, and $p^k(\sigma)$ for $k\in\{i,j\}$. Similarly, I will use $a_k$, $a^*_k$, $s_k$ and $s^*_k$ for their actions, and will always imply that $i\neq j$. From the perspective of decision-maker~$k$, with prior $p^k$, the conditional distribution of the signal's second component is $P^k(\sigma_2 \mid\sigma_1)$, and the unconditional distributions are $P^k(\sigma_j)$.

%{Timing}
The model's timeline, as depicted by Figure~\ref{fig:timing}, is as follows: (i) The state, $\omega\in\Omega$, and the signal, $\sigma\in\Theta^2$, are realized but not observed; (ii) The decision-maker observes $\sigma_1$ for free, and then chooses $a\in \Pi$; (iii) If $a=\rho$, then the decision-maker observes $\sigma_2$ and incurs a cost $c$; (iv) The decision-maker selects $s\in\Omega$; (v) Finally, her payoff realizes.

\begin{figure}
    \centering
    \includegraphics[width=0.95\linewidth]{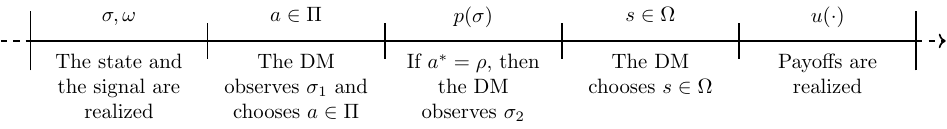}
    \caption{Timing structure}
    \label{fig:timing}
\end{figure}

\subsection{Discussion of the model}

In this model, the decision-maker faces a problem of costly information acquisition. The available signal consists of two components: the first is free, while the second is costly to access. The analysis initially focuses on the decision-makers' incentives for acquiring information. Then, it shifts to studying the relative beliefs of decision-makers who have different prior beliefs but are otherwise identical. Decision-makers always scrutinize the first signal because it is free. Whether they choose to acquire the second, costly component depends crucially---but not solely---on their prior beliefs.

Some findings could be derived from a simpler model, where there is a single, one-dimensional signal that is costly to acquire. However, such a simplified framework would not capture the variety of results and belief patterns that the current two-dimensional model provides. A model with more than two dimensions would add additional complexities without providing more insights, as a rich variety of belief patterns is addressed by the current two-dimensional setting. Moreover, the signals are assumed to be symmetric. While asymmetric signals may offer additional benefits to decision-makers \citep{calvert1985}, removing symmetry would complicate the analysis without yielding significant insights.

In the analysis, decision-makers differ only in their prior beliefs, while remaining identical in all other respects. This assumption serves three purposes. First, it mirrors the typical setup of controlled laboratory experiments, where subjects face the same payment scheme. Second, it isolates the role of prior beliefs in shaping incentives for information acquisition and the resulting belief patterns. Finally, it simplifies the analysis without significant loss of generality. Decision-makers with different utility functions could, of course, make different decisions even with the same prior beliefs, but this possibility is excluded in the current setting.

%%%%%%%%%%%%%%%%%%%%%%%%%%%%%%%%%%%%%%%%%%%%%%%%%

\section{Incentives for information processing}\label{sec:incentives}

This section examines the incentives for information acquisition faced by a decision-maker. The first part (Section~\ref{sec:differentcases}) categorizes a decision-maker's problem into eight qualitatively distinct cases based on her prior beliefs. This categorization is instrumental for the subsequent analysis and establishes a link between the decision-maker's prior beliefs and their willingness to pay for additional information. The section illustrates how to calculate the willingness to pay for decision-makers belonging to the first case, while the remainder of the analysis is presented in the appendix. The second part (Section~\ref{sec:thresholdcostfunction}) investigates the function that describes such a willingness to pay, conditional on the decision-maker's prior beliefs, signal received, and information structure. This analysis clarifies the relationship between prior beliefs and the value of additional information.

\subsection{Prior categorization}\label{sec:differentcases}

A decision-maker's incentive to scrutinize $\tilde\sigma_2$ depends on her interim beliefs ($p_{\sigma_1}$), the payoff structure ($\Delta U$), the processing cost ($c$), and the information structure of the signal's second component ($\theta_2$). Given the informative content of each dimension of \(\tilde\sigma\), every prior $p$ induces posteriors with implications on the DM's optimal behavior that are crucial in determining her willingness to pay for additional information.

Consider, for example, a decision-maker with prior belief $p$, such that, after observing $\sigma_1 = \alpha$, her posterior belief is greater than $\nicefrac{1}{2}$ independently of whether she observes $\tilde\sigma_2$ and its possible realizations. As a result, she will choose action $A$ regardless of $\sigma_2$. I will refer to this situation as ``case~1,'' where $p_{\alpha}$, $p_{\alpha \alpha}$, and $p_{\alpha \beta}$ are all greater than or equal to $\nicefrac{1}{2}$. Although the costly signal $\tilde\sigma_2$ is informative about the true state of the world $\tilde\omega$, the DM's choice is independent of its realization. In this case, the DM cannot benefit from acquiring the signal's second component, and her willingness to pay for observing $\sigma_2$ is zero. 

Intuitively, decision-makers whose prior falls into \emph{case~1} are relatively extreme in the sense that they place a very high (prior) probability on the state being $\omega=A$. Decision-makers with more moderate prior beliefs may have different incentives for information acquisition. To aid the analysis that follows, it is useful to partition the decision-maker's belief space in a way that links each partition to a qualitatively different optimal behavior. Recall that the decision-maker optimally selects action $s=A$ if her posterior beliefs exceed $\nicefrac{1}{2}$, and selects action $s=B$ otherwise. Table~\ref{tab:cases:posterior} groups all posterior beliefs that lead to qualitatively different choices over $s\in\Omega$ conditional on all possible observed signals' realizations. Each group of posteriors induces a convex set of prior beliefs, and constitutes a different \emph{case}. This procedure partitions the belief space into eight cases. 

Figure~\ref{fig:partitions} illustrates,\ for a given information structure, an example of partition  induced by the categorization as in Table~\ref{tab:cases:posterior}. The next section illustrates the connection between cases and decision-makers' willingness to pay for observing the signal's second component.

\begin{figure}
    \centering
    \includegraphics[width=0.75\linewidth]{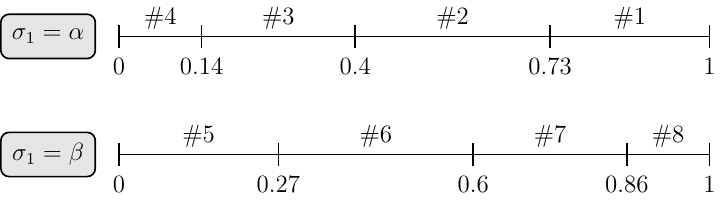}
    \caption{Partition of the conditional posterior belief spaces induced by an information structure where $\theta_1 = \nicefrac{3}{5}$ and $\theta_2 = \nicefrac{4}{5}$, categorized according to Table~\ref{tab:cases:posterior}. The case numbers are labeled at the top of each corresponding partition.}
    \label{fig:partitions}
\end{figure}

%%%%%%%%%%%%%%%%%%%%%%%%%%%%

%%%%%%%%%%%%%%%%%%%%%%%%%%% 

\begin{table}[htbp]
\begin{center}
\begin{tabular}{llll}
\toprule
\multicolumn{4}{c}{\textbf{Posterior beliefs and optimal behavior}}\\

\multicolumn{1}{c}{Case $\#$}& \multicolumn{1}{c}{$\tilde\sigma_1$} &
\multicolumn{1}{c}{Posteriors'properties} & \multicolumn{1}{c}{Prior's
interval} \\
\midrule
\multicolumn{1}{c}{1} & $\alpha_1$  &$p_{\alpha} \geq \frac{1}{2}$, $p_{\alpha \alpha} \geq
  \frac{1}{2}$, $p_{\alpha \beta} \geq \frac{1}{2}$ & $ \left[\frac{\theta_2 (1- \theta_1)}{\theta_1 + \theta_2 - 2 \theta_1
  \theta_2},1\right]$ \\

\multicolumn{1}{c}{2} & $\alpha_1$  &$p_{\alpha} \geq \frac{1}{2}$, $p_{\alpha \alpha} \geq
  \frac{1}{2}$, $p_{\alpha \beta} \leq \frac{1}{2}$ & $ \left[  1-\theta_1 , \frac{\theta_2 (1- \theta_1)}{\theta_1 + \theta_2 - 2 \theta_1
  \theta_2} \right]$  \\

\multicolumn{1}{c}{3} & $\alpha_1$  &$p_{\alpha} \leq \frac{1}{2}$, $p_{\alpha \alpha} \geq
  \frac{1}{2}$, $p_{\alpha \beta} \leq \frac{1}{2}$ & $ \left[  1 - \frac{\theta_1 \theta_2 }{1- \theta_1 - \theta_2 + 2 \theta_1
  \theta_2} , 1- \theta_1 \right]$ \\

\multicolumn{1}{c}{4} & $\alpha_1$  &$p_{\alpha} \leq \frac{1}{2}$, $p_{\alpha \alpha} \leq
  \frac{1}{2}$, $p_{\alpha \beta} \leq \frac{1}{2}$&  $ \left[ 0, 1 - \frac{\theta_1 \theta_2 }{1- \theta_1 - \theta_2 + 2 \theta_1
  \theta_2} \right]$ \\

\multicolumn{1}{c}{5} & $\beta_1$  &$p_{\beta} \leq \frac{1}{2}$, $p_{\beta \alpha} \leq
  \frac{1}{2}$, $p_{\beta \beta} \leq \frac{1}{2}$&  $ \left[0, \frac{\theta_1 (1- \theta_2 )}{ \theta_1 + \theta_2 - 2 \theta_1
  \theta_2}\right]$ \\

\multicolumn{1}{c}{6} & $\beta_1$  &$p_{\beta} \leq \frac{1}{2}$, $p_{\beta \alpha} \geq
  \frac{1}{2}$, $p_{\beta \beta} \leq \frac{1}{2}$&  $ \left[  \frac{\theta_1 (1- \theta_2 )}{ \theta_1 + \theta_2 - 2 \theta_1
  \theta_2} , \theta_1 \right]$ \\

\multicolumn{1}{c}{7}& $\beta_1$ & $p_{\beta} \geq \frac{1}{2}$, $p_{\beta \alpha} \geq
  \frac{1}{2}$, $p_{\beta \beta} \leq \frac{1}{2}$ & $ \left[ \theta_1 , \frac{\theta_1 \theta_2 }{ 1 - \theta_1 - \theta_2 + 2 \theta_1
  \theta_2} \right]$ \\

\multicolumn{1}{c}{8}& $\beta_1$  & $p_{\beta} \geq \frac{1}{2}$, $p_{\beta \alpha} \geq
  \frac{1}{2}$, $p_{\beta \beta} \geq \frac{1}{2}$&  $ \left[ \frac{\theta_1 \theta_2 }{ 1 - \theta_1 - \theta_2 + 2 \theta_1
  \theta_2} , 1 \right] $ \\

\bottomrule
\end{tabular}
\caption{Categorization of prior beliefs in eight cases yielding different optimal behavior conditional on observed information.}
\label{tab:cases:posterior}
\end{center}
\end{table}

%%%%%%%%%%%%%%%%%%%%%%%%%%%%%%%% 

%%%%%%%%%%%%%%%%%%%%%%%%%%% 

\begin{table}[htbp]
\begin{center}
\begin{tabular}{llll}
\toprule
\multicolumn{4}{c}{\textbf{Willingness to pay for additional information}}\\

\multicolumn{1}{c}{Case $\#$}& \multicolumn{1}{c}{$\tilde\sigma_1$}
& \multicolumn{1}{c}{Prior's interval} &\multicolumn{1}{c}{WTP $c_{\sigma_1}(p)$} \\
\midrule
\multicolumn{1}{c}{1} & $\alpha_1$  & $ \left[\frac{\theta_2 (1- \theta_1)}{\theta_1 + \theta_2 - 2 \theta_1
  \theta_2},1 \right]$ & \multicolumn{1}{c}{$0$} \\

\multicolumn{1}{c}{2} & $\alpha_1$  & $ \left[  1-\theta_1 , \frac{\theta_2 (1- \theta_1)}{\theta_1 + \theta_2 - 2 \theta_1
  \theta_2} \right]$&$\Delta U \cdot \left[ \frac{\theta_2 (1-p) - \theta_1 p -
    \theta_1 \theta_2 (1-2p)}{\theta_1 p + (1-\theta_1)(1-p)} \right]$
  \\

\multicolumn{1}{c}{3} & $\alpha_1$  &$ \left[  1 - \frac{\theta_1 \theta_2 }{1- \theta_1 - \theta_2 + 2 \theta_1
  \theta_2} , 1- \theta_1 \right]$ &$\Delta U \cdot \left[
  \frac{\theta_1 \theta_2 p - (1-\theta_1)(1-\theta_2)(1-p)}{\theta_1
    p + (1-\theta_1)(1-p)} \right] $  \\

\multicolumn{1}{c}{4} & $\alpha_1$  & $ \left[ 0, 1 - \frac{\theta_1 \theta_2 }{1- \theta_1 - \theta_2 + 2 \theta_1
  \theta_2} \right]$ &
  \multicolumn{1}{c}{$0$}   \\

\multicolumn{1}{c}{5} & $\beta_1$ &  $ \left[0, \frac{\theta_1 (1- \theta_2 )}{ \theta_1 + \theta_2 - 2 \theta_1
  \theta_2}\right]$ &\multicolumn{1}{c}{$0$} \\

\multicolumn{1}{c}{6} & $\beta_1$ &  $ \left[  \frac{\theta_1 (1- \theta_2 )}{ \theta_1 + \theta_2 - 2 \theta_1
  \theta_2} , \theta_1 \right]$ & $\Delta U \cdot \left[ \frac{(1-\theta_1)\theta_2p -
   \theta_1(1-\theta_2)(1-p)}{(1-\theta_1)p + \theta_1(1-p)} \right]$ \\

\multicolumn{1}{c}{7}& $\beta_1$ & $ \left[ \theta_1 , \frac{\theta_1 \theta_2 }{ 1 - \theta_1 - \theta_2 + 2 \theta_1
  \theta_2} \right]$ & $\Delta U \cdot \left[ \frac{\theta_1 \theta_2(1-p) -
   (1- \theta_1)(1-\theta_2)p}{(1-\theta_1)p + \theta_1(1-p)} \right]$ \\

\multicolumn{1}{c}{8}& $\beta_1$  &  $ \left[ \frac{\theta_1 \theta_2 }{ 1 - \theta_1 - \theta_2 + 2 \theta_1
  \theta_2} , 1 \right] $&\multicolumn{1}{c}{$0$}  \\

\bottomrule
\end{tabular}
\caption{The DM's willingness to pay for observing $\sigma_2$ given prior beliefs $p$ and signal realization $\sigma_1$.}
\label{tab:cases:cost}
\end{center}
\end{table}

%%%%%%%%%%%%%%%%%%%%%%%%%%%%%%%% 

%Appendix \ref{sec:cases}

\subsection{Willingness to pay for additional information}\label{sec:thresholdcostfunction}

The previous section categorized decision-makers by cases. Each case is characterized by the the pair $(p,\sigma_1)$, which induces the interim posterior $p_{\sigma_1}$. This section describes and analyzes the \emph{cost function} characterizing the decision-makers' willingness to pay for observing the signal's second component, $\sigma_2$.  As we shall see, the shape and the properties of this function vary across cases. The proofs of Propositions~\ref{prop:cost_function} and \ref{prop:cost_properties}, which are outlined in this section, are in Appendix~\ref{sec:cases} and \ref{sec:costfunction}, respectively.

The decision-maker's interim posterior beliefs, $p_{\sigma_1}$, determine her willingness to pay for observing $\sigma_2$. 
Given a fixed payoff and information structure, I hereafter refer to such a willingness to pay by $c_{\sigma_1}(p)$, and say that a processing cost is \emph{admissible} when it is smaller than $c_{\sigma_1}(p)$. Table~\ref{tab:cases:cost} associates each case to the decision-maker's willingness to pay for observing $\sigma_2$, that is, $c_{\sigma_1}(p)$.

\begin{definition}\label{def:costfunction}
    The function $c_{\sigma_1}(p):[0,1]\times \Theta_1 \mapsto \mathbb{R}^+_0$ represents the highest processing cost that a decision-maker with prior beliefs $p$ is willing to incur after observing $\sigma_1$. A processing cost, $c$, is ``admissible'' if $c\leq c_{\sigma_1}(p)$. The set $\mathcal{C}_{\sigma_1}(p):=[0, c_{\sigma_1}(p)]$ contains all admissible processing costs given prior beliefs $p$ and signal realization $\sigma_1$.
\end{definition}

%%%%%%%%%%%%%%%%%%%

 The function $c_{\sigma_1}(p)$, as in Definition~\ref{def:costfunction}, is piecewise with discontinuities between different cases, and is characterized by the following proposition. I denote by $c^n_{\sigma_1}(p)$ the cost function for a decision-maker that
belongs to case $n$.\footnote{The notation for the cost function $c^n_{\sigma_1}(p)$ is redundant, as each case number is already associated with $\sigma_1$. However, $\sigma_1$ is retained for clarity and conformity with Definition~\ref{def:costfunction}.}

\begin{proposition}\label{prop:cost_function}
    Given an information structure $(\theta_1,\theta_2)$, the decision-maker's willingness to pay for observing $\sigma_2$ is determined by the cost function $c_{\sigma_1}(p)$, where 
    \[ c_{\alpha}(p) = \left\{
 \begin{array}{l l}

 c^4_{\alpha}(p)= 0 & \quad \text{if $p \in  \left[ 0, 1 - \frac{\theta_1 \theta_2 }{1- \theta_1 - \theta_2 + 2 \theta_1
  \theta_2} \right]$},\\

 c^3_{\alpha}(p)=  \Delta U \cdot \left[
  \frac{\theta_1 \theta_2 p - (1-\theta_1)(1-\theta_2)(1-p)}{\theta_1 
    p + (1-\theta_1)(1-p)} \right] & \quad \text{if $p \in \left[  1 - \frac{\theta_1 \theta_2 }{1-
      \theta_1 - \theta_2 + 2 \theta_1 
  \theta_2} , 1-\theta_1 \right]$}, \\ 

  c^2_{\alpha}(p)=  \Delta U \cdot \left[ \frac{\theta_2 (1-p) - \theta_1 p -  
    \theta_1 \theta_2 (1-2p)}{\theta_1 p + (1-\theta_1)(1-p)} \right]
& \quad \text{if $p \in \left[  1-\theta_1 , \frac{\theta_2 (1- \theta_1)}{\theta_1 + \theta_2 - 2 \theta_1
  \theta_2} \right]$},\\

c^1_{\alpha}(p)= 0 & \quad \text{if $p \in \left[ \frac{ \theta_2 (1-
           \theta_1)}{\theta_1 + \theta_2 - 2 \theta_1 
  \theta_2} ,1 \right]$},
\end{array}
\right.\]

\[ c_{\beta}(p) = \left\{
 \begin{array}{l l}

 c^5_{\beta}(p) = 0 & \quad \text{if $p \in  \left[ 0, \frac{\theta_1 (1- \theta_2 )}{ \theta_1 + \theta_2 - 2 \theta_1
  \theta_2} \right] $},\\

  c^6_{\beta}(p) = \Delta U \cdot \left[ \frac{(1-\theta_1)\theta_2p -
   \theta_1(1-\theta_2)(1-p)}{(1-\theta_1)p + \theta_1(1-p)} \right] &
\quad \text{if $p \in \left[ \frac{\theta_1 (1- \theta_2 )}{ \theta_1 + \theta_2 - 2 \theta_1
  \theta_2} , \theta_1 \right]$}, \\ 

   c^7_{\beta}(p) = \Delta U \cdot \left[ \frac{\theta_1 \theta_2(1-p) -
   (1- \theta_1)(1-\theta_2)p}{(1-\theta_1)p + \theta_1(1-p)} \right]
& \quad \text{if $p \in \left[ \theta_1 , \frac{\theta_1 \theta_2 }{ 1 - \theta_1 - \theta_2 + 2 \theta_1
  \theta_2} \right]$},\\

c^8_{\beta}(p) = 0 & \quad \text{if $p \in \left[ \frac{\theta_1 \theta_2 }{ 1 - \theta_1 - \theta_2 + 2 \theta_1
  \theta_2} ,1 \right]$}.
\end{array}
\right.\]
\end{proposition}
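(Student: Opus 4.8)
The plan is to compute $c_{\sigma_1}(p)$ directly as the \emph{value of observing} $\sigma_2$: the difference between the expected interim payoff from acting optimally with access to $\sigma_2$ and the expected interim payoff from acting optimally on $\sigma_1$ alone, both evaluated under the interim belief $p_{\sigma_1}$. Since the decision problem faced after $\sigma_1=\alpha$ and after $\sigma_1=\beta$ is structurally identical---only the relevant interim belief changes from $p_\alpha$ to $p_\beta$---I would treat $\sigma_1=\alpha$ (cases 1--4) in full and recover $\sigma_1=\beta$ (cases 5--8) from the relabeling symmetry $(A,\alpha,p)\mapsto(B,\beta,1-p)$, which sends case $n$ to case $n+4$ and $p\mapsto 1-p$ in the formulas.

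First I would record the Bayesian updates. By Bayes' rule $p_\alpha = \theta_1 p/[\theta_1 p + (1-\theta_1)(1-p)]$, and $p_{\alpha\alpha},p_{\alpha\beta}$ follow from updating $p_\alpha$ with $\theta_2$. The key device---essentially the tower property---is the set of factorization identities $P^\alpha(\alpha_2)\,p_{\alpha\alpha}=\theta_2 p_\alpha$, $P^\alpha(\alpha_2)\,(1-p_{\alpha\alpha})=(1-\theta_2)(1-p_\alpha)$, $P^\alpha(\beta_2)\,(1-p_{\alpha\beta})=\theta_2(1-p_\alpha)$, and $P^\alpha(\beta_2)\,p_{\alpha\beta}=(1-\theta_2)p_\alpha$, where $P^\alpha(\cdot)$ is the interim predictive distribution of $\sigma_2$. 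These identities make every expected-utility expression telescope.

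Next I would read the optimal guesses off Table~\ref{tab:cases:posterior} using the rule ``choose $A$ iff the posterior is $\geq\nicefrac12$,'' and evaluate the WTP case by case. In the decision-irrelevant cases (1 and 4) the guess is the same under both realizations of $\sigma_2$; the factorization then shows the expected payoff with $\sigma_2$ equals the expected payoff without it, so the value of information, and hence $c_\alpha$, is $0$. In the decision-relevant cases (2 and 3) observing $\sigma_2$ flips the guess on exactly one realization; here the factorization collapses the full-information payoff to $\overline{U}\theta_2+\underline{U}(1-\theta_2)$, and subtracting the no-$\sigma_2$ payoff leaves $\Delta U(\theta_2-p_\alpha)$ in case 2 and $\Delta U\big(p_\alpha-(1-\theta_2)\big)$ in case 3. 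Substituting the Bayes expression for $p_\alpha$ and simplifying the numerators yields exactly $c^2_\alpha(p)$ and $c^3_\alpha(p)$.

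Finally I would pin down the interval endpoints by setting each of $p_\alpha$, $p_{\alpha\alpha}$, $p_{\alpha\beta}$ equal to $\nicefrac12$ and solving for $p$; for instance $p_{\alpha\beta}=\nicefrac12$ gives the case-1/case-2 boundary $\theta_2(1-\theta_1)/(\theta_1+\theta_2-2\theta_1\theta_2)$, matching Table~\ref{tab:cases:posterior}, and one checks these thresholds partition $[0,1]$. I expect the main obstacle to be bookkeeping rather than ideas: for each case one must correctly identify \emph{which} realization of $\sigma_2$ changes the decision and carry the algebra through without sign errors. Establishing once, at the outset, the monotone ordering $p_{\alpha\beta}\leq p_\alpha\leq p_{\alpha\alpha}$ (which holds because $\theta_2>\nicefrac12$) removes the ambiguity and lets each case be classified mechanically by where $\nicefrac12$ falls among the three posteriors; the symmetry argument then disposes of cases 5--8 with no further computation.
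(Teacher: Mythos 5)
Your proposal is correct and follows essentially the same route as the paper's proof: a case-by-case comparison, built on the Table~\ref{tab:cases:posterior} partition, of the interim expected payoff from acting on $p_{\sigma_1}$ alone versus paying to observe $\sigma_2$, with the law of total probability doing the algebraic collapse and the interval endpoints obtained by setting each posterior equal to $\nicefrac{1}{2}$. The only differences are economies of presentation: your factorization identities reduce the full-information payoff in the decision-relevant cases to $\theta_2\overline{U}+(1-\theta_2)\underline{U}$, so the WTP appears as $\Delta U(\theta_2-p_\alpha)$ and $\Delta U\bigl(p_\alpha-(1-\theta_2)\bigr)$ before substituting Bayes' rule (where the paper writes the equivalent intermediate forms $\Delta U\, P(\beta_2\mid\alpha_1)(1-2p_{\alpha\beta})$ and $\Delta U\, P(\alpha_2\mid\alpha_1)(2p_{\alpha\alpha}-1)$), and you dispatch cases 5--8 via the relabeling symmetry $c_\beta(p)=c_\alpha(1-p)$, which the paper instead handles by repeating the computation explicitly and records the symmetry only afterwards, in Proposition~\ref{prop:cost_properties}.
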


\begin{figure}[htbp]
        \centering
     \includegraphics[width=\textwidth]{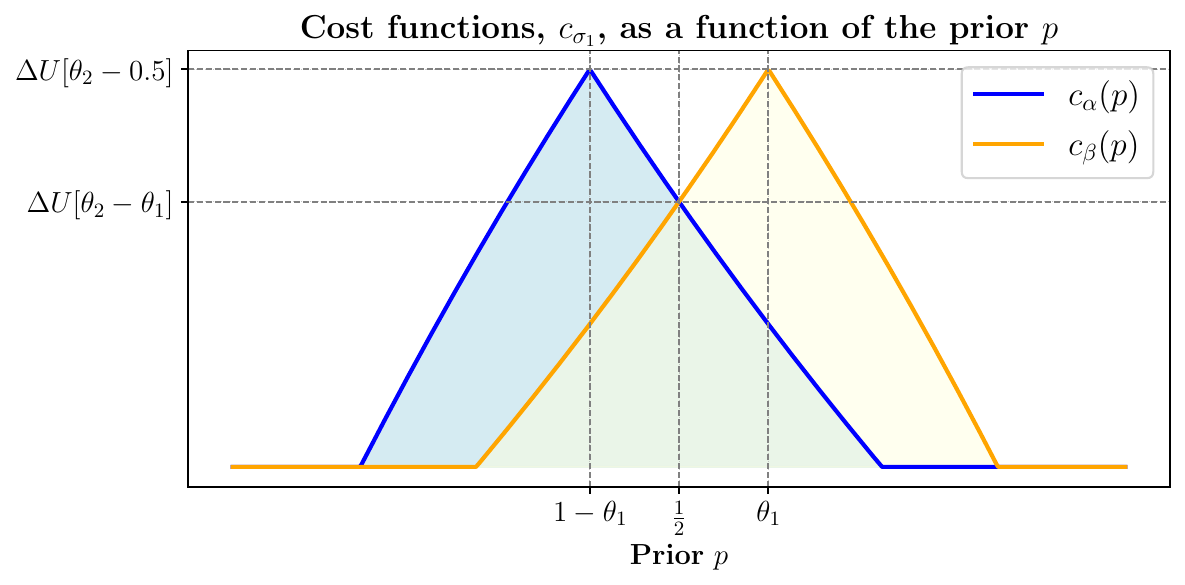}   
        \caption{The cost function for different prior beliefs: the blue line represents $c_{\alpha}(p)$, and the yellow line represents $c_{\beta}(p)$. The shaded areas indicate the respective sets of admissible costs, i.e., $\mathcal{C}_{\alpha}(p)$ and $\mathcal{C}_{\beta}(p)$.}\label{fig:costfunction}
\end{figure}

\begin{proposition}\label{prop:cost_properties}
    The cost function $c_{\sigma_1}(p)$,
    \begin{itemize}[noitemsep]
        \item is continuous in $p$, and satisfies $c_{\alpha}(p)=c_{\beta}(1-p)$ for all $p\in[0,1]$;
        \item $c^3_{\alpha}(p)$ is strictly increasing and concave, while $c^2_{\alpha}(p)$ is strictly decreasing and convex;
        \item $c^6_{\beta}(p)$ is strictly increasing and convex, while $c^7_{\beta}(p)$ is strictly decreasing and concave;
        \item $c_{\alpha}(p)$ has a global maximum at $p=1-\theta_1$, while $c_{\beta}(p)$ has a global maximum at $p=\theta_1$, where $c_{\alpha}(1-\theta_1)=c_{\beta}(\theta_1)=\Delta U \left[\theta_2 - \frac{1}{2} \right]$.
    \end{itemize}
\end{proposition}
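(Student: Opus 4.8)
The plan is to reduce the entire statement to the two nonzero pieces of $c_\alpha$ by first establishing the symmetry relation, and then to exploit the linear-fractional (Möbius) structure of those pieces so that a single determinant computation settles both monotonicity and convexity at once. The symmetry $c_\alpha(p)=c_\beta(1-p)$ I would verify by direct substitution into the formulas of Proposition~\ref{prop:cost_function}: replacing $p$ by $1-p$ in $c^7_\beta$ reproduces the numerator and denominator of $c^3_\alpha$ verbatim, and $c^6_\beta(1-p)$ collapses to $c^2_\alpha(p)$ once one expands $(1-\theta_1)\theta_2(1-p)-\theta_1(1-\theta_2)p$ into $\theta_2(1-p)-\theta_1 p-\theta_1\theta_2(1-2p)$; the zero pieces map to zero pieces, and the interval endpoints are carried onto one another under $p\mapsto 1-p$ (for instance $[\,1-\tfrac{\theta_1\theta_2}{1-\theta_1-\theta_2+2\theta_1\theta_2},\,1-\theta_1\,]$ maps to the case-7 interval). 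This identity lets every claim about $c_\beta$ follow from the corresponding claim about $c_\alpha$, since pre-composition with the decreasing affine map $p\mapsto 1-p$ reverses monotonicity and preserves convexity.

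For continuity, note that each nonzero piece is a ratio of affine functions whose common denominator $\theta_1 p+(1-\theta_1)(1-p)=(1-\theta_1)+(2\theta_1-1)p$ is strictly positive on $[0,1]$ because $\theta_1>\nicefrac12$, so each piece is continuous on its subinterval and it remains only to match values at the internal break points. At the case-4/case-3 break the numerator of $c^3_\alpha$ vanishes—solving $\theta_1\theta_2 p=(1-\theta_1)(1-\theta_2)(1-p)$ returns exactly $p=1-\tfrac{\theta_1\theta_2}{1-\theta_1-\theta_2+2\theta_1\theta_2}$—so $c^3_\alpha$ meets the zero piece there, and symmetrically the numerator of $c^2_\alpha$ vanishes at the case-2/case-1 break. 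At $p=1-\theta_1$, where $p_\alpha=\nicefrac12$, I would evaluate both pieces: each numerator factors as $\theta_1(1-\theta_1)(2\theta_2-1)$ over the common denominator $2\theta_1(1-\theta_1)$, so both give the shared value $\Delta U(\theta_2-\tfrac12)$, establishing continuity at the last break.

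For monotonicity and convexity I would write each nonzero piece as $\Delta U\,\tfrac{ap+b}{cp+d}$ with $c=2\theta_1-1>0$ and $cp+d>0$, so that $f'(p)$ carries the constant sign of $ad-bc$ and $f''(p)$ the constant sign of $-c(ad-bc)$. A short computation gives $ad-bc=\theta_1(1-\theta_1)>0$ for $c^3_\alpha$, hence strictly increasing with $f''<0$ (concave), and $ad-bc=-\theta_1(1-\theta_1)<0$ for $c^2_\alpha$, hence strictly decreasing with $f''>0$ (convex). The claims for $c_\beta$ then follow immediately from the symmetry, since $c^6_\beta(p)=c^2_\alpha(1-p)$ is increasing and convex while $c^7_\beta(p)=c^3_\alpha(1-p)$ is decreasing and concave.

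Finally, for the global maximum: $c^3_\alpha$ increases up to $p=1-\theta_1$ and $c^2_\alpha$ decreases away from it, while cases~1 and~4 contribute $0$, so $c_\alpha$ attains its maximum at $p=1-\theta_1$ with the value $\Delta U(\theta_2-\tfrac12)>0$ computed above; symmetry then places the maximum of $c_\beta$ at $p=\theta_1$ with the same value. The only real obstacle is keeping the algebra of the determinant and of the break-point numerators organized, and recognizing the Möbius structure is precisely what keeps this under control, since it replaces explicit second-derivative computations by the sign of $-c(ad-bc)$ and lets the symmetry halve the work.
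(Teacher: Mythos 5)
Your proof is correct, and it differs from the paper's in two substantive ways that make it more economical. First, you establish the symmetry $c_{\alpha}(p)=c_{\beta}(1-p)$ at the outset and use it to deduce every claim about $c^{6}_{\beta}$ and $c^{7}_{\beta}$ from the corresponding claim about $c^{2}_{\alpha}$ and $c^{3}_{\alpha}$ (correctly noting that pre-composition with $p\mapsto 1-p$ flips monotonicity and preserves convexity); the paper instead proves the $\alpha$- and $\beta$-branches independently, with four separate first- and second-derivative computations, and only remarks the symmetry in its closing sentence. Second, for the curvature and monotonicity claims you exploit the linear-fractional structure: writing each nonzero piece as $\Delta U\,\frac{ap+b}{cp+d}$ with $c=2\theta_1-1>0$ and $cp+d>0$, the sign of $ad-bc$ settles $f'$ and the sign of $-c(ad-bc)$ settles $f''$, and your determinant values $\pm\theta_1(1-\theta_1)$ match the paper's explicit derivatives (e.g., the paper's numerator $2\theta_1(3\theta_1-2\theta_1^2-1)$ factors as $2\theta_1(2\theta_1-1)(1-\theta_1)$, which is exactly your $-2c(ad-bc)$ for $c^{2}_{\alpha}$). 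Your continuity argument — vanishing numerators at the outer break points and the common value $\Delta U\left[\theta_2-\frac{1}{2}\right]$ at $p=1-\theta_1$ — is the same check the paper performs via one-sided limits, and the global-maximum conclusion follows identically in both. The paper's brute-force route has the minor advantage of producing the explicit derivative formulas (which it reuses to discuss the kink at $p=1-\theta_1$ and non-differentiability), whereas your route halves the computation and makes the structural reason for the signs transparent.
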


Figure~\ref{fig:costfunction} illustrates the decision-maker's willingness to pay for observing the signal's second component, $\tilde\sigma_2$, conditional on their prior beliefs and on the realization of the signal's first component, $\sigma_1$. 

\subsection{Prior beliefs and information processing}

Which prior beliefs would induce decision-makers to incur the cost for processing additional information? The cost function characterization from the previous section allows us to answer this question. First, Proposition~\ref{prop:cost_properties}  shows that no decision-maker is willing to acquire $\sigma_2$ when the processing cost exceeds $\Delta U \left[\theta_2 - \frac{1}{2} \right]$. For more reasonable costs, $0<c<\Delta U \left[\theta_2 - \frac{1}{2} \right]$, Propositions~\ref{prop:cost_function} and \ref{prop:cost_properties}, along with a visual inspection of Figure~\ref{fig:costfunction}, suggest that only decision-makers with prior beliefs in convex and strict subsets of the belief space would pay the processing cost to observe $\tilde\sigma_2$. Moreover, these sets must depend on $\sigma_1$, as the decision to acquire $\tilde\sigma_2$ hinges on the interim posterior, $p_{\sigma_1}$.

\begin{definition}\label{def:hulsets}
   The set of prior beliefs for which a decision-maker, after observing $\sigma_1$, is willing to incur a given processing cost to observe $\sigma_2$ is defined as
\begin{equation*}
    \mathcal{H}_{\sigma_1}(c) \coloneqq \left\{ p \in [0,1] \mid  c_{\sigma_1}(p) > c \right\}.
\end{equation*} 
\end{definition}

%In addition, for sufficiently low processing costs such that $c < \overline{c} = \Delta U \cdot [ \theta_2 - \nicefrac{1}{2} ]$, I define the upper set $\mathcal{U}_{\sigma_1}(c) \coloneqq \left\{ p \in [0,1] \mid  p> \sup \mathcal{H}_{\sigma_1}(c) \right\}$ and the lower set  $\mathcal{L}_{\sigma_1}(c) \coloneqq \left\{ p \in [0,1] \mid  p> \inf \mathcal{H}_{\sigma_1}(c) \right\}$.
The following result confirms the intuition that only decision-makers with relatively moderate prior beliefs are willing to acquire additional information at a cost, while those with extreme beliefs---closer to zero or one---do not find it beneficial. However, this intuition holds only if the signal's second component is more informative than the first. Otherwise, even decision-makers with prior beliefs near $\frac{1}{2}$ may not benefit from observing the signal's second component. Furthermore, some ``extreme'' decision-makers may remain unwilling to scrutinize additional information, even as the processing cost approaches zero.

\begin{corollary}\label{cor:processing_priors}
Given a processing cost $c$, we have that 
\[ \mathcal{H}_{\sigma_1}(c)= \left\{
 \begin{array}{l l}

 \varnothing & \quad \text{if $c >\Delta U \left[ \theta_2 - \frac{1}{2}  \right]$},\\

 \left(\underline{\smash q}_{\sigma_1}(c), \overline{q}_{\sigma_1}(c)  \right) & \quad \text{if $c \in \left( 0 , \Delta U \left[ \theta_2 - \frac{1}{2}  \right] \right)$}, 
\end{array}
\right.\]
where $0 < \underline{\smash q}_{\alpha} (c) <\underline{\smash q}_{\beta} (c)$ and $\overline{q}_{\alpha}(c) < \overline{q}_{\beta}(c) < 1$. Moreover, $\underline{\smash q}_{\beta} (c)\leq \overline{q}_{\alpha}(c)$ if and only if $\theta_2\geq\theta_1$.
\end{corollary}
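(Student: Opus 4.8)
The plan is to treat the two regimes of the statement separately, leaning on the shape information recorded in Proposition~\ref{prop:cost_properties}. For the emptiness regime, that proposition gives $\max_p c_{\sigma_1}(p) = \Delta U[\theta_2 - \tfrac12]$; hence if $c > \Delta U[\theta_2 - \tfrac12]$ then $c_{\sigma_1}(p) \le \Delta U[\theta_2-\tfrac12] < c$ for every $p$, so no prior satisfies $c_{\sigma_1}(p) > c$ and $\mathcal{H}_{\sigma_1}(c) = \varnothing$. For $c \in (0, \Delta U[\theta_2-\tfrac12])$ I would first argue that $\mathcal{H}_{\sigma_1}(c)$ is a single open interval. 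Take $\sigma_1 = \alpha$: by Proposition~\ref{prop:cost_function}, $c_\alpha$ vanishes on case~4, is continuous, strictly increasing on case~3 (branch $c^3_\alpha$), attains its maximum at $p = 1-\theta_1$, is strictly decreasing on case~2 (branch $c^2_\alpha$), and vanishes again on case~1; that is, $c_\alpha$ is single-peaked. Strict monotonicity of each branch then makes $c_\alpha(p) = c$ have exactly one root $\underline{q}_\alpha(c)$ on the increasing part and one root $\overline{q}_\alpha(c)$ on the decreasing part, so the superlevel set is the open interval $(\underline{q}_\alpha(c), \overline{q}_\alpha(c))$. The argument for $\sigma_1 = \beta$ is identical, using that $c_\beta$ is single-peaked with peak at $\theta_1$.

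Next I would exploit the reflection symmetry $c_\alpha(p) = c_\beta(1-p)$ from Proposition~\ref{prop:cost_properties}. Substituting $p \mapsto 1-p$ in the definition of $\mathcal{H}_\beta(c)$ gives $\mathcal{H}_\beta(c) = \{\,1-p : p \in \mathcal{H}_\alpha(c)\,\}$, hence $\underline{q}_\beta(c) = 1 - \overline{q}_\alpha(c)$ and $\overline{q}_\beta(c) = 1 - \underline{q}_\alpha(c)$. This collapses the four order relations to two facts about the $\alpha$-interval alone: (i) $\underline{q}_\alpha(c) > 0$, which simultaneously yields $\overline{q}_\beta(c) = 1 - \underline{q}_\alpha(c) < 1$; and (ii) $\underline{q}_\alpha(c) + \overline{q}_\alpha(c) < 1$, which is precisely $\underline{q}_\alpha < 1 - \overline{q}_\alpha = \underline{q}_\beta$ together with $\overline{q}_\alpha < 1 - \underline{q}_\alpha = \overline{q}_\beta$. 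Fact (i) is immediate: the left endpoint of case~3, where $c^3_\alpha$ leaves zero, equals $L_\alpha = (1-\theta_1)(1-\theta_2)/[(1-\theta_1)(1-\theta_2)+\theta_1\theta_2] > 0$, and since $c>0$ the root $\underline{q}_\alpha$ lies strictly inside case~3, so $\underline{q}_\alpha > L_\alpha > 0$.

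The main obstacle is fact (ii), because $c_\alpha$ is not symmetric about its peak (increasing branch concave, decreasing branch convex), so the interval's midpoint lying left of $\tfrac12$ is not automatic. I would reduce it to the pointwise inequality $c_\alpha(p) > c_\alpha(1-p)$ for $p < \tfrac12$, equivalently $c_\alpha(p) > c_\beta(p)$ on $[0,\tfrac12)$. Granting this at $p = \underline{q}_\alpha < 1-\theta_1 < \tfrac12$ gives $c_\alpha(1-\underline{q}_\alpha) = c_\beta(\underline{q}_\alpha) < c_\alpha(\underline{q}_\alpha) = c = c_\alpha(\overline{q}_\alpha)$; since both $1-\underline{q}_\alpha$ and $\overline{q}_\alpha$ lie in $[1-\theta_1,1]$, where $c_\alpha$ is non-increasing, this forces $\overline{q}_\alpha < 1-\underline{q}_\alpha$, which is exactly (ii). The pointwise inequality itself I would verify by substituting the explicit branch formulas of Proposition~\ref{prop:cost_function} and checking the sign of the resulting rational expression on each relevant sub-interval; this is the one genuinely computational step, and it is where the asymmetry of the two branches must be controlled.

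Finally, for the equivalence I would again use $\underline{q}_\beta = 1 - \overline{q}_\alpha$, so that $\underline{q}_\beta(c) \le \overline{q}_\alpha(c) \iff \overline{q}_\alpha(c) \ge \tfrac12$. Because $\overline{q}_\alpha$ is the root of the strictly decreasing branch $c^2_\alpha$, the latter holds iff $c \le c^2_\alpha(\tfrac12)$, and a direct evaluation gives the clean identity $c^2_\alpha(\tfrac12) = \Delta U(\theta_2 - \theta_1)$. Thus the whole comparison is governed by the sign of $\theta_2 - \theta_1$: when $\theta_2 < \theta_1$ the threshold $c^2_\alpha(\tfrac12)$ is negative, so $\overline{q}_\alpha < \tfrac12$ and the inequality cannot hold for any admissible cost, whereas $\theta_2 \ge \theta_1$ is exactly the condition making the threshold nonnegative and hence permitting $\overline{q}_\alpha \ge \tfrac12$. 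Equivalently, it is the condition under which the right support endpoint $R_\alpha = \theta_2(1-\theta_1)/(\theta_1+\theta_2-2\theta_1\theta_2)$ satisfies $R_\alpha - \tfrac12 = (\theta_2-\theta_1)/[2(\theta_1+\theta_2-2\theta_1\theta_2)] \ge 0$, which is what delivers the stated equivalence.
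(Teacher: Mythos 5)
Your proposal is correct, but it takes a genuinely different route from the paper's. The paper's proof (Appendix~\ref{sec:inversecostfunctions}) proceeds by explicit inversion: it sets $c_{\sigma_1}(p)=c$ on each strictly monotone branch and solves for $p$, producing closed-form formulas for all four thresholds $\underline{\smash q}_{\alpha}(c)$, $\overline{q}_{\alpha}(c)$, $\underline{\smash q}_{\beta}(c)$, $\overline{q}_{\beta}(c)$; the orderings $\underline{\smash q}_{\alpha}<\underline{\smash q}_{\beta}$ and $\overline{q}_{\alpha}<\overline{q}_{\beta}$ then come from a non-crossing argument ($c^3_{\alpha}$ and $c^6_{\beta}$ never cross on their shared domain, nor do $c^2_{\alpha}$ and $c^7_{\beta}$, the only crossing point of $c_\alpha$ and $c_\beta$ being $p=\tfrac{1}{2}$), and the comparison of $\underline{\smash q}_{\beta}$ with $\overline{q}_{\alpha}$ is read off the formulas. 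You instead exploit the reflection identity $c_{\alpha}(p)=c_{\beta}(1-p)$ of Proposition~\ref{prop:cost_properties} --- which the paper states but never uses in this proof --- to obtain $\underline{\smash q}_{\beta}=1-\overline{q}_{\alpha}$ and $\overline{q}_{\beta}=1-\underline{\smash q}_{\alpha}$, collapsing all four order relations to two facts about the $\alpha$-interval and replacing the four inversions by a single pointwise dominance inequality. What the paper's route buys is the closed-form threshold expressions themselves, which the main text explicitly promises and which are reused later (e.g.\ in the proof of Corollary~\ref{cor:polarizing}); what your route buys is economy and a transparent explanation of why the two sets $\mathcal{H}_{\alpha}(c)$ and $\mathcal{H}_{\beta}(c)$ are mirror images of one another.

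Two small caveats. First, your intermediate claim ``$c_{\alpha}(p)>c_{\beta}(p)$ on $[0,\tfrac{1}{2})$'' is false near $p=0$, where both functions vanish (cases~4 and~5); but you only invoke it at $p=\underline{\smash q}_{\alpha}$, where $c_{\alpha}=c>0$ and $c_{\beta}$ equals either $0$ or $c^6_{\beta}(\underline{\smash q}_{\alpha})<c^3_{\alpha}(\underline{\smash q}_{\alpha})$ --- exactly the dominance the paper asserts and uses in the proof of Lemma~\ref{lemma:nonreciprocal} --- so the argument stands once restricted to that point. Second, on the final equivalence your derivation (like the paper's own) in fact shows that $\underline{\smash q}_{\beta}(c)\leq\overline{q}_{\alpha}(c)$ holds if and only if $c\leq\Delta U\left[\theta_2-\theta_1\right]$, a condition on $c$ and not merely on the sign of $\theta_2-\theta_1$; since $\Delta U\left[\theta_2-\theta_1\right]<\Delta U\left[\theta_2-\tfrac{1}{2}\right]$, the sign condition alone is not sufficient for every admissible cost. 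Your ``permitting'' phrasing is the accurate reading, and the paper's own proof likewise ends with the $c$-dependent condition rather than the bare statement $\theta_2\geq\theta_1$ appearing in the corollary.
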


The corollary above additionally shows that the sets $\mathcal{H}_{\sigma_1}(c)$ are convex, with $\mathcal{H}_{\alpha}(c)$ shifted relatively to the left with respect to $\mathcal{H}_{\beta}(c)$. The boundaries of $\mathcal{H}_{\sigma_1}(c)$ are derived by inverting the decision-makers' cost function. Explicit closed-form formulas for $\underline{\smash q}_{\sigma_1}(c)$ and $ \overline{q}_{\sigma_1}(c) $ are provided in Appendix \ref{sec:inversecostfunctions}.

%%%%%%%%%%%%%%%%%%%%%%%%%%%%%%%%%%%%%%%%%%%%%%%%%%%%%%%%%%

\section{Analysis of belief patterns}\label{sec:rationality}

This section examines the relative belief patterns exhibited by either one or two decision-makers. Section~\ref{sec:intermediate} introduces additional definitions and partial results to support the analysis that follows. Section~\ref{sec:polarization} focuses the relative belief patterns of two decision-makers, $i$ and $j$, with prior beliefs $p^i$ and $p^j$.  In that section, the only potential difference between the decision-makers lies in their beliefs, while both share the same signal realization. By contrast, Section~\ref{sec:behaviors} analyzes the belief patterns of a single decision-maker.

\subsection{Additional definitions and intermediate results}\label{sec:intermediate}

This section introduces definitions and intermediate results that are essential for the upcoming analysis. Four main concepts are defined: \emph{diverging attitudes}, \emph{inverse updating}, \emph{extreme}, and \emph{reciprocal} beliefs. Two decision-makers exhibit diverging attitudes (DA) when the distance between their posteriors exceeds the distance between their priors. They exhibit inverse updating (IU) when one decision-maker increases her belief in one state of the world, while the other decreases hers.\footnote{In~\cite{chang}, this concept is referred to as ``contrary updating.'' As they point out, inverse updating can lead to both divergence and convergence of beliefs, but not to ``parallel updating.''}

More formally, consider the Euclidean distance between the decision-makers' prior beliefs and that between their posterior beliefs, that is, respectively, $| p^i-p^j |$, and $| p^i(\sigma)-p^j(\sigma) |$. The \emph{divergence function}, defined as 
\begin{equation}\label{eq:div}
D(\sigma) \coloneqq | p^i-p^j | - | p^i(\sigma)-p^j(\sigma) |,
\end{equation}
indicates whether the decision-makers' posterior beliefs are closer or farther from each other compared to their priors when the signal realization is $\sigma$. A positive (resp. negative) divergence function indi that their beliefs became closer (resp. farther) after receiving $\sigma$. The \emph{inversion function}, defined as
\begin{equation}\label{eq:inv}
I(\sigma) \coloneqq [p^i - p^i(\sigma)] \cdot [p^j - p^j(\sigma)],
\end{equation}
indicates whether the decision-makers update their posterior beliefs in the same direction when the signal is $\sigma$. A positive (resp. negative) inversion function means that decision-makers update their prior in the same (resp. opposite) direction. 

\begin{definition}\label{def:daiu}
Diverging attitudes occur when $D(\sigma)<0$; inverse updating when $I(\sigma)<0$.
\end{definition}

A necessary condition for diverging attitudes and inverse updating is that one decision-maker incurs the processing cost to scrutinize $\sigma$ as a whole, whereas the other does not and observes $\sigma_1$ only. Formally, $a_i^*(\sigma_1)\neq a_j^*(\sigma_1)$. Otherwise, their beliefs would converge and move in the same direction. This implies that it is necessary for the occurrence of DA and IU that decision-makers have different prior beliefs, as otherwise they would have exactly the same incentives for information processing (see Section~\ref{sec:incentives}). As we shall see, the necessary conditions for DA and IU are stronger than just $p^i\neq p^j$. 

In Section~\ref{sec:incentives}, we observed that decision-makers with prior beliefs sufficiently close to either zero or one do not find it profitable to incur the processing cost which is required to observe the signal's second component. Intuitively, these decision-makers are so confident about the realized state that they are unwilling to pay any cost for new information, regardless of how small the cost is. In this sense, they hold \emph{extreme beliefs}. The next definition formalizes the sets containing extreme and non-extreme beliefs. %Define the set of extreme beliefs by $\mathcal{E}\subset [0,1]$, and the complementary set by $\neg \mathcal{E}=[0,1]\setminus \mathcal{E}$.

\begin{definition}\label{def:extremeset}
    The set of non-extreme beliefs is
    \[
    \neg \mathcal{E} \coloneqq \lim_{c\to 0} \mathcal{H}_\alpha (c) \cup \mathcal{H}_\beta (c),
    \]
    while the set of extreme beliefs is $\mathcal{E}\coloneqq[0,1]\setminus\neg\mathcal{E}$. The sets of non-extreme and extreme beliefs conditional on observing $\sigma_1$ are, respectively, 
    \[
    \neg\mathcal{E}_{\sigma_1}\coloneqq \lim_{c\to 0} \mathcal{H}_{\sigma_1} (c) \; \text{ and } \; \mathcal{E}_{\sigma_1}\coloneqq[0,1]\setminus\neg\mathcal{E}_{\sigma_1}.
    \]
    %$\neg\mathcal{E}_{\sigma_1}\coloneqq \lim_{c\to 0} \mathcal{H}_{\sigma_1} (c) $ and $\mathcal{E}_{\sigma_1}\coloneqq[0,1]\setminus\neg\mathcal{E}_{\sigma_1}$.
\end{definition}

The sets of extreme and non-extreme beliefs are intimately related with the categorization outlined in Table \ref{tab:cases:posterior}. Using the results from Section \ref{sec:incentives}, we can find a closed-form solution for these sets. Specifically, we have $\neg\mathcal{E}=\neg\mathcal{E}_{\alpha}\cup\neg\mathcal{E}_{\beta}$, where
\begin{displaymath}
 \neg\mathcal{E}_{\alpha} \triangleq \left(\underline{\smash q}_{\alpha}(0), \overline{q}_{\alpha}(0)  \right) = \left(  \frac{(\theta_1-1) (\theta_2-1)}{1 - \theta_1 - \theta_2 +     2 \theta_1 \theta_2} , \frac{(1- \theta_1)\theta_2}{\theta_1 +    \theta_2 - 2\theta_1 \theta_2}   \right),
\end{displaymath}
\begin{displaymath}
 \neg\mathcal{E}_{\beta}  \triangleq \left(\underline{\smash q}_{\beta}(0), \overline{q}_{\beta}(0)  \right)   = \left(  \frac{\theta_1 (1-\theta_2)}{ \theta_1 + \theta_2 - 2    \theta_1 \theta_2} , \frac{\theta_1 \theta_2}{1 - \theta_1 - \theta_2 +    2 \theta_1 \theta_2}  \right).
\end{displaymath}

The set $\neg\mathcal{E}$ is convex if and only if the signal's second component is more informative than the first one (i.e., $\theta_2\geq \theta_1$). Otherwise, $\neg\mathcal{E}$ would have a gap in the middle where beliefs are considered ``extreme'' despite being relatively moderate. Decision-makers with extreme beliefs belong to either cases~4 and 5, or cases~1 and 8 when $\theta_2 \geq \theta_1$. They can belong to both cases~1 and 5 when $\theta_2 < \theta_1$. In the latter scenario, the second signal is not sufficiently informative to alter the best course of action for decision-makers with relatively central prior beliefs after observing the first signal's realization.

From Section~\ref{sec:incentives}, we have the straightforward result that two decision-makers sharing the same prior beliefs also share identical incentives for information acquisition. Either both opt to pay for observing the signal's second component, or both abstain from doing so. They face the same set of admissible costs. However, sharing the same prior beliefs is not a necessary condition for aligned incentives in information acquisition. There are decision-makers who, despite having different prior beliefs, end up with the same set of admissible costs after observing $\tilde\sigma_1$. This is trivially true for decision-makers with extreme beliefs who do not benefit from additional information, but it is also true for pairs of decision-makers who do not hold extreme beliefs. I refer to such decision-makers as \emph{reciprocal} with respect to $\sigma_1$. The next definition provides the implicit relationship between posterior beliefs characterizing reciprocal decision-makers.

\begin{definition}\label{def:reciprocal}
Two decision-makers, $i$ and $j$, with $p^i < p^j$, are said to be ``reciprocal'' for $\sigma_1$ if,
after observing $\tilde\sigma_1=\sigma_1$, (i) both have prior beliefs that belong to the non-extreme set $\neg\mathcal{E}_{\sigma_1}$, and (ii) their posterior beliefs are related as follows,
\begin{equation}\label{eq:reciprocal}
p^i_{\sigma_1 \alpha} = \frac{1}{2} + \frac{P^j (\beta_2 \mid  \sigma_1
  )}{P^i (\alpha_2 \mid  \sigma_1 )}\left( \frac{1}{2} - p^j_{\sigma_1
    \beta} \right).
\end{equation}
\end{definition}
Equation~\eqref{eq:reciprocal} outlines a specific relationship between posterior beliefs that implicitly defines a condition on prior beliefs. Importantly, such a condition is conditional on the realization of the signal's first component. The next result shows that decision-makers that are reciprocal for a signal realization $\sigma_1'$ must share the same set of admissible cost conditional on $\sigma_1'$. At the same time, they cannot be reciprocal for the other realization $\sigma_1''$, on which they have different willingness to pay for acquiring the signal's second component.

\begin{lemma}\label{lemma:nonreciprocal}
    Two decision-makers that are reciprocal for $\sigma_1$ share the same set of admissible costs, $\mathcal{C}_{\sigma_1}(p^i)=\mathcal{C}_{\sigma_1}(p^j)$. If they are reciprocal for signal realization $\sigma_1'$, then they are not reciprocal for $\sigma_1''$, where $\sigma_1'\neq \sigma_1''$.
\end{lemma}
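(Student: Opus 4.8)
The plan is to reduce everything to the willingness to pay, exploiting that the admissible-cost set $\mathcal{C}_{\sigma_1}(p)=[0,c_{\sigma_1}(p)]$ is pinned down by the single number $c_{\sigma_1}(p)$. The first step is to rewrite the closed forms of Proposition~\ref{prop:cost_function} in \emph{value-of-information} shape. Factoring the numerators over the common denominator $\theta_1 p+(1-\theta_1)(1-p)=P(\alpha_1)$, I would show that on case~3, $c^3_\alpha(p)=2\Delta U\,P(\alpha_2\mid\alpha_1)\,(p_{\alpha\alpha}-\tfrac12)$, and on case~2, $c^2_\alpha(p)=2\Delta U\,P(\beta_2\mid\alpha_1)\,(\tfrac12-p_{\alpha\beta})$, with the analogues for cases~6 and 7 following from the symmetry $c_\alpha(p)=c_\beta(1-p)$ of Proposition~\ref{prop:cost_properties}. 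Each identity merely records that the only realization of $\sigma_2$ that changes the DM's action is, respectively, $\alpha_2$ (switching to $A$) or $\beta_2$ (switching to $B$), and that the WTP equals the probability-weighted gain from that switch.

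With these in hand, the reciprocal relation~\eqref{eq:reciprocal} for $\sigma_1=\alpha$ is, after clearing the denominator, exactly $P^i(\alpha_2\mid\alpha_1)(p^i_{\alpha\alpha}-\tfrac12)=P^j(\beta_2\mid\alpha_1)(\tfrac12-p^j_{\alpha\beta})$, i.e. $c^3_\alpha(p^i)=c^2_\alpha(p^j)$ as formulas. To conclude $c_\alpha(p^i)=c_\alpha(p^j)$ I must verify these are the DMs' \emph{actual} WTPs, which requires $p^i$ in case~3 and $p^j$ in case~2. Since $p^i<p^j$ are both non-extreme, the only alternatives are ``both in case~3'' or ``both in case~2,'' and I would rule these out by the observation that each closed form, evaluated outside its own prior interval, strictly exceeds the global maximum $\Delta U(\theta_2-\tfrac12)$, whereas inside its own case it is at most that maximum (Proposition~\ref{prop:cost_properties}); hence the equality can hold only with $p^i$ below and $p^j$ above the peak $1-\theta_1$. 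This yields $c_\alpha(p^i)=c_\alpha(p^j)$ and therefore $\mathcal{C}_\alpha(p^i)=\mathcal{C}_\alpha(p^j)$; the case $\sigma_1=\beta$ is symmetric.

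For the second statement I would argue by contradiction, translating both reciprocity conditions into equal-WTP statements via part one. Being reciprocal for $\alpha$ forces $p^i$ into case~3, so $p^i<1-\theta_1<\tfrac12$, and gives $c_\alpha(p^i)=c_\alpha(p^j)$; being reciprocal for $\beta$ forces $p^j$ into case~7, so $p^j>\theta_1>\tfrac12$, and gives $c_\beta(p^i)=c_\beta(p^j)$. The engine of the contradiction is the single-crossing inequality $c_\alpha(p)>c_\beta(p)$ for $p<\tfrac12$ and $c_\alpha(p)<c_\beta(p)$ for $p>\tfrac12$. Granting it, the chain $c_\beta(p^i)<c_\alpha(p^i)=c_\alpha(p^j)<c_\beta(p^j)$ gives $c_\beta(p^i)\neq c_\beta(p^j)$, contradicting reciprocity for $\beta$; strictness holds because $p^i$ lies strictly below $\tfrac12$ and $c_\alpha(p^i)>0$ (case~3 is non-extreme).

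I expect the single-crossing inequality to be the main obstacle. The cleanest route is to write the WTP as the value of the fixed experiment $\sigma_2$ at the interim belief, $c_{\sigma_1}(p)=V(p_{\sigma_1})$, where $V$ is symmetric about and strictly single-peaked at $\tfrac12$—the standard shape of the value of a symmetric binary experiment. Then $c_\alpha(p)\gtrless c_\beta(p)$ reduces to comparing $|p_\alpha-\tfrac12|$ with $|p_\beta-\tfrac12|$, so the crux becomes the auxiliary fact that the confirming realization moves the interim belief farther from $\tfrac12$ than the contradicting one—equivalently $p_\alpha+p_\beta<1$ for $p<\tfrac12$—which I would settle by a short monotonicity check on $p\mapsto p_\alpha+p_\beta$, noting it equals $1$ at $p=\tfrac12$. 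A more computational alternative is to compare the rational closed forms of $c_\alpha$ and $c_\beta$ directly, avoiding single-peakedness at the cost of transparency.
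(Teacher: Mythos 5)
Your proposal is correct, and its first half is essentially the paper's own argument: clearing the denominator in \eqref{eq:reciprocal} identifies reciprocity with the formula identity $c^3_{\alpha}(p^i)=c^2_{\alpha}(p^j)$ (resp.\ $c^6_{\beta}(p^i)=c^7_{\beta}(p^j)$), which immediately gives equal thresholds and hence equal admissible-cost sets. You add one step the paper skips: the paper tacitly assumes reciprocity places $p^i$ in case~3 and $p^j$ in case~2, whereas you rule out ``both in case~3'' and ``both in case~2'' by noting that the misplaced formula, continued past the peak $1-\theta_1$ toward the other case's interval, strictly exceeds $\Delta U\left[\theta_2-\frac{1}{2}\right]$ while the correctly placed one is at most that value; this closes a genuine (if small) gap. (Your phrasing ``evaluated outside its own prior interval'' is only accurate on the side facing the other case --- on the far side the formulas go negative --- but that is exactly the side your two excluded configurations use, so the argument stands.) Your second half takes a different route from the paper. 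The paper concludes non-reciprocity for $\sigma_1''$ directly from the pointwise orderings $c^3_{\alpha}(\cdot)>c^6_{\beta}(\cdot)$ and $c^2_{\alpha}(\cdot)<c^7_{\beta}(\cdot)$ on shared domains, which it justifies by the monotonicity and curvature facts of Proposition~\ref{prop:cost_properties} (an increasing-concave and an increasing-convex branch ``never cross''). You instead run the contradiction chain $c_\beta(p^i)<c_\alpha(p^i)=c_\alpha(p^j)<c_\beta(p^j)$, anchored on the location facts that reciprocity for $\alpha$ forces $p^i<\frac{1}{2}$ and reciprocity for $\beta$ forces $p^j>\frac{1}{2}$, together with single crossing of $c_\alpha$ and $c_\beta$ at $\frac{1}{2}$, which you prove via the representation $c_{\sigma_1}(p)=V(p_{\sigma_1})$ with $V$ the tent-shaped value of the symmetric experiment $\tilde\sigma_2$, reducing everything to $p_\alpha+p_\beta<1$ for $p<\frac{1}{2}$ (which indeed collapses to $p^2<(1-p)^2$). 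The two arguments rest on the same underlying comparison, but yours is more self-contained and arguably tighter: ``increasing-concave versus increasing-convex'' does not by itself preclude a crossing, whereas the tent-function argument does, and it simultaneously re-derives the closed forms you need in the first half.
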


\begin{proof}
For the first part of the proof, rewrite equation~\eqref{eq:reciprocal} as $P^i (\alpha_2 \mid  \sigma_1
)\left[p^i_{\sigma_1 \alpha} - \frac{1}{2}\right] = P^j (\beta_2 \mid 
\sigma_1)\left[\frac{1}{2} - p^j_{\sigma_1 \beta}\right]$. Multiplying both sides
by $\Delta U$ we get,
\begin{equation*}
\Delta U \cdot P^i (\alpha_2 \mid  \sigma_1) \cdot \left[2p^i_{\sigma_1 \alpha} -
1\right] = \Delta U \cdot P^j(\beta_2 \mid  \sigma_1) \cdot \left[1 - 2p^j_{\sigma_1 \beta}\right].
\end{equation*}
Depending on the realization of $\tilde\sigma_1$, the left-hand side is DM~$i$'s cost function for cases 3 and 6. The right-hand side is DM~$j$'s cost function for cases 2 and 7. It follows that $i$ and $j$ share the same cost threshold and thus the set of admissible
cost conditional on $\sigma_1$.

The second part of the proof begins with the observation that the only prior for which $c_{\alpha}(p)=c_{\beta}(p)$ is $p=\nicefrac{1}{2}$. Reciprocal DMs have different prior beliefs, and thus at least one of them must have different cost thresholds for different realizations of $\tilde\sigma_1$. Their prior beliefs must belong to different cases (see Table~\ref{tab:cases:posterior}) because of the strict monotonicity of the cost function, $c_{\sigma_1}(p)$, in the domain of non-extreme sets, $\neg \mathcal{E}_{\sigma_1}$. Otherwise, they could not share the same set of admissible costs. From the cost function's shape (see Section~\ref{sec:incentives}) we have that, on their shared domain, $c^3_{\alpha}(\cdot)>c^6_{\beta}(\cdot)$, and $c^2_{\alpha}(\cdot)<c^7_{\beta}(\cdot)$. Hence, if $c_{\sigma_1'}\left(p^i\right)=c_{\sigma_1'}\left(p^j\right)$, then
$c_{\sigma_1''}\left(p^i\right) \neq c_{\sigma_1''}\left(p^j\right)$ for $p^i \neq p^j$ and $\sigma_1' \neq \sigma_1''$. The DMs do not share the same set of admissible costs, and thus they are not reciprocal for $\sigma_1''$.
\end{proof}

Only pairs of decision-makers who have the same prior beliefs (and thus cannot be reciprocal) share the same set of admissible costs regardless of $\sigma_1$, with the exception of the pathological case where $p^i=0$ and $p^j=1$. From Definition~\ref{def:reciprocal} and Lemma~\ref{lemma:nonreciprocal}, we obtain that $i$ has a higher willingness to pay for information than $j$ (i.e., $c_{\sigma_1}(p^i) > c_{\sigma_1}(p^j)$) if and only if $p^i_{\sigma_1 
  \alpha} < \frac{1}{2} + \frac{P^j (\beta_2 \mid  \sigma_1    )}{P^i (\alpha_2 \mid  \sigma_1 )}\left[ \frac{1}{2} - p^j_{\sigma_1 \beta}
\right]$. Given signal realization $\sigma_1$, and two different but non-reciprocal decision-makers, if at least one of them belongs to the non-extreme set ($\neg \mathcal{E}_{\sigma_1}$), then one decision-maker must be willing to pay a higher processing cost than the other. In these cases, there are processing costs for which only one decision-maker would pay to observe $\tilde\sigma_2$. Such a set would be empty for decision-makers who are either reciprocal or have identical priors. As we shall see, conditions under which only one decision-maker acquires additional information are key for the analysis that follows. The next definition formally defines the sets containing all pairs of prior beliefs for which, conditional on $\sigma_1$, decision-makers have different incentives for information acquisition.

\begin{definition}\label{def:sets_sigma2}
    Consider two decision-makers, $k$ and $l$. Given a processing cost $c$, the set of all pairs of prior beliefs such that, after observing $\sigma_1$, decision-maker $k$ chooses to observe $\sigma_2$ while decision-maker $l$ does not is
    \[
    \mathcal{B}^{kl}_{\sigma_1}(c) := \left\{ \left(p^i,p^j\right) \mid p^k \in \mathcal{H}_{\sigma_1}(c) \text{ and } p^l \notin \mathcal{H}_{\sigma_1}(c) \text{ for some } k,l\in\{i,j\} \right\}.
    \]
    The set of prior beliefs for which decision-maker $k$ observes $\sigma_2$ while decision-maker $l$ does not, after observing $\sigma_1$ and for some processing cost, is
    \[
    \mathcal{V}^{kl}_{\sigma_1} := \bigcup_{c>0} \mathcal{B}^{kl}_{\sigma_1}(c). 
    \]
\end{definition}

To clarify intuitions, it is useful to explore the connection between the set $\mathcal{V}^{kl}_{\sigma_1}$ and the decision-makers’ willingness to pay to observe the signal's second component, $\tilde\sigma_2$. From Definition \ref{def:sets_sigma2}, we see that  $\mathcal{V}^{kl}_{\sigma_1}$ contains all pairs of prior beliefs for which, at some processing cost, only decision-maker $k$ (optimally) observes $\sigma_2$. This occurs only when $k$ and $l$, given their prior beliefs and $\sigma_1$, have different willingness to pay for additional information. By definition, we can express  $\mathcal{V}^{kl}_{\sigma_1}$ alternatively as
\[
\mathcal{V}^{kl}_{\sigma_1} \triangleq \left\{ \left(p^k,p^l\right) \;\middle|\;  c_{\sigma_1}\left(p^k\right) > c_{\sigma_1}\left(p^l\right) \right\}.
\]
If two decision-makers’ prior beliefs belong to $\mathcal{V}^{kl}_{\sigma_1}$, at least one must hold non-extreme beliefs, and they must not be reciprocal for $\sigma_1$. As we shall see, the sets $\mathcal{V}^{kl}_{\sigma_1}$ and $\mathcal{B}^{kl}_{\sigma_1}(c)$ play an important role in the analysis of belief polarization. Moreover, there is a tight connection between the sets $\mathcal{V}_{\sigma_1}^{kl}$ and $\mathcal{B}_{\sigma_1}^{kl}(c)$. By definition, $(p^i,p^j)\in \mathcal{B}^{kl}_{\sigma_1}(c)$ if and only if $(p^i,p^j)\in \mathcal{V}^{kl}_{\sigma_1}$ and $c \in \mathcal{C}_{\sigma_1}(p^k) \setminus \mathcal{C}_{\sigma_1}(p^l)$.

\subsection{Belief polarization}\label{sec:polarization}

Polarization of beliefs occurs when two or more individuals, despite having access to the same information, interpret it in ways that drive their beliefs further apart. Moreover, instead of converging toward a common understanding, decision-makers update their prior beliefs in opposite directions, leading to posteriors that are more distant from each other than before. This phenomenon is particularly striking because it suggests that even shared evidence may reinforce pre-existing disagreement, deepening divisions rather than fostering consensus. This section provides a formal definition of \emph{polarized beliefs} within the framework under scrutiny. It then outlines the necessary and sufficient conditions under which beliefs polarize according to this definition.

There are two defining characteristics of polarized beliefs that we can formalize by using concepts introduced in Section~\ref{sec:intermediate}. First, decision-makers' beliefs grow further apart. Second, their beliefs move in opposite directions. The former is captured by the divergence function described by equation~\eqref{eq:div}, and the latter by the inversion function described by equation~\eqref{eq:inv}. Following Definition~\ref{def:daiu}, this paper defines polarization of beliefs as the simultaneous occurrence of diverging attitudes and inverse updating.\footnote{The notion of polarization here is more general than that used by, e.g., \cite{chang}. In~\citet[p.~3]{chang}, belief divergence refers to ``cases in which the person with the stronger belief in a hypothesis increases the strength of his or her belief, and the person with the weaker belief decreases the strength of his or her belief.'' Their definition excludes cases where beliefs change order. In contrast, the notion of belief divergence employed here (see Definition~\ref{def:daiu}) admits cases where beliefs swap order in the sense that if $p^i>p^j$, then $p^i(\sigma)<p^j(\sigma)$. Lemma~\ref{lemma:PB_no_swap} shows that polarization cannot happen via beliefs swap.}

\begin{definition}\label{def:pb}
Polarization of beliefs (PB) occurs when $I(\sigma)<0$ and $D(\sigma)<0$.
\end{definition}

The next proposition shows necessary and sufficient conditions for belief polarization based on the decision-makers' prior beliefs and the signal's structure. Specifically, it shows that polarization requires decision-makers, while receiving the same signal, $\sigma$, to scrutinize it differently: one must observe only its first component, while the other must observe it in its entirety. This requires at least one decision-maker to be willing, in principle, to incur some processing cost to observe the signal as a whole. A failure of any of these conditions implies that the decision-makers update their priors in the same direction. In such cases, their beliefs may diverge, but they cannot polarize.\footnote{Two decision-makers with identical prior beliefs share the same incentives and thus make the same information acquisition decisions. Consequently, polarization of beliefs cannot occur between them. However, in a variation of the model, polarization between decision-makers with identical priors could still occur if they have different payoff structures.}

\begin{proposition}\label{prop:PB_IFF}
    Consider two decision-makers, $i$ and $j$, such that $p^i<p^j$. Polarization of beliefs between decision-makers $i$ and $j$ occurs with positive ex-ante probability and for some processing cost if and only if,
    \begin{itemize}[noitemsep]
        \item[i)] $\left(p^i,p^j\right)\in \mathcal{V}^{ij}_{\alpha} \cup \mathcal{V}^{ji}_{\beta}$;
        \item[ii)] $\theta_2>\theta_1$.
    \end{itemize}
    Given a processing cost $c>0$, polarization of beliefs occurs with positive ex-ante probability if and only if
    \begin{itemize}[noitemsep]
    \item[ii)] $\theta_2>\theta_1$;
    \item[iii)] $c<\max\{c_{\sigma_1}(p^i),c_{\sigma_1}(p^j)\}$;
     \item[iv)] $(p^i,p^j)\in\mathcal{B}^{ij}_{\alpha}(c)\cup\mathcal{B}^{ji}_{\beta}(c)$.
    \end{itemize}
\end{proposition}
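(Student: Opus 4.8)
The plan is to reduce polarization to a single, sharply identified signal realization and then translate the induced acquisition pattern into the set-language of Definition~\ref{def:sets_sigma2}. Recall from Section~\ref{sec:intermediate} that $a_i^*(\sigma_1)\neq a_j^*(\sigma_1)$ is necessary for both $I(\sigma)<0$ and $D(\sigma)<0$: if both decision-makers make the same acquisition choice they observe the same information, and since the sign of a Bayesian update depends only on the likelihood ratio of the observed realization (identical for both), they update strictly in the same direction, forcing $I(\sigma)>0$. Thus I may restrict attention to profiles in which exactly one decision-maker pays $c$.

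First I would fix $\sigma_1=\alpha$ and run the short case analysis over the acquirer's identity and over $\sigma_2$. The non-acquirer sees only $\alpha_1$ and moves strictly up (toward $A$). The acquirer's direction is governed by the likelihood ratio of the full realization: a one-line sign computation gives that $(\alpha,\beta)$ supports $B$---so the acquirer moves strictly down---if and only if $\theta_2>\theta_1$, whereas $(\alpha,\alpha)$ always supports $A$. Hence the only realization that can deliver opposite-direction updating is $\sigma=(\alpha,\beta)$ under $\theta_2>\theta_1$, which already forces condition~(ii). I would then note that, with the order $p^i<p^j$ preserved, opposite-direction updating yields $D(\sigma)<0$ precisely when the \emph{lower} type $i$ is the one moving down, i.e.\ when $i$ is the acquirer: then $p^i(\sigma)=p^i_{\alpha\beta}<p^i<p^j<p^j_\alpha=p^j(\sigma)$ and the beliefs strictly spread. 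If instead the higher type $j$ acquires, the two beliefs move toward each other, and the only way divergence could survive is an order swap, which Lemma~\ref{lemma:PB_no_swap} rules out. So under $\sigma_1=\alpha$ polarization occurs (at $\sigma=(\alpha,\beta)$) exactly when $i$ acquires while $j$ does not and $\theta_2>\theta_1$. By the symmetry $c_\alpha(p)=c_\beta(1-p)$ of Proposition~\ref{prop:cost_properties}, the mirror argument for $\sigma_1=\beta$ singles out $\sigma=(\beta,\alpha)$ with the \emph{higher} type $j$ as acquirer.

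It then remains to encode these two acquisition patterns. ``$i$ acquires after $\alpha_1$ while $j$ does not, for some cost'' is by definition $c_\alpha(p^i)>c_\alpha(p^j)$, i.e.\ $(p^i,p^j)\in\mathcal{V}^{ij}_\alpha$; likewise the $\beta_1$ pattern is $(p^i,p^j)\in\mathcal{V}^{ji}_\beta$. Together with $\theta_2>\theta_1$ this is exactly (i)--(ii), and since the triggering realizations $(\alpha,\beta)$ and $(\beta,\alpha)$ each carry strictly positive probability for interior parameters $\theta_1,\theta_2\in(\tfrac12,1)$, the ``positive ex-ante probability'' clause follows. For a fixed $c>0$ I would replace ``for some cost'' by membership at that cost: the $\alpha_1$ pattern becomes $p^i\in\mathcal{H}_\alpha(c)$, $p^j\notin\mathcal{H}_\alpha(c)$, i.e.\ $(p^i,p^j)\in\mathcal{B}^{ij}_\alpha(c)$, and symmetrically $\mathcal{B}^{ji}_\beta(c)$, giving~(iv). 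Condition~(iii), $c<\max\{c_{\sigma_1}(p^i),c_{\sigma_1}(p^j)\}$, records that the acquirer's willingness to pay exceeds $c$ so that someone actually acquires; it is in fact implied by~(iv) and is retained for a transparent threshold reading, so the binding content of the second part is (ii) and (iv).

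The main obstacle I anticipate is the divergence bookkeeping in the second paragraph: pairing the sign of the acquirer's movement with the preserved order $p^i<p^j$ to show that opposite-direction updating translates into $D(\sigma)<0$ only for the correct acquirer, and cleanly excluding the order-swap loophole for the incorrect acquirer. Everything else is either the single likelihood-ratio sign computation that pins down the role of $\theta_2>\theta_1$, or a direct unwinding of Definition~\ref{def:sets_sigma2}; the substantive input is Lemma~\ref{lemma:PB_no_swap}, which is what forecloses the convergence-via-swap alternative.
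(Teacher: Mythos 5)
Your proof is correct and follows essentially the same route as the paper's: the necessity of asymmetric acquisition, the likelihood-ratio sign computation forcing $\sigma_1\neq\sigma_2$ and $\theta_2>\theta_1$, the identification of the lower type as the required acquirer after $\alpha_1$ (and the higher type after $\beta_1$) with Lemma~\ref{lemma:PB_no_swap} closing off the order-swap loophole, and the translation into $\mathcal{V}^{ij}_{\alpha}\cup\mathcal{V}^{ji}_{\beta}$ and $\mathcal{B}^{ij}_{\alpha}(c)\cup\mathcal{B}^{ji}_{\beta}(c)$ are precisely the contents of Lemmata~\ref{lemma:necessary_PB}, \ref{lemma:nec_suff_PB}, \ref{lemma:PB_no_swap} and Corollary~\ref{cor:IFF_PB} that the paper's proof assembles. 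The only difference is cosmetic: you reprove the first two lemmas inline rather than citing them, and you note explicitly that condition~(iii) is implied by~(iv), which matches the paper's own observation that $\mathcal{B}^{kl}_{\sigma_1}(c)$ is non-empty exactly when $c<\max\{c_{\sigma_1}(p^i),c_{\sigma_1}(p^j)\}$.
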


Proposition~\ref{prop:PB_IFF} establishes the necessary and sufficient conditions for belief polarization to occur with some positive ex-ante probability. However, it does not specify the events that trigger polarization or quantify its likelihood. The next proposition addresses this gap, demonstrating that polarization arises when the signal's components differ, i.e., $\sigma_1 \neq \sigma_2$. In other words, a superficial reading of $\tilde\sigma$ must support a different state than its more precise scrutiny. The proposition also provides the probability of this event, showing that it occurs less than half the time. Thus, polarization is inherently less likely to occur than to not occur.

\begin{proposition}\label{prop:probability_PB}
    When conditions i) to iv) in Proposition \ref{prop:PB_IFF} are satisfied, and given a subjective probability that the state is $s=A$ equal to $p$, polarization of beliefs occurs with an ex-ante probability given by $Pr(PB)$, where
    \begin{multline}\label{eq:probPB}
    Pr(PB)=Pr(\tilde\sigma_1=\alpha)\cdot Pr(\tilde\sigma_2 = \beta)\cdot \mathds{1}\left\{ (p^i,p^j) \in\mathcal{B}^{ij}_{\alpha}(c) \right\}\\
    + Pr(\tilde\sigma_1=\beta)\cdot Pr(\tilde\sigma_2 = \alpha)\cdot \mathds{1}\left\{ (p^i,p^j) \in\mathcal{B}^{ji}_{\beta}(c)\right\}.
    \end{multline}
When any of the conditions $i)$ to $iv)$ is not satisfied, then $Pr(PB)=0$. Polarization of beliefs occurs with an ex-ante probability that can be up to but no larger than $\nicefrac{1}{2}$.
\end{proposition}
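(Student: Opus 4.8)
The plan is to take Proposition~\ref{prop:PB_IFF} as given and reduce the computation of $Pr(PB)$ to identifying which realizations of $\tilde\sigma$ actually trigger polarization, then summing their probabilities. Since $p^i,p^j$ and $c$ are fixed, the two membership indicators in~\eqref{eq:probPB} are constants and the only randomness is in $\sigma=(\sigma_1,\sigma_2)$, so I would enumerate the four realizations and decide for each whether $I(\sigma)<0$ and $D(\sigma)<0$ can hold simultaneously. The \emph{concordant} realizations $(\alpha,\alpha)$ and $(\beta,\beta)$ are disposed of first: when the two components agree, both the free component $\sigma_1$ and the costly component $\sigma_2$ push beliefs to the same side, so whatever each decision-maker acquires, both posteriors land on the same side of their priors; hence $I(\sigma)>0$, inverse updating fails, and by Definition~\ref{def:pb} these realizations contribute nothing.

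For the \emph{discordant} realizations I would argue as follows for $\sigma=(\alpha,\beta)$ and then mirror it. By Lemma~\ref{lemma:PB_no_swap} polarization cannot run through a swap of the belief order, so with $p^i<p^j$ it requires the lower type $i$ to move down and the higher type $j$ to move up. Because $\sigma_1=\alpha$ raises both interim posteriors, $j$ can finish above its prior only by \emph{not} acquiring $\sigma_2$, whereas $i$ can finish below its prior only by acquiring $\sigma_2=\beta$ and having the pair $(\alpha,\beta)$ overturn the first component. The latter happens precisely when $\theta_2>\theta_1$ (condition~ii), since the likelihood ratio of $(\alpha,\beta)$ favors $B$ exactly in that case, independently of the prior. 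This is exactly the configuration $(p^i,p^j)\in\mathcal{B}^{ij}_\alpha(c)$, and in it one checks directly that $p^i_{\alpha\beta}<p^i<p^j<p^j_\alpha$, giving both $I(\sigma)<0$ and $D(\sigma)<0$. The realization $(\beta,\alpha)$ is handled by the mirror-image argument (using the symmetry $c_\alpha(p)=c_\beta(1-p)$ of Proposition~\ref{prop:cost_properties}) and yields polarization exactly when $(p^i,p^j)\in\mathcal{B}^{ji}_\beta(c)$.

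Combining the discordant cases and noting that $\{\sigma_1=\alpha\}$ and $\{\sigma_1=\beta\}$ are disjoint yields the additive expression~\eqref{eq:probPB}, each term being the probability of the triggering realization under the reference belief $p$ times its indicator; the vanishing claim when a condition fails is immediate from Proposition~\ref{prop:PB_IFF}. For the bound I would note that, whatever the indicators,
\[
Pr(PB)\leq Pr\big(\tilde\sigma_1=\alpha,\tilde\sigma_2=\beta\big)+Pr\big(\tilde\sigma_1=\beta,\tilde\sigma_2=\alpha\big)=Pr(\tilde\sigma_1\neq\tilde\sigma_2)=\theta_1+\theta_2-2\theta_1\theta_2,
\]
which is independent of $p$, and then verify $\theta_1+\theta_2-2\theta_1\theta_2\leq\tfrac12$ on $[\tfrac12,1]^2$ (it equals $\tfrac12$ whenever some $\theta_j=\tfrac12$ and is weakly decreasing in each $\theta_j$ there), the value being approached as $\theta_1,\theta_2\downarrow\tfrac12$. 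That this bound is genuinely the relevant maximum follows because $\mathcal{B}^{ij}_\alpha(c)\cap\mathcal{B}^{ji}_\beta(c)\neq\varnothing$ when $\theta_2>\theta_1$---the intervals $\mathcal{H}_\alpha(c)$ and $\mathcal{H}_\beta(c)$ of Corollary~\ref{cor:processing_priors} then overlap---so both indicators can be active at once.

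The step I expect to be the main obstacle is the discordant-case analysis, specifically ruling out the deceptively similar scenario in which the \emph{higher} type acquires the overturning component while the \emph{lower} type does not: that scenario also produces inverse updating ($I(\sigma)<0$) but drives the two types \emph{together} ($D(\sigma)>0$), so it is convergence, not polarization. Tracking, for each discordant realization, which type must acquire and in which direction each posterior then moves---so that divergence and inverse updating hold at once---is where care is needed, and it is exactly what selects $\mathcal{B}^{ij}_\alpha$ rather than $\mathcal{B}^{ji}_\alpha$ for $(\alpha,\beta)$, and $\mathcal{B}^{ji}_\beta$ for $(\beta,\alpha)$.
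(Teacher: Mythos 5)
Your identification of which realizations trigger polarization is correct and is essentially the paper's own route: concordant realizations are ruled out because both posteriors move in the same direction (Lemma~\ref{lemma:necessary_PB}), and for the discordant ones your assignment of who must acquire---the low type for $\sigma=(\alpha,\beta)$, giving $\mathcal{B}^{ij}_{\alpha}(c)$ and the chain $p^i_{\alpha\beta}<p^i<p^j<p^j_{\alpha}$ under $\theta_2>\theta_1$; the high type for $\sigma=(\beta,\alpha)$, giving $\mathcal{B}^{ji}_{\beta}(c)$---reproduces Lemmata~\ref{lemma:nec_suff_PB} and~\ref{lemma:PB_no_swap}, including the exclusion of the ``convergent'' discordant scenario. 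Your non-emptiness argument for $\mathcal{B}^{ij}_{\alpha}(c)\cap\mathcal{B}^{ji}_{\beta}(c)$ via Corollary~\ref{cor:processing_priors} is, if anything, cleaner than the paper's appeal to its figure.

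The gap is in the quantitative part, which is the actual content of the proposition. You never derive the specific form of~\eqref{eq:probPB}: you assert that each term is ``the probability of the triggering realization,'' but \eqref{eq:probPB} prices that realization as a product of \emph{unconditional} marginals, $Pr(\tilde\sigma_1=\alpha)\cdot Pr(\tilde\sigma_2=\beta)$, which is exactly the factorization the paper's proof adopts, i.e., $Pr(\tilde\sigma=(\sigma_1,\sigma_2))=Pr(\tilde\sigma_1=\sigma_1)\cdot Pr(\tilde\sigma_2=\sigma_2)$. Then, for the bound, you switch to a different joint law: $Pr(\tilde\sigma_1\neq\tilde\sigma_2)=\theta_1+\theta_2-2\theta_1\theta_2$ is the probability of discordance when the components are independent \emph{conditional on the state} (the $p$-terms cancel), and it does not equal the sum of products of marginals, since
\[
Pr(\alpha_1)Pr(\beta_2)+Pr(\beta_1)Pr(\alpha_2)=\tfrac{1}{2}-\tfrac{1}{2}\,(2p-1)^2(2\theta_1-1)(2\theta_2-1),
\]
which depends on $p$ and attains the value $\nicefrac{1}{2}$ at $p=\nicefrac{1}{2}$ for any admissible $(\theta_1,\theta_2)$, whereas your bound is independent of $p$ and strictly below $\nicefrac{1}{2}$ whenever $\theta_1,\theta_2>\nicefrac{1}{2}$. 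So the two halves of your proof rest on incompatible probability models: if \eqref{eq:probPB} holds, your bound computation bounds the wrong quantity (the correct argument maximizes over $p$, as the paper does, with the maximum $\nicefrac{1}{2}$ attained at $p=\nicefrac{1}{2}$); if your conditionally independent joint law is the right one, then what you have established is a different formula, not~\eqref{eq:probPB}. To close the gap, commit to the factorization used in the statement, derive~\eqref{eq:probPB} from it together with your event analysis, and obtain the bound by maximizing $\tfrac{1}{2}-\tfrac{1}{2}(2p-1)^2(2\theta_1-1)(2\theta_2-1)$ over $p$.
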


Proposition~\ref{prop:probability_PB} shows that the ex-ante probability of polarization depends on the subjective prior beliefs, which remain indeterminate due to the model's allowance for different beliefs about the state. Consequently, decision-makers and external observers with different priors would also disagree on the likelihood of polarization. The next corollary establishes that, provided the signal's second component is more informative than the first, (almost) no decision-maker is immune to polarization.

\begin{corollary}\label{cor:polarizing} 
For any decision-maker~$i$ with prior $p^i \in (0,1)$, and given some processing cost $c \in \left(0, \Delta U \left[\theta_2 - \frac{1}{2}\right] \right)$, there exists a non-empty set of priors, $\mathcal{T}\subset[0,1]$, such that, given any decision-maker~$j$ with prior $p^j \in \mathcal{T}$, the condition $\theta_2 > \theta_1$ is sufficient for belief polarization to occur between $i$ and $j$ with positive ex-ante probability. 
\end{corollary}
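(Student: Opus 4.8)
The plan is to deduce the corollary from the sufficiency direction of Proposition~\ref{prop:PB_IFF}. Because $\theta_2>\theta_1$ is exactly the hypothesis I am allowed to use, its condition~ii) holds for free; and condition~iii) is redundant given~iv), since if a pair lies in $\mathcal{B}^{ij}_\alpha(c)$ (resp.\ $\mathcal{B}^{ji}_\beta(c)$) the acquiring decision-maker has $c_\alpha(\cdot)>c$ (resp.\ $c_\beta(\cdot)>c$), so that $\max\{c_{\sigma_1}(p^i),c_{\sigma_1}(p^j)\}>c$ automatically. Hence it suffices, for the fixed prior $p^i\in(0,1)$, to exhibit a nonempty set $\mathcal{T}$ of partner priors such that for each $p^j\in\mathcal{T}$ the pair---relabelled so that the smaller prior carries index $i$---belongs to $\mathcal{B}^{ij}_\alpha(c)\cup\mathcal{B}^{ji}_\beta(c)$.

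First I would record the geometry supplied by Corollary~\ref{cor:processing_priors}: for $c\in\bigl(0,\Delta U[\theta_2-\frac{1}{2}]\bigr)$ the two acquisition sets are nonempty open intervals, $\mathcal{H}_\alpha(c)=\bigl(\underline{\smash q}_\alpha(c),\overline{q}_\alpha(c)\bigr)$ and $\mathcal{H}_\beta(c)=\bigl(\underline{\smash q}_\beta(c),\overline{q}_\beta(c)\bigr)$, with endpoints obeying $0<\underline{\smash q}_\alpha<\underline{\smash q}_\beta$ and $\overline{q}_\alpha<\overline{q}_\beta<1$. Reading off Definition~\ref{def:sets_sigma2}, $\mathcal{B}^{ij}_\alpha(c)$ collects the pairs whose \emph{lower} member lies in $\mathcal{H}_\alpha(c)$ while the \emph{higher} one does not, and $\mathcal{B}^{ji}_\beta(c)$ those whose \emph{higher} member lies in $\mathcal{H}_\beta(c)$ while the \emph{lower} one does not. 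The idea is thus to pair $p^i$ with a partner lying on the opposite side of an endpoint of $\mathcal{H}_\alpha$ or $\mathcal{H}_\beta$, so that exactly one of the two acquires $\tilde\sigma_2$.

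Then I would split on the position of $p^i$ relative to $\mathcal{H}_\alpha(c)$, a partition that is exhaustive because $\mathcal{H}_\alpha(c)$ is open. (A) If $p^i\in\mathcal{H}_\alpha(c)$, take $\mathcal{T}=(\overline{q}_\alpha,1)$: each such $p^j$ exceeds $p^i$ and lies outside $\mathcal{H}_\alpha$, so the pair is in $\mathcal{B}^{ij}_\alpha(c)$. (B) If $p^i\ge\overline{q}_\alpha(c)$, take $\mathcal{T}=\mathcal{H}_\alpha(c)$: each such $p^j$ is below $p^i$ and in $\mathcal{H}_\alpha$ while $p^i\notin\mathcal{H}_\alpha$, so after relabelling the pair is again in $\mathcal{B}^{ij}_\alpha(c)$. (C) If $p^i\le\underline{\smash q}_\alpha(c)$, take $\mathcal{T}=\mathcal{H}_\beta(c)$: since $p^i\le\underline{\smash q}_\alpha<\underline{\smash q}_\beta$, each such $p^j$ exceeds $p^i$ and lies in $\mathcal{H}_\beta$ while $p^i\notin\mathcal{H}_\beta$, so the pair is in $\mathcal{B}^{ji}_\beta(c)$. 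In every case $\mathcal{T}$ is a nonempty open subinterval of $(0,1)$, and Proposition~\ref{prop:PB_IFF} then delivers polarization with positive ex-ante probability for each $j$ with $p^j\in\mathcal{T}$.

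I expect the only real obstacle to be bookkeeping rather than analysis. Proposition~\ref{prop:PB_IFF} is stated under the convention $p^i<p^j$, whereas the fixed decision-maker here may turn out to be either the higher- or the lower-prior member of the pair, so each of the three cases must be matched to the correct $\mathcal{B}$-set after relabelling, and the partner intervals must be taken open (for instance $p^j>\overline{q}_\alpha$ strictly in case~A) to stay off the indifference boundaries where $c_{\sigma_1}(p^j)=c$. The one mildly substantive point is that case~A absorbs the \emph{central} priors lying in the overlap $\mathcal{H}_\alpha\cap\mathcal{H}_\beta$, which acquire after either realization and so cannot generate the required asymmetry on their own; the needed non-acquiring partner above $\overline{q}_\alpha$ nonetheless exists precisely because $\mathcal{H}_\alpha$ is a proper subinterval, i.e.\ $\overline{q}_\alpha(c)<1$. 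Collecting the three cases shows $\mathcal{T}\neq\varnothing$ for every $p^i\in(0,1)$, which is the claim.
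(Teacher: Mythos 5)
Your proof is correct and follows essentially the same route as the paper's: the same three-way case split on the position of $p^i$ relative to $\mathcal{H}_\alpha(c)$, with the same choices of partner set ($(\overline{q}_\alpha(c),1)$, $\mathcal{H}_\alpha(c)$, and $\mathcal{H}_\beta(c)$, the last via $\underline{\smash q}_\alpha(c)<\underline{\smash q}_\beta(c)$). The only cosmetic difference is that you invoke the fixed-cost conditions of Proposition~\ref{prop:PB_IFF} (the $\mathcal{B}$-sets), whereas the paper cites Lemma~\ref{lemma:nec_suff_PB} (the $\mathcal{V}$-sets) directly; these are equivalent here, and your observation that condition~iii) is implied by~iv) is sound.
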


\begin{comment}
\begin{figure}
    \centering
    \includegraphics[width=0.5\linewidth]{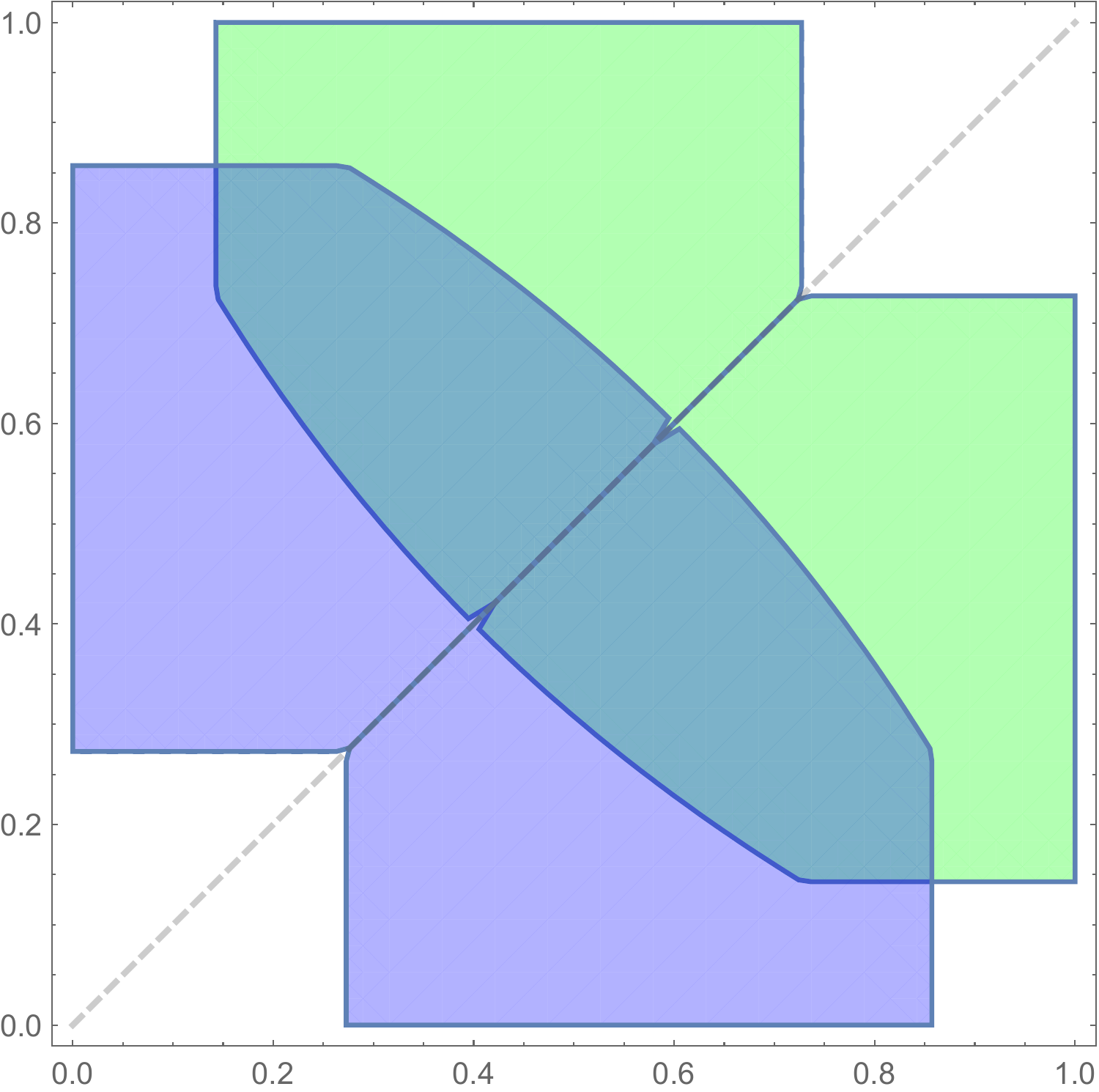}
    \caption{The sets $\mathcal{V}^{ij}_{\alpha}$ (green) and $\mathcal{V}^{ji}_{\beta}$ (blue) for $\theta_1=.6$ and $\theta_2=.8$. (and payoffs...$\Delta U=1$). . The dashed red line indicated points where $p^i=p^j$, and PB cannot occur. Prior beliefs are in the axes.}
    \label{fig:setsJK}
\end{figure}

\begin{figure}
    \centering
    \includegraphics[width=0.5\linewidth]{Figures/setH_2.pdf}
    \caption{Combinations of prior beliefs such that $p^i\in\mathcal{H}_{\sigma_1}(c)$ and $p^j\notin\mathcal{H}_{\sigma_1}(c)$. Cases where $\sigma_1=\alpha_1$ are in red, and cases where $\sigma_1=\beta_1$ are in blue. The sets with a lighter shade depict the case where the processing cost approaches zero, i.e., $c\to 0^+$. This limit case correspond to $p^i \in \neg\mathcal{E}$ and $p^j\in\mathcal{E}$. A darker shade denotes a higher processing cost, with $c=0.1$ and $c=0.2$. The other parameters are set at $\theta_1=0.6$, $\theta_2=0.8$, and $\Delta U=1$. Pairs of decision-makers with prior beliefs in commonly shaded areas polarize with prior-independent ex-ante probability for some $c$ and regardless of $\sigma_1$}
    \label{fig:setsH}
\end{figure}
\end{comment}

\begin{figure}
\centering
\begin{subfigure}{.45\textwidth}
  \centering
  \includegraphics[width=.9\linewidth]{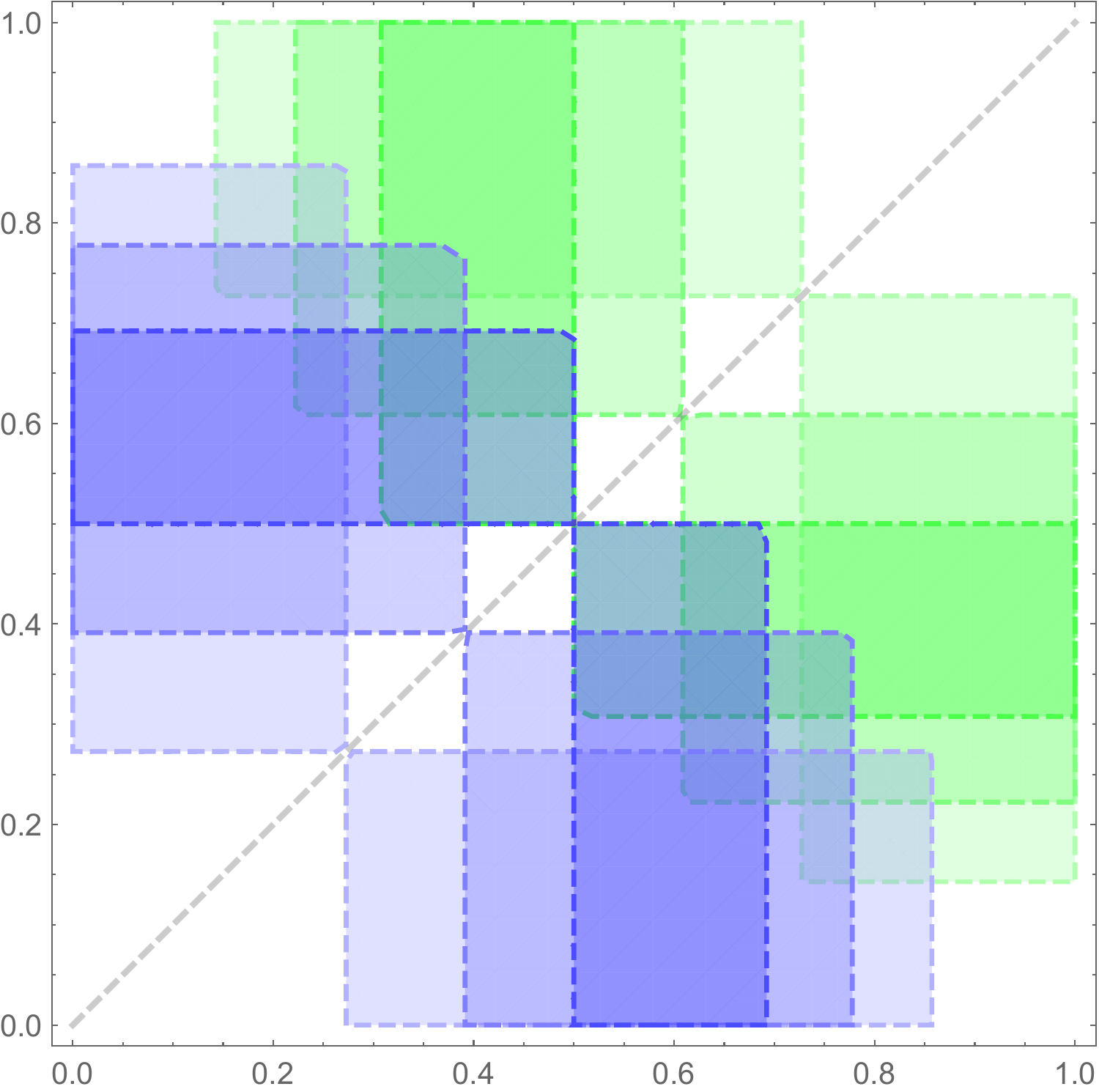}
  \caption{$\mathcal{B}^{ij}_{\alpha}(c)$ and $\mathcal{B}^{ji}_{\beta}(c)$}
  \label{fig:setsJK}
\end{subfigure}%
\begin{subfigure}{.45\textwidth}
  \centering
    \includegraphics[width=.9\linewidth]{Figures/setsJK.pdf}
  \caption{$\mathcal{V}^{ij}_{\alpha}$ and $\mathcal{V}^{ji}_{\beta}$}
  \label{fig:setsH}
\end{subfigure}
\caption{In both panels, the decision-makers' prior beliefs are represented on the axes. The left panel displays the sets $\mathcal{B}^{ij}_{\alpha}(c)$ and $\mathcal{B}^{ji}_{\beta}(c)$, while the right panel illustrates the sets $\mathcal{V}^{ij}_{\alpha}$ and $\mathcal{V}^{ji}_{\beta}$. Cases where $\sigma_1 = \alpha$ are shown in green, and cases where $\sigma_1 = \beta$ are in blue. In the left panel, a darker shade denotes a higher processing cost, with $c \in \{0, 0.1, 0.2\}$. The dashed 45-degree line indicates $p^i = p^j$. Parameters are set to $\theta_1 = \nicefrac{3}{5}$, $\theta_2 = \nicefrac{4}{5}$, and $\Delta U = 1$.}
\label{fig:sets}
\end{figure}

\subsection{Confirmatory belief patterns and reaction to information}\label{sec:behaviors}

This section examines the belief patterns of a single decision-maker. The model is grounded in standard Bayesian inference and rational choice, making it inappropriate to denote the decision-maker's optimal behavior or belief updating as \emph{biased}. However, an external observer might perceive them as ``\emph{as if}'' biased when assuming that the decision-maker fully knows and processes $\sigma$, thus neglecting the potential lack of incentives for information acquisition. Under this inaccurate lens, the decision-maker may make decisions and have beliefs that are reminiscent of several well-know biases in information processing, while adhering to the model's rational framework.

The next definition provides a formalization for attitudes that are akin to what some authors refer to as \emph{disconfirmation bias}. This bias is roughly described as ``a propensity to [...] accept confirming evidence at face value while scrutinizing disconfirming evidence hypercritically'' \citep[p.~2099]{lord} or when ``arguments incompatible with prior beliefs are scrutinized longer, subjected to more extensive refutational analyses'' \citep{edwards}. Definition~\ref{def:disconfirmatory} introduces the concept of \emph{disconfirmation}, which pertains not to belief patterns but to the incentives for and choices of information acquisition. %A decision-maker's \emph{tendency} for disconfirmation is defined by her relative willingness to pay for additional information, conditional on the signal's realization. In contrast, actual disconfirmation is determined by the decision-maker's choice to acquire additional information, also conditional on the signal's realization. %A decision-maker has a tendency for disconfirmation if her willingness to pay for additional information is higher after observing a signal that contradicts her prior belief. She exhibits disconfirmation if she chooses to scrutinize additional information only when the signal contradicts her prior belief.

\begin{definition}\label{def:disconfirmatory}
A decision-maker with prior $p>\nicefrac{1}{2}$ has a tendency for disconfirmation when $c_\beta(p)>c_\alpha(p)$, and exhibits disconfirmation when $a^*(\beta_1)=\rho \neq a^*(\alpha_1)=\neg\rho$. A decision-maker with prior $p<\nicefrac{1}{2}$ has a tendency for disconfirmation when $c_\alpha(p)>c_\beta(p)$, and exhibits disconfirmation when $a^*(\alpha_1)=\rho \neq a^*(\beta_1)=\neg\rho$.
\end{definition}

Decision-makers with non-extreme beliefs always exhibit a tendency for disconfirmation. As shown in Proposition~\ref{prop:cost_properties}, their willingness to pay for additional information is generally higher after encountering evidence that contradicts their prior beliefs. For example, if $p > \nicefrac{1}{2}$ and $p \in \neg \mathcal{E}_{\beta}$ -- indicating that the decision-maker's prior is non-extreme and favors $s = A$ -- then $c_\beta(p) > c_\alpha(p)$. This implies that the decision-maker’s willingness to pay for observing $\sigma_2$ is higher after receiving evidence supporting the state $s = B$ than after receiving a signal supporting $s = A$. Likewise, if $p < \nicefrac{1}{2}$ and $p \in \neg \mathcal{E}_\alpha$, then $c_\alpha(p) > c_\beta(p)$. The propensity to scrutinize information that contradicts one’s preconceptions is a direct implication of Bayesian updating and rational decision-making.\footnote{The decision-maker's willingness to pay for additional information at $p = \nicefrac{1}{2}$ is $c_\alpha(\nicefrac{1}{2}) = c_\beta(\nicefrac{1}{2}) = \max\{0, \Delta U[\theta_2 - \theta_1]\}$, which is positive if and only if $\theta_2 > \theta_1$.} Whether a decision-maker exhibits disconfirmation depends on the processing cost, which must fall within an intermediate range. The next proposition formalizes these results.

\begin{proposition}\label{prop:disconfirmation}
    A decision-maker with prior $p$ has a tendency for disconfirmation if and only if $p\in \neg\mathcal{E}\setminus\left\{\frac{1}{2}\right\}$. A decision-maker with prior $p>\nicefrac{1}{2}$ (resp. $p<\nicefrac{1}{2}$) exhibits disconfirmation if and only if the processing cost is such that $c_\beta(p)>c>c_\alpha(p)$ (resp. $c_\alpha(p)>c>c_\beta(p)$). This is possible only if $p\in\neg\mathcal{E}_\beta$ (resp. $p\in\neg\mathcal{E}_\alpha$). %When $\theta_2>\theta_1$, the same holds true with the exception of $p\in\left(\frac{\theta_1 (1-\theta_2)}{\theta_1 + \theta_2 - 2 \theta_1 \theta_2},\frac{ \theta_2 (1-\theta_1)}{\theta_1 + \theta_2 - 2\theta_1 \theta_2} \right)\setminus\{\frac{1}{2}\}$. For prior beliefs in that set, disconfirmatory behavior occurs if and only if $ c\in \left( \min\{ c_\alpha(p),c_\beta(p) \}, \max\{ c_\alpha(p),c_\beta(p) \} \right)$.
\end{proposition}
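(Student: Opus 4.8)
The plan is to reduce the whole statement to a single inequality, $c_\beta(p)\gtrless c_\alpha(p)$, and to exploit the symmetry $c_\alpha(p)=c_\beta(1-p)$ from Proposition~\ref{prop:cost_properties}. This symmetry makes $p\mapsto c_\beta(p)-c_\alpha(p)$ odd about $p=1/2$, so it suffices to prove the tendency statement for $p>1/2$ and obtain $p<1/2$ by reflection (interchanging $\alpha\leftrightarrow\beta$ and $\neg\mathcal{E}_\beta\leftrightarrow\neg\mathcal{E}_\alpha$); the point $p=1/2$ is excluded because there $c_\alpha=c_\beta$ and the tendency inequality is strict. The engine is the observation that the willingness to pay $c_{\sigma_1}(p)$ is just the value of being able to condition the guess $s$ on $\sigma_2$, and this value is a function $V$ of the interim belief alone: $c_\alpha(p)=V(p_\alpha)$ and $c_\beta(p)=V(p_\beta)$ with the \emph{same} $V$. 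From Proposition~\ref{prop:cost_properties} (branch monotonicities, the global maxima located at interim belief $1/2$, and the symmetry), $V$ is strictly single-peaked at interim belief $1/2$, symmetric, and strictly positive exactly on the symmetric interval $(1-\theta_2,\theta_2)$, where $\sigma_2$ can overturn the interim-optimal guess.

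The crux is to compare how far the two interim beliefs sit from $1/2$. Working in odds, set $\rho_0=p/(1-p)$ and $L=\theta_1/(1-\theta_1)>1$; Bayes' rule gives $p_\alpha$ the odds $L\rho_0$ and $p_\beta$ the odds $\rho_0/L$. For $p>1/2$ one has $\rho_0>1$, and a one-line check ($\rho_0/L<\rho_0<L\rho_0$ if $\rho_0\ge L$, and $L/\rho_0<L\rho_0$ since $\rho_0>1$ otherwise) shows $\rho_0/L$ is multiplicatively closer to $1$ than $L\rho_0$, i.e. $|p_\beta-1/2|<|p_\alpha-1/2|$. This short odds computation is the only genuine obstacle; the remaining care is the representation $c_{\sigma_1}(p)=V(p_{\sigma_1})$ and bookkeeping. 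Single-peakedness and symmetry of $V$ then give two things at once: if $V(p_\beta)>0$ then $V(p_\beta)>V(p_\alpha)$, i.e. $c_\beta(p)>c_\alpha(p)$; and since $(1-\theta_2,\theta_2)$ is symmetric about $1/2$, $V(p_\alpha)>0$ forces $V(p_\beta)>0$, so on $(1/2,1]$ we have $p\in\neg\mathcal{E}_\alpha\Rightarrow p\in\neg\mathcal{E}_\beta$ and therefore $\neg\mathcal{E}\cap(1/2,1]=\neg\mathcal{E}_\beta\cap(1/2,1]$.

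Assembling these for $p>1/2$: the tendency $c_\beta(p)>c_\alpha(p)$ holds if and only if $c_\beta(p)>0$. The ``if'' is the first consequence above (when $c_\beta(p)>0$ we get $V(p_\beta)>V(p_\alpha)=c_\alpha(p)$, whether or not $c_\alpha(p)=0$); the ``only if'' is that $c_\beta(p)=0$ forces $c_\alpha(p)=0$, yielding equality. Since $c_\beta(p)>0$ is equivalent to $p\in\neg\mathcal{E}_\beta$, and to $p\in\neg\mathcal{E}$ on $(1/2,1]$, the tendency holds exactly on $\neg\mathcal{E}\cap(1/2,1]$; reflecting via the symmetry covers $p<1/2$ and excludes $p=1/2$, giving precisely $p\in\neg\mathcal{E}\setminus\{1/2\}$.

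For the behavioral (exhibiting) statement, recall that after $\sigma_1$ the optimal acquisition is $a^*(\sigma_1)=\rho$ exactly when the cost is admissible ($c\le c_{\sigma_1}(p)$) and $a^*(\sigma_1)=\neg\rho$ when $c>c_{\sigma_1}(p)$. For $p>1/2$, exhibiting disconfirmation means $a^*(\beta_1)=\rho$ and $a^*(\alpha_1)=\neg\rho$, i.e. $c$ is admissible after the disconfirming signal but not after the confirming one; this is exactly $c_\alpha(p)<c<c_\beta(p)$ (up to the indifference endpoint $c=c_\beta(p)$). Such a $c$ exists if and only if $c_\beta(p)>c_\alpha(p)$, which by the tendency statement requires $p\in\neg\mathcal{E}\setminus\{1/2\}$ and in particular forces $c_\beta(p)>c_\alpha(p)\ge 0$, hence $c_\beta(p)>0$, i.e. $p\in\neg\mathcal{E}_\beta$ — the stated necessary condition. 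The case $p<1/2$ follows once more from $c_\alpha(p)=c_\beta(1-p)$, interchanging $\alpha$ with $\beta$ and $\neg\mathcal{E}_\beta$ with $\neg\mathcal{E}_\alpha$.
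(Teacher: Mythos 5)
Your proposal is correct, but it reaches the result by a genuinely different route than the paper. The paper's proof works directly with the piecewise characterization of $c_{\sigma_1}(p)$ from Propositions~\ref{prop:cost_function} and~\ref{prop:cost_properties}: it compares the explicit branches ($c^3_\alpha$ versus $c^6_\beta$, $c^2_\alpha$ versus $c^7_\beta$) together with the closed-form endpoints of $\neg\mathcal{E}_\alpha$ and $\neg\mathcal{E}_\beta$, and then treats separately the configurations where these sets are disjoint ($\theta_1\geq\theta_2$) or overlap ($\theta_2>\theta_1$), even computing the overlap interval explicitly. You instead factor the willingness to pay through the interim belief, $c_{\sigma_1}(p)=V(p_{\sigma_1})$, observe that $V$ is symmetric and strictly single-peaked at $1/2$ and positive exactly on $(1-\theta_2,\theta_2)$ (indeed $V(q)=\Delta U\max\left\{0,\left(\theta_2-\tfrac{1}{2}\right)-\left|q-\tfrac{1}{2}\right|\right\}$), and then prove via the odds computation that for $p>\nicefrac{1}{2}$ the disconfirming signal's interim belief is strictly closer to $\nicefrac{1}{2}$ than the confirming one's. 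This buys two things the paper leaves implicit: a conceptual explanation of the tendency (disconfirming evidence pushes the interim belief toward maximal uncertainty, where information is most valuable), and a clean proof of the inclusion $\neg\mathcal{E}_\alpha\cap\left(\nicefrac{1}{2},1\right]\subseteq\neg\mathcal{E}_\beta$, which is exactly what is needed to upgrade ``tendency iff $p\in\neg\mathcal{E}_\beta$'' to ``tendency iff $p\in\neg\mathcal{E}$'' on that half of the belief space; the paper handles this point only implicitly through the geometry of the cost functions. Conversely, the paper's branch-by-branch route yields quantitative by-products (the exact set $\neg\mathcal{E}_\alpha\cap\neg\mathcal{E}_\beta$ when $\theta_2>\theta_1$) that your argument does not produce. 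Two small caveats, neither a gap: your attribution of the properties of $V$ to Proposition~\ref{prop:cost_properties} is slightly loose, since that proposition concerns $c_{\sigma_1}$ as a function of the prior, but the translation is immediate because $p\mapsto p_{\sigma_1}$ is a strictly increasing bijection mapping the peak $p=1-\theta_1$ (resp. $p=\theta_1$) to interim belief $\nicefrac{1}{2}$; and the indifference endpoint $c=c_\beta(p)$, which you flag, is resolved by the paper the same way you resolve it, via the open interval in the statement.
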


%The above proposition highlights how disconfirmation is consistent with standard Bayesian inference and rational decision-making. Naturally, decision-makers with extreme prior beliefs (i.e., $p \in \mathcal{E}$) cannot exhibit a tendency for disconfirmation, as they are never willing to acquire additional information, regardless of its associated cost.

\begin{comment}
    For the sake of easiness in
the following steps, we also define the prior probability $\hat{p}$ as follows,\FN{dispense from this $\hat p$} 
\[
 \hat{p} :=
  \begin{cases}
   p    & \text{if } p \geq \frac{1}{2}, \\
   1-p   & \text{otherwise.}
  \end{cases}
\]
Bayesian posteriors $\hat{p}(\sigma)$ and $\hat{p}_{\sigma}$ are
defined similarly as in Section \ref{sec:model}.

\begin{definition}[Confirmation bias]\label{def:confirmation}
Is the tendency to look for or interpret information in a way that
confirms one's prior beliefs or preconceptions.
\end{definition}
\end{comment}

The following definition formalizes belief patterns that resemble what some authors describe as confirmation bias. Broadly, confirmation bias refers to the tendency to process information in a manner that reinforces one's existing beliefs or hypotheses. As \cite{nickerson} describes, it involves ``interpreting evidence in ways that are partial to existing beliefs.'' Definition~\ref{def:cb} introduces the concepts of \emph{confirmatory} and \emph{disproving} belief patterns. While confirmatory patterns may resemble attitudes associated with confirmation bias, disproving patterns are introduced to complement the analysis.

\begin{definition}\label{def:cb}
A decision-maker with prior beliefs $p>\nicefrac{1}{2}$ (resp. $p<\nicefrac{1}{2}$) exhibits confirmatory belief patterns (CB) when ${p}_{\sigma}<p<{p}(\sigma)$ (resp. ${p}(\sigma)<p<{p}_{\sigma}$), and exhibits disproving belief patterns (DB) when
${p}(\sigma)<p<{p}_{\sigma}$ (resp. ${p}_{\sigma}<p<{p}(\sigma)$). 
\end{definition}

When the cost of processing additional information, $c$, is prohibitively high, the decision-maker never observes the signal's second component. In this scenario, both confirmatory and disproving belief patterns can arise. Conversely, when $c$ is very low, the decision-maker always scrutinizes the second component, leaving no room for either pattern to occur. However, within a moderate range of $c$, the decision-maker observes the second component selectively---only when the first component presents contradictory evidence. In this range, confirmatory belief patterns can emerge, while disproving patterns cannot. Interestingly, this is also the range where the decision-maker has a tendency for disconfirmation. 

When confirmatory evidence discourages further scrutiny, the decision-maker remains unaware of the second component's realization. If the second component, when unobserved, is both contradictory and more informative, the decision-maker exhibits confirmatory belief patterns. This outcome underscores an intriguing connection: confirmatory belief patterns often coexist with a tendency for disconfirmation. The next proposition provides necessary and sufficient conditions for confirmatory and disproving belief patterns.

\begin{proposition}\label{prop:confirmation}
Necessary and sufficient conditions for confirmatory and disproving belief patterns in a decision-maker with prior $p$, are
\begin{itemize}[noitemsep]
    \item a relatively more informative second component of the signal, i.e., $\theta_2 > \theta_1$;
    \item a sufficiently high processing cost, i.e., $p \notin \mathcal{H}_{\sigma_1}(c)$ or, equivalently, $c > c_{\sigma_1}(p)$;
    \item for CB: if $p > \nicefrac{1}{2}$, then $\sigma = (\alpha, \beta)$; if $p < \nicefrac{1}{2}$, then $\sigma = (\beta, \alpha)$;
    \item for DB: if $p > \nicefrac{1}{2}$, then $\sigma = (\beta, \alpha)$; if $p < \nicefrac{1}{2}$, then $\sigma = (\alpha, \beta)$.
\end{itemize}
\end{proposition}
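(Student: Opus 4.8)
The plan is to translate Definition~\ref{def:cb} into statements about the \emph{direction} of belief updates, which are most transparent in terms of likelihood (odds) ratios, and then to read off the three bullets as equivalent conditions. Since the set-up is symmetric and the cost function satisfies $c_\alpha(p)=c_\beta(1-p)$ (Proposition~\ref{prop:cost_properties}), I would establish everything for a decision-maker with $p>\nicefrac{1}{2}$ and obtain the $p<\nicefrac{1}{2}$ statements by relabelling states and signal values. Writing the posterior odds for $A$ as the prior odds times the product of the likelihood ratios of the observed components, observing $\alpha_1$ multiplies the odds by $\theta_1/(1-\theta_1)>1$ while $\beta_2$ multiplies them by $(1-\theta_2)/\theta_2<1$. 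Hence $p_\alpha>p>p_\beta$ always, whereas the full signal $(\alpha,\beta)$ gives $p_{\alpha\beta}\lessgtr p$ according to whether $\frac{\theta_1}{1-\theta_1}\cdot\frac{1-\theta_2}{\theta_2}$ is below or above one; a one-line computation shows this factor is below one exactly when $\theta_2>\theta_1$, and symmetrically $(\beta,\alpha)$ raises the odds above the prior exactly when $\theta_2>\theta_1$. This pins down the role of the first bullet.

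Next I would link the pattern to the acquisition choice. By Definition~\ref{def:cb}, CB for $p>\nicefrac{1}{2}$ requires $p_\sigma<p<p(\sigma)$; since $p(\sigma)\in\{p_{\sigma_1},p_\sigma\}$, the strict chain with $p(\sigma)\ne p_\sigma$ forces $p(\sigma)=p_{\sigma_1}$, i.e. $a^*=\neg\rho$. By the willingness-to-pay characterization (Propositions~\ref{prop:cost_function}--\ref{prop:cost_properties}), $a^*=\neg\rho$ is optimal precisely when $c>c_{\sigma_1}(p)$, equivalently $p\notin\mathcal{H}_{\sigma_1}(c)$, which is the second bullet. With $p(\sigma)=p_{\sigma_1}$, CB reduces to the pair $p_{\sigma_1}>p$ and $p_\sigma<p$: the first selects $\sigma_1=\alpha$, and, given $\sigma_1=\alpha$, ruling out $(\alpha,\alpha)$ (which raises the odds) forces $\sigma_2=\beta$ together with $\theta_2>\theta_1$, so that the unseen, more informative component reverses the direction. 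This yields $\sigma=(\alpha,\beta)$ and recovers the third bullet. The DB argument is identical with inequalities reversed: $p(\sigma)=p_{\sigma_1}<p$ forces $\sigma_1=\beta$, and $p_\sigma>p$ forces $\sigma_2=\alpha$ with $\theta_2>\theta_1$, giving $\sigma=(\beta,\alpha)$.

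For the two directions of the equivalence, sufficiency is immediate: the three conditions fix $\sigma$, force $a^*=\neg\rho$ so that $p(\sigma)=p_{\sigma_1}$, and the odds-ratio signs then deliver $p_\sigma<p<p(\sigma)$ for CB (and the reversed chain for DB). Necessity follows by reading the same equivalences backwards: the separation $p_\sigma\ne p(\sigma)$ forces $a^*=\neg\rho$ and hence $c>c_{\sigma_1}(p)$; the sign of $p_{\sigma_1}-p$ fixes $\sigma_1$; and the sign of $p_\sigma-p$, given $\sigma_1$, forces both the value of $\sigma_2$ and the strict inequality $\theta_2>\theta_1$. The $p<\nicefrac{1}{2}$ statements then follow by the symmetry noted at the outset.

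I expect the only delicate point to be the bookkeeping of strict inequalities: one must verify that $\theta_2>\theta_1$ is needed rather than $\theta_2\geq\theta_1$ (at $\theta_2=\theta_1$ the full signal would leave the belief at $p$, breaking the required strict gap, and at $\theta_2<\theta_1$ it would move the wrong way), and that the acquisition choice is genuinely $\neg\rho$ rather than $\rho$, since paying to observe $\sigma_2$ would give $p(\sigma)=p_\sigma$ and leave no room between the perceived and full-information posteriors. Everything else is a direct application of the likelihood-ratio representation and the cost-function characterization already established in Section~\ref{sec:incentives}.
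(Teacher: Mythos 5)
Your proposal is correct and follows essentially the same route as the paper's proof: force $a^*=\neg\rho$ from the strict separation $p(\sigma)\neq p_{\sigma}$, pin down the signal realization and $\theta_2>\theta_1$ from the direction of the interim versus full-information posterior relative to the prior, identify the cost condition $c>c_{\sigma_1}(p)$, and then observe that these conditions are also sufficient, with the $p<\nicefrac{1}{2}$ case handled by symmetry. The only cosmetic difference is that you derive the update directions via the likelihood-ratio (odds) representation, whereas the paper enumerates the four realizations of $\sigma$ and states the resulting posterior orderings directly.
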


The next definition introduces belief patterns describing individuals’ reactions to information, as seen in how they update beliefs based on new evidence. Underreaction occurs when an individual adjusts their beliefs less than expected given the evidence. Overreaction, happens when beliefs shift more than warranted by the evidence. ``The experimental evidence on inference taken as a whole suggests that [...] people generally underinfer rather than overinfer'' \citep{benjamin2019errors}.

In this rational Bayesian model, decision-makers cannot inherently under- or overreact when processing all available information, as their belief updates follow Bayes' rule. However, underreaction or overreaction can arise relative to the full information set if decision-makers only acquire part of the available information---specifically, when they observe only the first component of the signal, $\sigma_1$, but not the second, $\sigma_2$. This partial information processing leads to deviations in belief updating compared to the scenario where the complete signal $\sigma = (\sigma_1, \sigma_2)$ is observed. The following definition formalizes the notions of overreaction and underreaction with respect to $\sigma$.

\begin{definition}\label{def:or_ur}
    A decision-maker underreacts to $\sigma$ (UR) if either $p < {p}(\sigma) < p_{\sigma}$ or $p_{\sigma} < p(\sigma) < p$; she overreacts to $\sigma$ (OR) if either $p < p_{\sigma} < p(\sigma)$ or $p(\sigma) < p_{\sigma} < p$.
\end{definition}

A decision-maker can under- or overreact to $\sigma$ only if they choose not to observe the signal's second component. This outcome requires the processing cost to be sufficiently high to discourage the acquisition and processing of additional information. In such cases, the direction of the reaction depends on whether the components of the signal support the same state or not. Furthermore, overreaction to $\sigma$ is possible only if the second component of the signal is less informative than the first. The following proposition formalizes these results, and concludes the analysis of belief patterns.

\begin{proposition}\label{prop:reaction}
    A decision-maker underreacts to $\sigma$ if and only if $\sigma_1 = \sigma_2$ and $c > c_{\sigma_1}(p)$. A decision-maker overreacts to $\sigma$ if and only if $\sigma_1 \neq \sigma_2$, $c > c_{\sigma_1}(p)$, and $\theta_2 < \theta_1$.
\end{proposition}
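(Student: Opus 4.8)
The plan is to reduce the statement to two elementary ingredients: (i) the decision-maker's \emph{held} posterior $p(\sigma)$ differs from the full-information posterior $p_\sigma$ exactly when she declines to acquire $\sigma_2$, and (ii) the ordering of the three numbers $p$, $p_{\sigma_1}$, $p_\sigma$ is pinned down by the monotonicity of Bayesian updating together with the sign of $\theta_1-\theta_2$.

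First I would establish the common necessary condition. By Definition~\ref{def:or_ur}, both UR and OR are stated through strict inequalities that force $p(\sigma)\neq p_\sigma$. Since $p(\sigma)=p_\sigma$ whenever $a^*=\rho$, it is necessary that $a^*=\neg\rho$, hence $p(\sigma)=p_{\sigma_1}$. By Proposition~\ref{prop:cost_function} (equivalently, $p\notin\mathcal{H}_{\sigma_1}(c)$), declining to acquire $\sigma_2$ is optimal precisely when $c>c_{\sigma_1}(p)$. This delivers the condition $c>c_{\sigma_1}(p)$ for both phenomena and lets me replace $p(\sigma)$ by $p_{\sigma_1}$ everywhere in the sign analysis.

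Second, I carry out that sign analysis. The key monotonicity fact is that an $\alpha$-realization raises the posterior on $A$ while a $\beta$-realization lowers it. Hence when $\sigma_1=\sigma_2$ both updates push the same way, giving $p<p_{\sigma_1}<p_\sigma$ (both $\alpha$) or $p_\sigma<p_{\sigma_1}<p$ (both $\beta$); in either case $p(\sigma)=p_{\sigma_1}$ lies strictly between $p$ and $p_\sigma$, which is exactly underreaction, and overreaction is impossible. When $\sigma_1\neq\sigma_2$ the second update reverses the first, so $p_{\sigma_1}$ and $p_\sigma$ sit on opposite sides of $p_{\sigma_1}$; here underreaction cannot occur, and overreaction — which requires $p_\sigma$ sandwiched between $p$ and $p(\sigma)=p_{\sigma_1}$ — occurs exactly when $p_\sigma$ remains on the same side of $p$ as $p_{\sigma_1}$.

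Third, I reduce this last condition to a comparison of signal precisions. Writing posterior odds as prior odds times the product of the two likelihood ratios, the mixed realization $\sigma=(\alpha,\beta)$ contributes the factor $\tfrac{\theta_1}{1-\theta_1}\cdot\tfrac{1-\theta_2}{\theta_2}$, which exceeds $1$ — i.e. $p_\sigma>p$, placing $p_\sigma$ between $p$ and $p_\alpha$ and yielding OR — if and only if $\theta_1>\theta_2$; the case $\sigma=(\beta,\alpha)$ is symmetric. Collecting the cases gives UR iff $\sigma_1=\sigma_2$ and $c>c_{\sigma_1}(p)$, and OR iff $\sigma_1\neq\sigma_2$, $c>c_{\sigma_1}(p)$, and $\theta_2<\theta_1$. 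The only delicate point is the bookkeeping of strict inequalities at the boundary $\theta_1=\theta_2$ (where $p_\sigma=p$ and neither pattern arises) and checking that the mixed case truly never produces UR; beyond that, the argument is the routine likelihood-ratio computation.
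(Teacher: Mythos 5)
Your proof is correct and follows essentially the same route as the paper's: necessity of declining acquisition (equivalently $c > c_{\sigma_1}(p)$, so $p(\sigma)=p_{\sigma_1}$), a case split on $\sigma_1=\sigma_2$ versus $\sigma_1\neq\sigma_2$, and the likelihood-ratio comparison showing $p_{\alpha\beta}>p$ iff $\theta_1>\theta_2$ (a fact the paper merely asserts). The only blemish is a slip of wording in your second step --- when $\sigma_1\neq\sigma_2$ the point is that $p$ and $p_\sigma$ lie on the \emph{same} side of $p_{\sigma_1}$, so $p_{\sigma_1}$ cannot be sandwiched between them, rather than ``opposite sides of $p_{\sigma_1}$'' --- but the intended argument is clear and sound.
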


\section{Concluding remarks}

This paper presents a model of costly information acquisition that explores how decision-makers process and act on information under rational Bayesian principles. By introducing a framework where individuals can choose between superficial and precise scrutiny of evidence, the analysis highlights how belief patterns such as polarization, confirmation bias, and underreaction to information may seem to emerge, even among rational agents. These patterns are not indicative of cognitive biases but are instead driven by differing incentives for information acquisition. When these incentives are neglected, the decision-makers' optimal behavior is reminiscent of several biases in judgment and decision-making investigated in both psychology and economics. The findings suggest that phenomena often attributed to biases in human information processing may not stem from flaws in the decision-makers themselves but rather from observers neglecting to account for the role of incentives in information acquisition.

In this framework, belief polarization arises when individuals with different prior beliefs make distinct choices about acquiring additional information. While some choose to scrutinize evidence fully, others opt for partial processing due to cost constraints. These differing courses of action, combined with the structure of the signal, may lead to diverging beliefs, even when individuals share access to the same evidence. Importantly, polarization is shown to occur only under specific conditions, such as when one decision-maker observes only part of the signal and when the signal's components provide conflicting indications. A theoretical bound is provided: within this model, belief polarization should not be expected to occur more often than not. 

Beyond polarization, the model also sheds light on individual belief patterns. Decision-makers may exhibit behaviors reminiscent of confirmation bias or underreaction to information when they process evidence selectively. However, these outcomes are fully consistent with rational decision-making in the presence of processing costs. This rational framework provides a nuanced perspective, suggesting that behaviors often attributed to cognitive biases can result from optimal responses to constraints.

These results have significant implications for understanding information processing and belief formation in economics and psychology. The model offers testable predictions about how belief patterns vary with prior beliefs, the informativeness of signals, and the cost of processing information. Future research can build on these insights to further investigate the interplay between incentives, information structures, and belief dynamics in various decision-making contexts. An empirical investigation may prove fruitful in testing these implications and assessing the extent to which individuals deviate from a benchmark that explicitly accounts for decision-makers' incentives in information processing.

\clearpage

%\appendix
\appendix

\section{Appendix}

\subsection{Bayesian updating}\label{sec:posteriors}

Consider a DM with prior $p \in [0,1]$. The posterior beliefs after observing $\sigma_1= \alpha$ are
\begin{equation*}
p_{\alpha}=P(A\mid \alpha_1) = \frac{\theta_1 p}{\theta_1 p + (1- \theta_1)(1-p)}.
\end{equation*}
Conversely, the posterior beliefs after observing $\sigma_1= \beta$ are
\begin{equation*}
p_{\beta}=P(A\mid \beta_1) = \frac{(1 - \theta_1) p}{(1- \theta_1) p +
\theta_1 (1-p)}.
\end{equation*}
Suppose that the DM observes also the realization of $\tilde\sigma_2$. The posterior beliefs after observing $\sigma=(\sigma_1,\sigma_2)$ are
\begin{equation*}
p_{\alpha \alpha}=P\left(A\mid \alpha_1, \alpha_2\right) = \frac{\theta_1 \theta_2 p}{\theta_1 \theta_2 p + (1- \theta_1)(1-\theta_2)(1-p)},
\end{equation*}
\begin{equation*}
p_{\alpha \beta}=P\left(A\mid \alpha_1, \beta_2\right) = \frac{\theta_1 (1-\theta_2)p}{\theta_1 (1- \theta_2) p +
(1- \theta_1) \theta_2(1-p)},
\end{equation*}
\begin{equation*}
p_{\beta \beta}=P\left(A\mid \beta_1, \beta_2\right) = \frac{(1- \theta_1) (1- \theta_2) p}{(1- \theta_1) (1- \theta_2) p +
\theta_1 \theta_2(1-p)},
\end{equation*}
\begin{equation*}
p_{\beta \alpha}=P\left(A\mid \beta_1, \alpha_2\right) = \frac{(1-\theta_1) \theta_2 p}{(1 - \theta_1) \theta_2 p + \theta_1
  (1- \theta_2)(1-p)}.
\end{equation*}
\begin{comment}
It is sometimes useful to consider the posterior probability of $B$
being the true state:
\begin{displaymath}
(1-p_{\alpha})= P(B\mid \alpha_1)=\frac{(1-\theta_1)(1- p)}{\theta_1 p +
(1- \theta_1)(1-p)}
\end{displaymath}
\begin{displaymath}
(1-p_{\beta})= P(B\mid \beta_1)=\frac{\theta_1 (1- p)}{(1- \theta_1) p +
\theta_1 (1-p)}
\end{displaymath}
\end{comment}

The prior can also be represented as a function of the posteriors, and similarly, the posteriors after observing $\sigma_1$ (i.e., $p_{\sigma_1}$) can be expressed as a function of the posteriors after observing $\sigma_2$ (i.e., $p_{\sigma}$). This is because the prior is a convex combination of the posteriors. Within this alternative representation, we have
\begin{eqnarray*}
p & = & p_{\alpha} P(\sigma_1= \alpha)+p_{\beta} P(\sigma_1 = \beta)
 \\
&=& p_{\alpha} [P(\alpha_1 \mid  A) P(A) + P(\alpha_1 \mid 
B)P(B)] \\
 & & + \: p_{\beta}
[P(\beta_1 \mid  A) P(A) + P(\beta_1 \mid  B)P(B)].
\end{eqnarray*}
Likewise, the interim posteriors $p_{\alpha}$ and $p_{\beta}$ can be represented as
convex combinations of posteriors $p_{\sigma}$,
\begin{eqnarray}
p_{\alpha} & = & p_{\alpha \alpha} P(\alpha_2 \mid \alpha_1)+p_{\alpha \beta} P(\beta_2 \mid  \alpha_1),
\label{palfa}  \\
p_{\beta} & = & p_{\beta \alpha} P(\alpha_2 \mid \beta_1)+p_{\beta \beta} P(\beta_2 \mid  \beta_1).
\label{pbeta}  
\end{eqnarray}
\begin{comment}
Combining the above equalities, prior beliefs $p$ can be rewritten as 
\begin{eqnarray*}
p & = & [ p_{\alpha \alpha} P(\alpha_2 \mid  \alpha_1)+p_{\alpha \beta} P(\beta_2 \mid  \alpha_1)] P(\alpha_1)\nonumber
\\
& & + \: [p_{\beta \alpha} P(\alpha_2 \mid  \beta_1)+p_{\beta \beta} P(\beta_2 \mid \beta_1)] P(\beta_1),
\end{eqnarray*}
\end{comment}
The unconditional signal probability distributions $P(\alpha_1)$ and $P(\beta_1)$ are
\[
P(\alpha_1)= p\theta_1 + (1-p)(1-\theta_1),
\]
\[
P(\beta_1)=p(1-\theta_1)+(1-p)\theta_1,
\]
\begin{comment}
Finally, by substituting the above in $p_{\alpha} P(\alpha_1)+p_{\beta} P(\beta_1)$ we get
\[
p= \frac{(1-\theta_1)p_\alpha + \theta_1p_\beta}{1-(2\theta_1 - 1)(p_\alpha-p_\beta)}.
\]
\end{comment}
whereas the conditional signal probability distributions are
\begin{equation}\label{eq:cond_alpha}
P(\alpha_2 \mid  \sigma_1) = p_{\sigma_1}\theta_2 + (1-p_{\sigma_1})(1-\theta_2),
\end{equation}
\begin{equation}\label{eq:cond_beta}
P(\beta_2 \mid  \sigma_1) = p_{\sigma_1}(1-\theta_2) + (1-p_{\sigma_1})\theta_2.
\end{equation}

\subsection{Prior categorization and proof of Proposition~\ref{prop:cost_function}}\label{sec:cases}

This section offers a rationale for the categorization outlined in Section~\ref{sec:differentcases} and provides a proof of Proposition~\ref{prop:cost_function}, which formally defines the cost function.

\subsubsection{Cases 1 to 4}

\subsubsection*{Case 1: $p_{\alpha} \geq \frac{1}{2}$, $p_{\alpha \alpha} \geq
  \frac{1}{2}$, $p_{\alpha \beta} \geq \frac{1}{2}$}

The condition $p_{\alpha \beta} \geq \nicefrac{1}{2}$ is necessary for a decision-maker who has observed $\sigma_1 = \alpha$ to belong to case~1, and it implies that $p_{\alpha \alpha}$ is strictly greater than $\nicefrac{1}{2}$. By applying Bayes' rule, we find that $p_{\alpha \beta} \geq \nicefrac{1}{2}$ implies $p \geq \frac{\theta_2 (1 - \theta_1)}{\theta_1 + \theta_2 - 2 \theta_1 \theta_2}$. We further require $p_\alpha \geq \nicefrac{1}{2}$, which implies $p>1-\theta_1$. This last condition is always satisfied when $p_{\alpha\beta}\geq \nicefrac{1}{2}$. It follows that a decision-maker belongs to case~1 only if her prior beliefs satisfy\footnote{The threshold $\frac{\theta_2 (1- \theta_1)}{\theta_1 + \theta_2 - 2 \theta_1  \theta_2} \in \left[ \frac{1}{2}, \theta_1 \right]$ when $\theta_2 \leq\frac{\theta_{1}^{2}}{1-2\theta_1 + 2\theta_{1}^{2}}$, and $\frac{\theta_2 (1- \theta_1)}{\theta_1 + \theta_2 - 2 \theta_1  \theta_2} \in (
\theta_1 ,\theta_2 ]$ otherwise.}
\begin{displaymath}
p \in \left[\frac{\theta_2 (1- \theta_1)}{\theta_1 + \theta_2 - 2 \theta_1  \theta_2},1 \right].
\end{displaymath}

After observing $\sigma_1 = \alpha$, the decision-maker has posterior beliefs $p_\alpha>\nicefrac{1}{2}$. Suppose that she takes a decision without observing the signal's second component, $\sigma_2$. Because the posterior exceeds $\nicefrac{1}{2}$, her optimal choice is $s^*=A$. This course of action grants the decision-maker an expected utility of%\footnote{Recall that $s^*$ depends on $a^*$. In this case, $a^*=\neg\rho$, and thus $s^*=A$ is correctly defined.} 
\begin{equation}\label{eq:case1_nopay}
\mathbb{E}\left[u\left(\neg\rho,A,\omega\right)\mid \alpha_1\right]= p_{\alpha}\overline{U} + (1-p_{\alpha}) \underline{U}.
\end{equation}
Suppose now that the decision-makers incurs the processing cost to scrutinize $\sigma_2$ before making a decision. By doing so, she anticipates that her posterior beliefs will be $p_{\alpha \sigma_2}\geq\nicefrac{1}{2}$ with probability $P(\sigma_2 \mid  \alpha_1)$. This course of action grants her an expected utility of
\begin{eqnarray}\label{eq:case1_pay}
\mathbb{E}\left[u\left(\rho,A,\omega\right)\mid \alpha_1, \tilde\sigma_2\right] &=&  \left[ P(\alpha_2 \mid  \alpha_1)p_{\alpha \alpha} + P(\beta_2 \mid  \alpha_1)p_{\alpha \beta} \right]\overline{U} \nonumber \\
 & & + \:  \left\{
  P(\alpha_2 \mid  \alpha_1)[1- p_{\alpha \alpha}] + P(\beta_2 \mid 
  \alpha_1)[1- p_{\alpha \beta}] \right\}\underline{U} -c.
\end{eqnarray}

The decision-maker is willing to incur the processing cost if and only if \eqref{eq:case1_pay} is greater than \eqref{eq:case1_nopay}. The condition on processing costs is
\begin{eqnarray}\label{eq:case1_cost}
c & \leq &  [
  P(\alpha_2 \mid  \alpha_1)p_{\alpha \alpha} + P(\beta_2 \mid 
  \alpha_1)p_{\alpha \beta} - p_{\alpha} ]\overline{U} \nonumber \\
 & & + \: \{
  P(\alpha_2 \mid  \alpha_1)[1- p_{\alpha \alpha}] + P(\beta_2 \mid 
  \alpha_1)(1- p_{\alpha \beta}) - [1-p_{\alpha}] \}\underline{U}.
\end{eqnarray}
The posterior $p_\alpha$ can be represented as a convex combination of $p_{\alpha \alpha}$ and $p_{\alpha \beta}$, that is,
\begin{equation}\label{eq:case1_convex}
p_{\alpha}  =  p_{\alpha \alpha} P(\alpha_2 \mid  
\alpha_1)+p_{\alpha \beta} P(\beta_2 \mid  \alpha_1). 
\end{equation}
By replacing \eqref{eq:case1_convex} in \eqref{eq:case1_cost}, we obtain
\begin{eqnarray}\label{eq:case1_condition}
c & \leq & \overline{U} \{
  P(\alpha_2 \mid  \alpha_1)p_{\alpha \alpha} + P(\beta_2 \mid 
  \alpha_1)p_{\alpha \beta} - [ P(\alpha_2 \mid  \alpha_1) p_{\alpha
    \alpha}+ P(\beta_2 \mid  \alpha_1) p_{\alpha \beta}] \} \nonumber \\ 
 & & + \: \underline{U} \{
  P(\alpha_2 \mid  \alpha_1)[1- p_{\alpha \alpha}] + P(\beta_2 \mid 
  \alpha_1)[1- p_{\alpha \beta}] \nonumber \\
& & - \: 1+ P(\alpha_2 \mid 
  \alpha_1) p_{\alpha \alpha}+ P(\beta_2 \mid  \alpha_1) p_{\alpha \beta} \} = 0, 
\end{eqnarray}
which is never satisfied because it is assumed that $c>0$. As a result, decision-makers belonging to case~1 never incur a cost to scrutinize the signal $\sigma$ as a whole. This outcome is straightforward given the premises because, in this case, no realization of $\tilde\sigma_2$ can alter the decision-maker's optimal action $s^*$. Nevertheless, the above procedure---first identifying the prior interval, and then deriving the condition on costs, as in \eqref{eq:case1_condition},---is instructive for understanding the process used in the remaining cases.

%%%%%%%%%%%%%%%%%%%%%%%%%%%%%%%%%%%%%%%%%

\subsubsection*{Case 2: $p_{\alpha} \geq \frac{1}{2}$, $p_{\alpha \alpha} \geq
  \frac{1}{2}$, $p_{\alpha \beta} \leq \frac{1}{2}$}

There are two necessary conditions on the posteriors for an agent to
be in case 2: $p_{\alpha} \geq
\frac{1}{2}$ and $p_{\alpha \beta} \leq \frac{1}{2}$. The former
implies $p
\geq 1- \theta_1$ and $p_{\alpha \alpha} \geq
\frac{1}{2}$, while the latter implies $p \leq \frac{\theta_2 (1- \theta_1)}{\theta_1 + \theta_2 - 2 \theta_1
  \theta_2}$. For this case to be feasible, we need that $\frac{\theta_2 (1- \theta_1)}{\theta_1 + \theta_2 - 2 \theta_1
  \theta_2} \geq 1 - \theta_1$, which is always true whenever $\theta_2 \geq \frac{1}{2}$. To belong in case 2, the DM should have a prior $p$ such that
\begin{equation*}
p \in \left[  1-\theta_1 , \frac{\theta_2 (1- \theta_1)}{\theta_1 + \theta_2 - 2 \theta_1
  \theta_2} \right].
\end{equation*}
Upon paying the processing cost $c$ in order to scrutinize $\sigma_2$, the DM expects to receive,
\begin{eqnarray*}
\mathbb{E}[u(\cdot)\mid \alpha_1 , \tilde\sigma_2] &=& \overline{U} \{
  P(\alpha_2 \mid  \alpha_1)p_{\alpha \alpha} + P(\beta_2 \mid 
  \alpha_1)[1- p_{\alpha \beta}] \}  \\
 & & + \: \underline{U} \{
  P(\alpha_2 \mid  \alpha_1)[1- p_{\alpha \alpha}] + P(\beta_2 \mid 
  \alpha_1)p_{\alpha \beta}\} -c. 
\end{eqnarray*}
Since $\mathbb{E}[u(\cdot)\mid \alpha_1]= p_{\alpha}\overline{U} + (1-p_{\alpha}) \underline{U}$, we obtain the following
condition,
\begin{eqnarray*}
c & \leq & \overline{U} \{
  P(\alpha_2 \mid  \alpha_1)p_{\alpha \alpha} + P(\beta_2 \mid 
  \alpha_1)[1- p_{\alpha \beta}] - p_{\alpha} \}  \\
 & & + \: \underline{U} [
  P(\alpha_2 \mid  \alpha_1)(1- p_{\alpha \alpha}) + P(\beta_2 \mid 
  \alpha_1)p_{\alpha \beta} - (1-p_{\alpha})] .
\end{eqnarray*}
By substituting (\ref{palfa}) for $p_{\alpha}$, we obtain 
\begin{eqnarray*}
c & \leq & \overline{U} \{
  P(\alpha_2 \mid  \alpha_1)p_{\alpha \alpha} + P(\beta_2 \mid 
  \alpha_1)[1- p_{\alpha \beta}] - P(\alpha_2 \mid  \alpha_1) p_{\alpha
    \alpha} - P(\beta_2 \mid  \alpha_1) p_{\alpha \beta} \}  \\ 
 & & + \: \underline{U} \{
  P(\alpha_2 \mid  \alpha_1)[1- p_{\alpha \alpha}] + P(\beta_2 \mid 
  \alpha_1) p_{\alpha \beta} - 1+ P(\alpha_2 \mid 
  \alpha_1) p_{\alpha \alpha}+ P(\beta_2 \mid  \alpha_1) p_{\alpha \beta} \}.
\end{eqnarray*}
After some rearrangements and simplifications we obtain the condition
\begin{equation}
c  \leq \left[\overline{U} - \underline{U}\right] \cdot [P(\beta_2 \mid  \alpha_1)]
\cdot [1-2p_{\alpha \beta}].
\end{equation}
Therefore, the set of admissible costs is $\mathcal{C}_{\alpha}(p)=\left[ 0, \Delta U \cdot [P(\beta_2 \mid  \alpha_1)] \cdot [1-2p_{\alpha \beta}] \right]$. By assumption, $\overline{U} - \underline{U}>0$,
$P(\beta_2 \mid  \alpha_1)>0$, and, since in this case $p_{\alpha   \beta}\leq\frac{1}{2}$, $1-2p_{\alpha \beta}\geq0$. Thus, the cost threshold $\Delta U \cdot [P(\beta_2 \mid  \alpha_1)] \cdot [1-2p_{\alpha \beta}]$ is positive. 

The cost function, $c_\alpha(p)$, depends on the utility $u(\cdot)$, the probability that the realization of $\tilde\sigma_2$ is such that the optimal choice $s^*$ changes (e.g., $\tilde\sigma_2=\beta$ would switch $s^*=A$ to $s^*=B$, while $\tilde\sigma_2=\alpha$ would not change the optimal action) and finally on the posterior $p_{\alpha \beta}$. In particular the threshold is increasing on the premium for a correct guess, on the probability of observing a signal that
changes the optimal action and negatively on $p_{\alpha \beta}$. Given
$P(\beta_2 \mid  
\alpha_1)$ as in \eqref{eq:cond_alpha} and $c_\alpha(p)  =  \Delta U \cdot P(\beta_2 \mid \alpha_1) \cdot [1-2p_{\alpha \beta}]$, we have that
\begin{eqnarray*}
c_\alpha(p)& = & \Delta U \cdot
[(1-\theta_2)p_{\alpha} + \theta_2(1-p_{\alpha})]
\cdot [1-2p_{\alpha \beta}] \\ 
 & =  & \Delta U \cdot
\left[ \frac{\theta_1p - \theta_2p + \theta_2 - \theta_1
    \theta_2}{\theta_1 p + (1- \theta_1)(1-p)} \right]
\cdot \left[ \frac{\theta_2 (1-p) - \theta_1 p -    \theta_1 \theta_2 (1-2p)}{\theta_1p - \theta_2p + \theta_2 - \theta_1 \theta_2} \right].  
\end{eqnarray*}
Therefore, a DM with prior $p$ that belongs to case~2 has the following
cost function,
\begin{equation*}
c_\alpha(p)  =  \Delta U \cdot \left[ \frac{\theta_2 (1-p) - \theta_1 p -
    \theta_1 \theta_2 (1-2p)}{\theta_1 p + (1-\theta_1)(1-p)} \right]. 
\end{equation*}
The right hand side above is non-negative provided that $p \leq \frac{\theta_2
  (1-\theta_1)}{\theta_1 + \theta_2 - \theta_1 \theta_2}$.

\subsubsection*{Case 3: $p_{\alpha} \leq \frac{1}{2}$, $p_{\alpha \alpha} \geq
  \frac{1}{2}$, $p_{\alpha \beta} \leq \frac{1}{2}$}
  
In this case, $p_{\alpha} \leq \frac{1}{2}$ also implies that $p_{\alpha \beta}\leq \frac{1}{2}$. We further need $p_{\alpha \alpha} \geq \frac{1}{2}$. Those conditions together imply that the DM's prior must satisfy $p \leq 1 -\theta_1$ and $p \geq 1 - \frac{\theta_1 \theta_2 }{1- \theta_1 -
  \theta_2 + 2 \theta_1  \theta_2}$. For this interval to exist, we need, $1 - \frac{\theta_1 \theta_2 }{1- \theta_1 - \theta_2 + 2 \theta_1
  \theta_2} \leq 1- \theta_1$, which is true whenever $\theta_2 \geq
\frac{1}{2}$, thus always. Moreover, for $\theta_2<1$ we have that $1 - \frac{\theta_1 \theta_2 }{1- \theta_1 - \theta_2 + 2 \theta_1
  \theta_2} < 1- \theta_2$. It follows that a DM belongs to case 3 if her prior $p$ satisfies
\begin{equation*}
p \in \left[  1 - \frac{\theta_1 \theta_2 }{1- \theta_1 - \theta_2 + 2 \theta_1
  \theta_2} , 1- \theta_1 \right].
\end{equation*}
In this case, the DM's expected utility conditional on observing only $\sigma_1$ is
\begin{displaymath}
\mathbb{E}[u(\cdot)\mid \alpha_1]= (1 - p_{\alpha})\overline{U} + p_{\alpha}\underline{U}.
\end{displaymath}
If the DM incurs the processing cost and observes also $\sigma_2$, she gets an expected utility of
\begin{eqnarray*}
\mathbb{E}[u(\cdot)\mid \alpha_1 , \tilde\sigma_2] &=& \overline{U} \{
  P(\alpha_2 \mid  \alpha_1)p_{\alpha \alpha} + P(\beta_2 \mid 
  \alpha_1)[1- p_{\alpha \beta}] \}\\
 & & + \: \underline{U} \{
  P(\alpha_2 \mid  \alpha_1)[1- p_{\alpha \alpha}] + P(\beta_2 \mid 
  \alpha_1)p_{\alpha \beta}\} -c.
\end{eqnarray*}
We can proceed as before to calculate the DM's cost threshold by simplifying the following inequality,
\begin{eqnarray*}
c & \leq & \overline{U} \{
  P(\alpha_2 \mid  \alpha_1)p_{\alpha \alpha} + P(\beta_2 \mid 
  \alpha_1)[1- p_{\alpha \beta}] - [1- p_{\alpha}] \}  \\
 & & + \: \underline{U} [
  P(\alpha_2 \mid  \alpha_1)(1- p_{\alpha \alpha}) + P(\beta_2 \mid 
  \alpha_1)p_{\alpha \beta} - p_{\alpha}].
\end{eqnarray*}
Substituting (\ref{palfa}) for $p_{\alpha}$ gives that,
\begin{eqnarray*}
c & \leq & \overline{U} \{
  P(\alpha_2 \mid  \alpha_1)p_{\alpha \alpha} + P(\beta_2 \mid 
  \alpha_1)[1- p_{\alpha \beta}] - 1+ P(\alpha_2 \mid 
  \alpha_1) p_{\alpha \alpha}+ P(\beta_2 \mid  \alpha_1) p_{\alpha \beta} \}  \\ 
 & & + \: \underline{U} \{
  P(\alpha_2 \mid  \alpha_1)[1- p_{\alpha \alpha}] + P(\beta_2 \mid 
  \alpha_1) p_{\alpha \beta} -  P(\alpha_2 \mid 
  \alpha_1) p_{\alpha \alpha} - P(\beta_2 \mid  \alpha_1) p_{\alpha \beta} \} .
\end{eqnarray*} 
Further simplifications give
\begin{equation*}
c \leq \left[\overline{U} - \underline{U}\right] \cdot [P(\alpha_2 \mid  \alpha_1)] \cdot [2p_{\alpha \alpha} -1].
\end{equation*}
The set of admissible costs is $\mathcal{C}_{\alpha}(p)=\left[0,  \Delta U
  \cdot P(\alpha_2 \mid  \alpha_1) \cdot (2p_{\alpha \alpha} -1)
\right]$. By assumption, $\overline{U} - \underline{U}>0$,
$P(\alpha_2 \mid  \alpha_1)>0$, and $2p_{\alpha \alpha} -1>0$, as we
are considering the case $p_{\alpha \alpha}>\frac{1}{2}$. The
function $c_\alpha(p)= \Delta U \cdot P(\alpha_2 \mid  \alpha_1)
\cdot [2p_{\alpha \alpha} -1] $ is positive. Given $P(\alpha_2 \mid  \alpha_1)$ as in \eqref{eq:cond_alpha}, and $c_\alpha(p)  = \Delta U \cdot P(\alpha_2 \mid \alpha_1) \cdot [2p_{\alpha \alpha}-1]$, we have that
\begin{eqnarray*}
c_\alpha(p)& = & \Delta U \cdot
[\theta_2p_{\alpha} + (1-\theta_2)(1-p_{\alpha})]
\cdot [2p_{\alpha \alpha}-1] \\ 
 & =  &  \Delta U \cdot
\left[ \frac{(1-\theta_1 - \theta_2)(1-p) + \theta_1
    \theta_2}{\theta_1 p + (1- \theta_1)(1-p)} \right] \cdot \left[ \frac{\theta_1 \theta_2 p -
    (1-\theta_1)(1-\theta_2)(1-p)}{(1-\theta_1 - \theta_2)(1-p) + 
   \theta_1 \theta_2} \right].
\end{eqnarray*}
A DM that belongs to case 3 has the following cost function,
\begin{equation*}
c_\alpha(p)  = \Delta U \cdot \left[  \frac{\theta_1 \theta_2 p - (1-\theta_1)(1-\theta_2)(1-p)}{\theta_1 p + (1-\theta_1)(1-p)} \right]. 
\end{equation*}
The right hand side above is non-negative provided that $p \geq 1-\frac{\theta_1 \theta_2 }{1-\theta_1 -\theta_2 +2\theta_1 \theta_2}$.

\subsubsection*{Case 4: $p_{\alpha} \leq \frac{1}{2}$, $p_{\alpha \alpha} \leq
  \frac{1}{2}$, $p_{\alpha \beta} \leq \frac{1}{2}$}

In this case, we require the prior to be such that $p_{\alpha \alpha} \leq \frac{1}{2}$, implying that posteriors are always lower than $\nicefrac{1}{2}$. By using the explicit form for $p_{\alpha \alpha}$, we obtain $p \leq 1 - \frac{\theta_1 \theta_2 }{1- \theta_1 - \theta_2 + 2 \theta_1
  \theta_2}$. Recall from the analysis of case~3 that $1 - \frac{\theta_1
  \theta_2 }{1- \theta_1 - \theta_2 + 2 \theta_1 \theta_2} \leq 1-
  \theta_2$, and thus $p\leq 1-\theta_2$. This case is very similar to case~1: the DM's optimal choice is always $s^* = B$ regardless of the signals' realization. The DM's expected utility conditional on observing only $\sigma_1$ is
\begin{displaymath}
\mathbb{E}[u(\cdot)\mid \alpha_1]= (1 - p_{\alpha})\overline{U} + p_{\alpha}\underline{U}.
\end{displaymath}
Conditional on incurring the processing cost to observe also $\sigma_2$, the decision-maker has an expected payoff of
\begin{eqnarray*}
\mathbb{E}  [u(\cdot)\mid \alpha_1 , \tilde\sigma_2] &=& \overline{U} \{
  P(\alpha_2 \mid  \alpha_1)[1 - p_{\alpha \alpha}] + P(\beta_2 \mid 
  \alpha_1)[1 - p_{\alpha \beta}] \}  \\
 & & + \: \underline{U} [
  P(\alpha_2 \mid  \alpha_1)p_{\alpha \alpha} + P(\beta_2 \mid 
  \alpha_1) p_{\alpha \beta} ] -c.
\end{eqnarray*}
The DM is willing to incur the processing cost if and only if $\mathbb{E}[u(\cdot)\mid \alpha_1 , \tilde\sigma_2] \geq
\mathbb{E}[u(\cdot)\mid \alpha_1]$, which yields
\begin{eqnarray*}
c & \leq & \overline{U} \{
  P(\alpha_2 \mid  \alpha_1)[1- p_{\alpha \alpha}] + P(\beta_2 \mid 
  \alpha_1)[1- p_{\alpha \beta}] - [1-p_{\alpha}] \}  \\
 & & + \: \underline{U} \{
  P(\alpha_2 \mid  \alpha_1)p_{\alpha \alpha} + P(\beta_2 \mid 
  \alpha_1)p_{\alpha \beta} - p_{\alpha} \}.
\end{eqnarray*}
By substituting (\ref{palfa}) for $p_{\alpha}$, we obtain that
\begin{eqnarray*}
c & \leq & \overline{U} \{
  P(\alpha_2 \mid  \alpha_1)[1- p_{\alpha \alpha}] + P(\beta_2 \mid 
  \alpha_1)[1- p_{\alpha \beta}] - 1+ P(\alpha_2 \mid 
  \alpha_1) p_{\alpha \alpha}+ P(\beta_2 \mid  \alpha_1) p_{\alpha \beta}
  \} \\
 & & + \: \underline{U} \{
  P(\alpha_2 \mid  \alpha_1)p_{\alpha \alpha} + P(\beta_2 \mid 
  \alpha_1)p_{\alpha \beta} - [ P(\alpha_2 \mid  \alpha_1) p_{\alpha
    \alpha}+ P(\beta_2 \mid  \alpha_1) p_{\alpha \beta}] \} = 0.
\end{eqnarray*}
Similarly to Case 1, a decision-maker in case 4 is never willing to pay a positive processing cost $c$. The threshold for the processing cost, as indicated by the cost function, is $c_\alpha(p) = 0$, and zero is the only admissible cost, i.e., $\mathcal{C}_{\alpha}(p) = \{0\}$.

\subsubsection{Cases 5 to 8}

\subsubsection*{Case 5: $p_{\beta} \leq \frac{1}{2}$, $p_{\beta \alpha} \leq
  \frac{1}{2}$, $p_{\beta \beta} \leq \frac{1}{2}$}

The condition $p_{\beta \alpha} \leq \frac{1}{2}$ implies that $p_{\beta \beta} \leq \frac{1}{2}$. By using the explicit form of $p_{\beta \alpha}$ and setting it lower than or equal to $\nicefrac{1}{2}$, we obtain that $p \leq \frac{\theta_1 (1- \theta_2 )}{ \theta_1 + \theta_2 - 2 \theta_1
  \theta_2}$. We further require $p_{\beta} \leq \frac{1}{2}$, which implies $p<\theta_1$. This last condition is implied by $p_{\beta \alpha} \leq \frac{1}{2}$. The prior must satisfy
\begin{equation*}
p \in \left[0, \frac{\theta_1 (1- \theta_2 )}{ \theta_1 + \theta_2 - 2 \theta_1
  \theta_2}\right],
\end{equation*}
where $\frac{\theta_1 (1- \theta_2 )}{ \theta_1 + \theta_2 - 2 \theta_1
  \theta_2} \in \left[1- \theta_1 , \nicefrac{1}{2}\right]$ when $\theta_2 \leq
\frac{\theta_{1}^{2}}{1-2\theta_1 + 2\theta_{1}^{2}}$, and $\frac{\theta_1 (1- \theta_2 )}{ \theta_1 + \theta_2 - 2 \theta_1  \theta_2} \in [1- \theta_2 , 1-\theta_1]$ otherwise. The DM's expected utility
  conditional on observing only $\sigma_1=\beta$ is
\begin{displaymath}
\mathbb{E}[u(\cdot)\mid \beta_1]= (1 - p_{\beta})\overline{U} + p_{\beta}\underline{U}.
\end{displaymath}
Conditional on incurring the processing cost and observing $\tilde\sigma_2$, her expected payoff is
\begin{eqnarray*}
\mathbb{E}  [u(\cdot)\mid \beta_1 , \tilde\sigma_2] &=& \overline{U} \{
  P(\alpha_2 \mid  \beta_1)[1 - p_{\beta \alpha}] + P(\beta_2 \mid 
  \beta_1)[1 - p_{\beta \beta}] \} \\
 & & + \: \underline{U} [
  P(\alpha_2 \mid  \beta_1)p_{\beta \alpha} + P(\beta_2 \mid 
  \beta_1) p_{\beta \beta} ] -c.
\end{eqnarray*}
The DM is willing to pay the processing cost if and only if
$\mathbb{E}[u(\cdot)\mid \beta_1 , \tilde\sigma_2] \geq
\mathbb{E}[u(\cdot)\mid \beta_1)]$, which yields
\begin{eqnarray*}
c & \leq & \overline{U} \{
  P(\alpha_2 \mid  \beta_1)[1- p_{\beta \alpha}] + P(\beta_2 \mid 
  \beta_1)[1- p_{\beta \beta}] - [1-p_{\beta}] \} \\
 & & + \: \underline{U} \{
  P(\alpha_2 \mid  \beta_1)p_{\beta \alpha} + P(\beta_2 \mid 
  \beta_1)p_{\beta \beta} - p_{\beta} \}.
\end{eqnarray*}
Substituting (\ref{pbeta}) for $p_{\beta}$, we obtain that,
\begin{eqnarray*}
c & \leq & \overline{U} \{
  P(\alpha_2 \mid  \beta_1)[1- p_{\beta \alpha}] + P(\beta_2 \mid 
  \beta_1)[1- p_{\beta \beta}] - 1+ P(\alpha_2 \mid 
  \beta_1) p_{\beta \alpha}+ P(\beta_2 \mid  \beta_1) p_{\beta \beta}
  \}  \\
 & & + \: \underline{U} \{
  P(\alpha_2 \mid  \beta_1)p_{\beta \alpha} + P(\beta_2 \mid 
  \beta_1)p_{\beta \beta} - [ P(\alpha_2 \mid  \beta_1) p_{\beta
    \alpha}+ P(\beta_2 \mid  \beta_1) p_{\beta \beta}] \} = 0.
\end{eqnarray*}
A decision-maker belonging to case 4 is never willing to pay a positive processing cost $c$. The threshold for the processing cost, as indicated by the cost function, is $c_\beta(p) = 0$, and zero is the only admissible cost, i.e., $\mathcal{C}_{\beta}(p) = \{0\}$.

\subsubsection*{Case 6: $p_{\beta} \leq \frac{1}{2}$, $p_{\beta \alpha} \geq
  \frac{1}{2}$, $p_{\beta \beta} \leq \frac{1}{2}$}

The condition $p_{\beta} \leq \frac{1}{2}$ implies $p_{\beta  \beta} \leq \frac{1}{2}$. In addition, we need $p_{\beta \alpha} \geq
\frac{1}{2}$ satisfied. Those conditions together imply that the DM's prior must satisfy $p \leq \theta_1$ and $p \geq \frac{\theta_1 (1- \theta_2 )}{ \theta_1 + \theta_2 - 2 \theta_1   \theta_2}$. For this interval to exists, we need $\frac{\theta_1 (1- \theta_2 )}{ \theta_1 + \theta_2 - 2 \theta_1  \theta_2} \leq \theta_1$, which holds provided that $\theta_2 \geq
\frac{1}{2}$, thus always. A DM belongs to case 6 if her prior beliefs lie in the interval
\begin{equation*}
p \in \left[  \frac{\theta_1 (1- \theta_2 )}{ \theta_1 + \theta_2 - 2 \theta_1
  \theta_2} , \theta_1 \right].
\end{equation*}
The DM's expected utility conditional on observing only $\sigma_1=\beta$ is%\FN{Here I am assuming that, when indifferent, the DM follows the signal's realization...here DM selects B after sigma1}
\begin{displaymath}
\mathbb{E}[u(\cdot)\mid \beta_1]= (1 - p_{\beta})\overline{U} + p_{\beta}\underline{U}.
\end{displaymath}
Conditional on incurring the processing cost and observing $\tilde\sigma_2$, the DM has an expected payoff of
\begin{eqnarray*}
\mathbb{E}  [u(\cdot)\mid \beta_1 , \tilde\sigma_2] &=& \overline{U} \{
  P(\alpha_2 \mid  \beta_1)p_{\beta \alpha} + P(\beta_2 \mid 
  \beta_1)[1 - p_{\beta \beta}] \}  \\
 & & + \: \underline{U} [
  P(\alpha_2 \mid  \beta_1)[1- p_{\beta \alpha}] + P(\beta_2 \mid 
  \beta_1) p_{\beta \beta} ] -c.
\end{eqnarray*}
The DM is willing to pay the cost $c$ if and only if $\mathbb{E}[u(\cdot)\mid \beta_1 , \tilde\sigma_2] \geq
\mathbb{E}[u(\cdot)\mid \beta_1]$, that is, if and only if
\begin{eqnarray*}
c & \leq & \overline{U} \{
  P(\alpha_2 \mid  \beta_1)p_{\beta \alpha} + P(\beta_2 \mid 
  \beta_1)[1- p_{\beta \beta}] - [1-p_{\beta}] \}  \\
 & & + \: \underline{U} \{
  P(\alpha_2 \mid  \beta_1)[1- p_{\beta \alpha}] + P(\beta_2 \mid 
  \beta_1)p_{\beta \beta} - p_{\beta} \}.
\end{eqnarray*}
By substituting (\ref{pbeta}) for $p_{\beta}$, we obtain that,
\begin{displaymath}
c  \leq  \overline{U} [
  2P(\alpha_2 \mid  \beta_1)p_{\beta \alpha} + P(\beta_2 \mid 
  \beta_1)-1] - \underline{U} [
  2P(\alpha_2 \mid  \beta_1)p_{\beta \alpha} - P(\alpha_2 \mid 
  \beta_1) ]. 
\end{displaymath}
After a few simplifications and by using $P(\alpha_2\mid \beta_1)=1-P(\beta_2\mid \beta_1)$, we obtain
\begin{equation}
c \leq [\overline{U} - \underline{U}] \cdot [P(\alpha_2 \mid 
\beta_1)] \cdot [2p_{\beta \alpha} -1].
\end{equation}
The set of admissible cost is $\mathcal{C}_{\beta}(p)=\left[ 0 ,  \Delta U
  \cdot P(\alpha_2 \mid  \beta_1) \cdot [2p_{\beta \alpha} -1]\right]$. By assumption, $\overline{U} -\underline{U}>0$, $P(\alpha_2 \mid  \beta_1)>0$, and $2p_{\beta \alpha} -1>0$, as we are considering the case $p_{\beta \alpha}>\frac{1}{2}$. The threshold $c_\beta(p)= \Delta U \cdot P(\alpha_2 \mid  \beta_1) \cdot [2p_{\beta \alpha} -1] $ is positive. The cost function $c_\beta(p)$ depends on the payoff structure, the 
probability that the realization of $\tilde\sigma_2$ affects the
optimal choice $s^*$ (e.g., $\tilde\sigma_2=\alpha$ would change $s^*
= B$ to $s^* = A$) and on the posterior $p_{\beta \alpha}$. The threshold is increasing on the premium for a correct guess, on the probability of observing a signal that changes the optimal action, and on $p_{\beta \alpha}$. Given $P(\alpha_2 \mid  \beta_1)$ as in \eqref{eq:cond_alpha}, and
$c_\beta(p) = \Delta U \cdot [P(\alpha_2)] \cdot [2p_{\beta \alpha} -1]$, we have that
\begin{eqnarray*}
c_\beta(p) & = & \Delta U
\cdot \left[\frac{(1-\theta_1)\theta_2p +
  \theta_1(1-\theta_2)(1-p)}{(1-\theta_1)p + \theta_1(1-p)}\right] \cdot \left[\frac{(1-\theta_1)\theta_2p -
   \theta_1(1-\theta_2)(1-p)}{(1-\theta_1)\theta_2 p +
   \theta_1(1-\theta_2)(1-p)}\right]. 
\end{eqnarray*}
A DM with prior $p$ that belongs to case 6 has the following cost function,
\begin{equation*}
c_\beta(p) = \Delta U \cdot \left[ \frac{(1-\theta_1)\theta_2p -
   \theta_1(1-\theta_2)(1-p)}{(1-\theta_1)p + \theta_1(1-p)} \right].
\end{equation*}
The right hand side above is non-negative provided that $p \geq \frac{\theta_1 (1-\theta_2)}{(1-\theta_1)\theta_2 + \theta_1(1-\theta_2)}$.

\subsubsection*{Case 7: $p_{\beta} \geq \frac{1}{2}$, $p_{\beta \alpha} \geq
  \frac{1}{2}$, $p_{\beta \beta} \leq \frac{1}{2}$}

There are two necessary conditions on the posterior beliefs for a DM to be in case 7: $p_{\beta} \geq \frac{1}{2}$ and $p_{\beta \beta} \leq \frac{1}{2}$. The former implies $p_{\beta  \alpha} \geq \frac{1}{2}$ and $p \geq \theta_1$, while the latter implies $p \leq \frac{\theta_1 \theta_2 }{ 1 - \theta_1 - \theta_2 + 2 \theta_1  \theta_2}$. The threshold $\frac{\theta_1 \theta_2 }{ 1 - \theta_1 - \theta_2 + 2 \theta_1  \theta_2}$ is greater than $\theta_2$ provided that $\theta_2<1$, thus always. A DM belongs to case 7 when her prior belongs to the interval
\begin{equation}
p \in \left[ \theta_1 , \frac{\theta_1 \theta_2 }{ 1 - \theta_1 - \theta_2 + 2 \theta_1 \theta_2} \right].
\end{equation}
The DM's expected utility conditional on observing only $\sigma_1=\beta$ is
\begin{displaymath}
\mathbb{E}[u(\cdot)\mid \beta_1]= p_{\beta}\overline{U} + (1-p_{\beta})\underline{U}.
\end{displaymath}
Conditional on paying the processing cost and observing $\tilde\sigma_2$, the DM's expected payoff is
\begin{eqnarray*}
\mathbb{E}  [u(\cdot)\mid \beta_1 , \tilde\sigma_2] &=& \overline{U} \{
  P(\alpha_2 \mid  \beta_1)p_{\beta \alpha} + P(\beta_2 \mid 
  \beta_1)[1 - p_{\beta \beta}] \} \\
 & & + \: \underline{U} \{
  P(\alpha_2 \mid  \beta_1)[1- p_{\beta \alpha}] + P(\beta_2 \mid 
  \beta_1) p_{\beta \beta} \} -c.
\end{eqnarray*}
The DM is willing to pay the processing cost if and only if $\mathbb{E}[u(\cdot)\mid \beta_1 , \tilde\sigma_2] \geq
\mathbb{E}[u(\cdot)\mid \beta_1]$, that is, if and only if
\begin{eqnarray*}
c & \leq & \overline{U} \{
  P(\alpha_2 \mid  \beta_1)p_{\beta \alpha} + P(\beta_2 \mid 
  \beta_1)[1- p_{\beta \beta}] - p_{\beta} \}  \\
 & & + \: \underline{U} \{
  P(\alpha_2 \mid  \beta_1)[1- p_{\beta \alpha}] + P(\beta_2 \mid 
  \beta_1)p_{\beta \beta} - [1-p_{\beta}] \}. 
\end{eqnarray*}
By substituting (\ref{pbeta}) for $p_{\beta}$, we obtain 
\begin{displaymath}
c  \leq  \overline{U} \{
  P(\beta_2 \mid  \beta_1) - 2P(\beta_2 \mid 
  \beta_1)p_{\beta \beta}\} -  \underline{U} [1- P(\alpha_2\mid \beta_1) -
  2P(\beta_2 \mid  \beta_1)p_{\beta \beta} ]. 
\end{displaymath}
The condition boils down to
\begin{equation*}
c \leq \left[\overline{U} - \underline{U}\right] \cdot [P(\beta_2 \mid 
\beta_1)] \cdot [1- 2p_{\beta \beta}].
\end{equation*}
The set of admissible costs is $\mathcal{C}_\beta (p)= \left[ 0, \Delta U  \cdot P(\beta_2 \mid  \beta_1) \cdot [1- 2p_{\beta \beta}]
\right]$. By assumption, $\overline{U} - \underline{U}>0$, $P(\beta_2
\mid  \beta_1)>0$, and, since in this case we have $p_{\beta \beta}\leq\frac{1}{2}$, then
$1- 2p_{\beta \beta}$ is positive. The cost function $c_\beta(p)$ depends on the payoff structure, the probability that the realization of $\tilde \sigma_2$ affects $s^*$ (e.g., $\tilde \sigma_2= \beta$ would change $s^*=A$ to $s^*=B$) and finally on the posterior $p_{\beta \beta}$. The cost function is increasing on the premium
for a correct guess, on the probability of observing a signal that changes the optimal action and negatively on $p_{\beta
  \beta}$. Given $P(\beta_2 \mid  \beta_1)$ as in \eqref{eq:cond_beta}, and $c_\beta(p) \leq \Delta U \cdot [P(\beta_2\mid  \beta_1)] \cdot [1- 2p_{\beta \beta}]$, we have that
\begin{displaymath}
c_\beta(p) = \Delta U
\cdot \left[\frac{(1-\theta_1)(1-\theta_2)p +
  \theta_1 \theta_2(1-p)}{(1-\theta_1)p + \theta_1(1-p)}\right]
\cdot \left[\frac{\theta_1 \theta_2(1-p) -
   (1- \theta_1)(1-\theta_2)p}{(1-\theta_1)(1-\theta_2)p +  \theta_1 \theta_2(1-p)}\right].
\end{displaymath}
A DM with a prior $p$ that belongs to case 7 has the following cost function, 
\begin{equation*}
c_\beta(p) = \Delta U \cdot \left[ \frac{\theta_1 \theta_2(1-p) -
   (1- \theta_1)(1-\theta_2)p}{(1-\theta_1)p + \theta_1(1-p)} \right].
\end{equation*}
The right hand side above is non-negative provided that $p \leq \frac{\theta_1 \theta_2}{1-\theta_1 - \theta_2 + 2\theta_1 \theta_2}$.

\subsubsection*{Case 8: $p_{\beta} \geq \frac{1}{2}$, $p_{\beta \alpha} \geq
  \frac{1}{2}$, $p_{\beta \beta} \geq \frac{1}{2}$}

The condition $p_{\beta \beta} \geq \frac{1}{2}$ implies $p_{\beta} \geq \frac{1}{2}$ and $p_{\beta \alpha} \geq \frac{1}{2}$. Because $p_{\beta \beta} \geq \frac{1}{2}$ also implies $p \geq \frac{\theta_1 \theta_2 }{ 1 -
  \theta_1 - \theta_2   + 2 \theta_1 \theta_2} \geq \theta_2$ then, to be in case~8, the DM's prior must lie in the interval
\begin{equation*}
p \in \left[ \frac{\theta_1 \theta_2 }{ 1 - \theta_1 - \theta_2
  + 2 \theta_1 \theta_2}, 1 \right].
\end{equation*}
The DM's expected payoff conditional on observing only $\sigma_1=\beta$ is
\begin{displaymath}
\mathbb{E}[u(\cdot)\mid \beta_1]= p_{\beta}\overline{U} + (1-p_{\beta})\underline{U}.
\end{displaymath}
Conditional on incurring the processing cost and observing also $\tilde\sigma_2$, the DM receives an expected payoff of
\begin{eqnarray*}
\mathbb{E}  [u(\cdot)\mid \beta_1 , \tilde\sigma_2] &=& \overline{U} \{
  P(\alpha_2 \mid  \beta_1)p_{\beta \alpha} + P(\beta_2 \mid 
  \beta_1)p_{\beta \beta} \}  \\
 & & + \: \underline{U} [
  P(\alpha_2 \mid  \beta_1)(1-p_{\beta \alpha}) + P(\beta_2 \mid 
  \beta_1) (1-p_{\beta \beta}) ] -c.
\end{eqnarray*}
The DM is willing to pay the processing cost if and only if $\mathbb{E}[u(\cdot)\mid \beta_1, \tilde\sigma_2] \geq
\mathbb{E}[u(\cdot)\mid \beta_1]$, that is, if and only if
\begin{eqnarray*}
c & \leq & \overline{U} \{
  P(\alpha_2 \mid  \beta_1)p_{\beta \alpha} + P(\beta_2 \mid 
  \beta_1)p_{\beta \beta} -p_{\beta} \}  \\
 & & + \: \underline{U} \{
  P(\alpha_2 \mid  \beta_1)(1-p_{\beta \alpha}) + P(\beta_2 \mid 
  \beta_1)(1-p_{\beta \beta}) - [1-p_{\beta}] \}.
\end{eqnarray*}
By substituting (\ref{pbeta}) for $p_{\beta}$, we obtain 
\begin{eqnarray*}
c & \leq & \overline{U} \{
  P(\alpha_2 \mid  \beta_1)p_{\beta \alpha} + P(\beta_2 \mid 
  \beta_1)p_{\beta \beta} -P(\alpha_2 \mid  \beta_1)p_{\beta \alpha} - P(\beta_2 \mid 
  \beta_1)p_{\beta \beta} \} \\
 & & + \: \underline{U} \{
  P(\alpha_2 \mid  \beta_1)(1-p_{\beta \alpha}) + P(\beta_2 \mid 
  \beta_1)(1-p_{\beta \beta})  \\
 & & - \: 1 + P(\alpha_2 \mid  \beta_1)p_{\beta \alpha} + P(\beta_2 \mid 
  \beta_1)p_{\beta \beta}] \} = 0.
\end{eqnarray*}
A decision-maker in case 8 is never willing to pay a positive processing cost. The threshold for the processing cost, as indicated by the cost function, is $c_\beta(p) = 0$, and zero is the only admissible cost in this case, i.e., $\mathcal{C}_{\beta}(p) = \{0\}$.

\subsection{Proof of Proposition~\ref{prop:cost_properties}}\label{sec:costfunction}

\begin{proof}
Suppose that $\tilde\sigma_1=\alpha$. Previous results show that DMs with prior beliefs belonging to cases 1 and 4 are not willing to pay any positive processing cost for additional information. Conversely, DMs with prior beliefs belonging to case 2 and 3 have a cost functions that take strictly positive values, namely $c^2_{\alpha}(\cdot)=\Delta U \cdot \left[ \frac{\theta_2 (1-p) - \theta_1 p - \theta_1 \theta_2 (1-2p)}{\theta_1 p + (1-\theta_1)(1-p)} \right]$, and $c^3_{\alpha}(\cdot)=\Delta U \cdot \left[
  \frac{\theta_1 \theta_2 p - (1-\theta_1)(1-\theta_2)(1-p)}{\theta_1
    p + (1-\theta_1)(1-p)} \right]$, respectively. I construct the piecewise cost function conditional on $\tilde\sigma_1=\alpha$ as
\[ c_{\alpha}(p) = \left\{
 \begin{array}{l l}

 c^4_{\alpha}(p)= 0 & \quad \text{if $p \in  \left[ 0, 1 - \frac{\theta_1 \theta_2 }{1- \theta_1 - \theta_2 + 2 \theta_1
  \theta_2} \right]$},\\

 c^3_{\alpha}(p)=  \Delta U \cdot \left[
  \frac{\theta_1 \theta_2 p - (1-\theta_1)(1-\theta_2)(1-p)}{\theta_1 
    p + (1-\theta_1)(1-p)} \right] & \quad \text{if $p \in \left[  1 - \frac{\theta_1 \theta_2 }{1-
      \theta_1 - \theta_2 + 2 \theta_1 
  \theta_2} , 1-\theta_1 \right]$}, \\ 

  c^2_{\alpha}(p)=  \Delta U \cdot \left[ \frac{\theta_2 (1-p) - \theta_1 p -  
    \theta_1 \theta_2 (1-2p)}{\theta_1 p + (1-\theta_1)(1-p)} \right]
& \quad \text{if $p \in \left[  1-\theta_1 , \frac{\theta_2 (1- \theta_1)}{\theta_1 + \theta_2 - 2 \theta_1
  \theta_2} \right]$},\\

c^1_{\alpha}(p)= 0 & \quad \text{if $p \in \left[ \frac{ \theta_2 (1-
           \theta_1)}{\theta_1 + \theta_2 - 2 \theta_1 
  \theta_2} ,1 \right]$}.
\end{array}
\right.\]
The function $c_{\alpha}(p)$ is continuous in $p$. Indeed,
\begin{equation*}
\lim_{p \downarrow 1 - \frac{\theta_1 \theta_2 }{1- \theta_1 - \theta_2 + 2 \theta_1
  \theta_2}} c^{\alpha}_3(p) = \lim_{p \uparrow \frac{ \theta_2 (1-
           \theta_1)}{\theta_1 + \theta_2 - 2 \theta_1 
  \theta_2}} c^{\alpha}_2(p) = 0,
\end{equation*}
\begin{equation*}
\lim_{p \uparrow 1- \theta_1} c_{\alpha}^3(p) = \lim_{p \downarrow 1-
  \theta_1} c_{\alpha}^2(p) = \Delta U \left[ \theta_2 - \frac{1}{2} \right].
\end{equation*}
In addition, $c_\alpha^2(p)$ and $c_\alpha^3(p)$ are continuous in $p$. Moreover, $c^2_{\alpha}(p)$ is strictly decreasing in and convex in $p$, as the following inequalities hold true for all $p \in \left[  1-\theta_1 , \frac{\theta_2 (1-
    \theta_1)}{\theta_1 + \theta_2 - 2 \theta_1   \theta_2} \right]$,  
\begin{eqnarray*}
  \frac{\partial c^2_{\alpha}(p)}{\partial p} &=& -\Delta U \frac{\theta_1
  (1-\theta_1)}{[\theta_1(2p-1)-p+1]^2}  <0,  \\ 
\frac{\partial^2 c^2_{\alpha}(p)}{\partial p^2} &=&
\Delta U \frac{2\theta_1(3\theta_1 - 2\theta^2_1-1)}{[\theta_1(2p-1)-p+1]^3} >0.  
\end{eqnarray*}
By contrast, $c^3_{\alpha}(p)$ is strictly increasing and concave in $p$, as the following inequalities hold true for all $p \in \left[  1 - \frac{\theta_1 \theta_2 }{1-
      \theta_1 - \theta_2 + 2 \theta_1 
  \theta_2} , 1-\theta_1 \right]$,
\begin{eqnarray*}
\frac{\partial c^3_{\alpha}(p)}{\partial p} &=& \Delta U \frac{\theta_1
  (1-\theta_1)}{[\theta_1(2p-1)-p+1]^2}  >0,\\ 
\frac{\partial^2 c^3_{\alpha}(p)}{\partial p^2} &=&
\Delta U \frac{2\theta_1(2\theta^2_1 -3\theta_1 +1)}{[\theta_1(1-2p)+p-1]^3} <0.
\end{eqnarray*}

Thus, $c_{\alpha}(p)$ has a global maximum at $p=1-\theta_1$, where it takes value $\Delta U \left[ \theta_2 -  \frac{1}{2} \right]$. Moreover, the cost function is not differentiable at $p=1-\theta_1$, $1 -
\frac{\theta_1 \theta_2 }{1- \theta_1 - \theta_2 + 2 \theta_1  \theta_2}$, and $\frac{ \theta_2 (1- \theta_1)}{\theta_1 + \theta_2  - 2 \theta_1 \theta_2}$, because it has a kink in those points. While $c^1_{\alpha}(p)$ and
$c^4_{\alpha}(p)$ are constant at zero, $c^2_{\alpha}(p)$ and
$c^3_{\alpha}(p)$ have a null derivative 
only if $\theta_1=1$. Moreover, the first derivative of $c_{\alpha}(p)$'s right limit to $p\downarrow 1-\theta_1$ is the additive inverse of its limit from the left to $p\uparrow 1-\theta_1$, i.e., 
\begin{eqnarray*}
\lim_{p \downarrow 1-\theta_1} \frac{\partial c_{\alpha}(p)}{\partial
p} = \lim_{p \downarrow 1-\theta_1} \frac{\partial c^2_{\alpha}(p)}{\partial
p} &=& -\frac{\Delta U}{4\theta_1(1-\theta_1)}, \\
\lim_{p \uparrow 1-\theta_1} \frac{\partial c_{\alpha}(p)}{\partial
p} = \lim_{p \uparrow 1-\theta_1} \frac{\partial c^3_{\alpha}(p)}{\partial
p} &=& \frac{\Delta U}{4\theta_1(1-\theta_1)}.
\end{eqnarray*}

Suppose now that $\tilde\sigma_1=\beta$. Previous results show that DMs with prior beliefs belonging to cases 5 and 8 are not willing to pay any positive processing cost for additional information. Conversely, DMs with prior beliefs belonging to case 6 and 7 have a cost functions that take strictly positive values, namely $c^6_{\beta}(p)=\Delta U \cdot \left[
  \frac{(1-\theta_1)\theta_2p -
    \theta_1(1-\theta_2)(1-p)}{(1-\theta_1)p + \theta_1(1-p)} \right]$ and $c^7_{\beta}(p)=\Delta U \cdot \left[
  \frac{\theta_1 \theta_2(1-p) -
    (1-\theta_1)(1-\theta_2)p}{(1-\theta_1)p + \theta_1(1-p)}
\right]$, respectively.  I construct the piecewise cost function conditional on $\tilde\sigma_1=\beta$ as
\[ c_{\beta}(p) = \left\{
 \begin{array}{l l}

 c^5_{\beta}(p) = 0 & \quad \text{if $p \in  \left[ 0, \frac{\theta_1 (1- \theta_2 )}{ \theta_1 + \theta_2 - 2 \theta_1
  \theta_2} \right] $},\\

  c^6_{\beta}(p) = \Delta U \cdot \left[ \frac{(1-\theta_1)\theta_2p -
   \theta_1(1-\theta_2)(1-p)}{(1-\theta_1)p + \theta_1(1-p)} \right] &
\quad \text{if $p \in \left[ \frac{\theta_1 (1- \theta_2 )}{ \theta_1 + \theta_2 - 2 \theta_1
  \theta_2} , \theta_1 \right]$}, \\ 

   c^7_{\beta}(p) = \Delta U \cdot \left[ \frac{\theta_1 \theta_2(1-p) -
   (1- \theta_1)(1-\theta_2)p}{(1-\theta_1)p + \theta_1(1-p)} \right]
& \quad \text{if $p \in \left[ \theta_1 , \frac{\theta_1 \theta_2 }{ 1 - \theta_1 - \theta_2 + 2 \theta_1
  \theta_2} \right]$},\\

c^8_{\beta}(p) = 0 & \quad \text{if $p \in \left[ \frac{\theta_1 \theta_2 }{ 1 - \theta_1 - \theta_2 + 2 \theta_1
  \theta_2} ,1 \right]$}.
\end{array}
\right.\]
The function $c_{\beta}(p)$ is continuous
in $p$. Indeed, 
\begin{equation*}
\lim_{p \downarrow \frac{\theta_1 (1- \theta_2 )}{ \theta_1 + \theta_2 - 2 \theta_1
  \theta_2}} c_{\beta}^6(p) = \lim_{p \uparrow \frac{\theta_1 \theta_2 }{ 1 - \theta_1 - \theta_2 + 2 \theta_1
  \theta_2}} c_{\beta}^7(p) = 0,
\end{equation*}
\begin{equation*}
\lim_{p \uparrow \theta_1} c_{\beta}^6(p) = \lim_{p \downarrow  \theta_1}
c_{\beta}^7(p) = \Delta U \left[ \theta_2 - \frac{1}{2} \right]. 
\end{equation*}
In addition, $c_\beta^6(p)$ and $c_\beta^7(p)$ are continuous in $p$. Moreover, $c^6_{\beta}(p)$ is strictly increasing in and concave in $p$, as the following inequalities hold true for all $p \in \left[ \frac{\theta_1 (1- \theta_2 )}{
    \theta_1 + \theta_2 - 2 \theta_1 
  \theta_2} , \theta_1 \right]$,
\begin{eqnarray*}
\frac{\partial c^6_{\beta}(p)}{\partial p} &=& \Delta U \frac{\theta_1
  (1-\theta_1)}{[\theta_1(1-2p) + p]^2}  >0, \\ 
\frac{\partial^2 c^6_{\beta}(p)}{\partial p^2} &=&
\Delta U \frac{2\theta_1(2\theta^2_1 -3\theta_1 +1)}{[\theta_1(2p-1)-p]^3} <0.
\end{eqnarray*}
By contrast, $c^7_{\beta}(p)$ is strictly decreasing and convex in $p$, as the following inequalities hold true for all $p \in \left[ \theta_1 , \frac{\theta_1 \theta_2 }{ 1 - \theta_1 - \theta_2 + 2 \theta_1 \theta_2} \right]$,
\begin{eqnarray*}
\frac{\partial c^7_{\beta}(p)}{\partial p} &=& -\Delta U \frac{\theta_1
  (1-\theta_1)}{[\theta_1(1-2p) + p]^2}  <0, \\ 
\frac{\partial^2 c^7_{\beta}(p)}{\partial p^2} &=&
\Delta U \frac{2\theta_1(3\theta_1 -2\theta^2_1 -1)}{[\theta_1(2p-1)-p]^3} >0.
\end{eqnarray*}

Thus, $c_{\beta}(p)$ has a global maximum at $p=\theta_1$, where it takes value $\Delta U \left[ \theta_2 -  \frac{1}{2} \right]$. Moreover, the cost function is not differentiable at $p=\theta_1$, 
$\frac{\theta_1 (1- \theta_2 )}{\theta_1 + \theta_2 - 2 \theta_1 \theta_2}$, and $ \frac{\theta_1 \theta_2 }{1 - \theta_1 - \theta_2 + 2 \theta_1 \theta_2}$, because it has a kink in those points. While $c^5_{\beta}(p)$ and
$c^8_{\beta}(p)$ are constant at zero, $c^6_{\beta}(p)$ and
$c^7_{\beta}(p)$ have a null derivative 
only if $\theta_1=1$. Moreover, the first derivative of $c_{\beta}(p)$'s right limit to $p\downarrow \theta_1$ is the additive inverse of its limit from the left to $p\uparrow \theta_1$, i.e., 
\begin{eqnarray*}
\lim_{p \downarrow \theta_1} \frac{\partial c_{\beta}(p)}{\partial
p} = \lim_{p \downarrow \theta_1} \frac{\partial c^7_{\beta}(p)}{\partial
p} &=& -\frac{\Delta U}{4\theta_1(1-\theta_1)}, \\
\lim_{p \uparrow \theta_1} \frac{\partial c_{\beta}(p)}{\partial
p} = \lim_{p \uparrow \theta_1} \frac{\partial c^6_{\beta}(p)}{\partial
p} &=& \frac{\Delta U}{4\theta_1(1-\theta_1)}.
\end{eqnarray*}

The piece-wise cost functions $c_\alpha(p)$ and $c_\beta(p)$ are continuous because, as we have seen, all of their sub-functions are continuous on the corresponding subdomains, and there is no discontinuity at an endpoint of any subdomain. Moreover, it is easy to verify that $c_{\alpha}(p)=c_{\beta}(1-p)$ for all $p\in[0,1]$, meaning that the cost functions are symmetric with respect to $p$.
\end{proof}

\subsubsection{Proof of Corollary~\ref{cor:processing_priors}}\label{sec:inversecostfunctions}

\begin{proof}
From Proposition~\ref{prop:cost_properties}, it follows that there are two values of $p$ such that the cost function satisfies $c_{\sigma_1}(p)=c$ for some processing cost $c\in\left( 0, \Delta U \cdot [ \theta_2 - \frac{1}{2} ] \right)$. Moreover, we have that $c_{\sigma_1}(p)>c$ if and only if $p$ is in between the two thresholds at which $c_{\sigma_1}(p)=c$, thus defining the boundaries of the sets $\mathcal {H}_{\sigma_1}(c) \coloneqq \{ p\in [0,1] \mid 
c_{\sigma_1}( p) > c \}$. By setting $c_{\sigma_1}(p)=c$ and inverting the function $c_{\sigma_1}$ to solve for $p$, we find the two thresholds $\underline{\smash q}_{\sigma_1}(c)$ and $\overline{q}_{\sigma_1}(c)$, where $\underline{\smash q}_{\sigma_1}(c)<\overline{q}_{\sigma_1}(c)$. Through this process, we obtain
\begin{eqnarray}
\underline{\smash q}_{\alpha}(c) &=& \frac{(1- \theta_1)[\Delta U (\theta_2 - 1) -  c]}{\Delta U(\theta_1 + \theta_2 - 2 \theta_1 \theta_2 - 1) +  (2\theta_1 -1)c}, \nonumber \\
\overline{q}_{\alpha}(c) &=& \frac{(1- \theta_1)(\Delta U  \theta_2 -  c)}{\Delta U(\theta_1 + \theta_2 - 2 \theta_1 \theta_2 ) +  (2\theta_1 -1)c}, \nonumber \\
\underline{\smash q}_{\beta} (c) &=& \frac{\theta_1 [c + \Delta U (1- \theta_2)]}{  \Delta U(\theta_1 + \theta_2 - 2 \theta_1 \theta_2 ) +  (2\theta_1 -1)c}, \nonumber \\
\overline{q}_{\beta} (c)&=& \frac{\theta_1 (c - \Delta U \theta_2  )}{\Delta U(\theta_1 + \theta_2 - 2 \theta_1 \theta_2 - 1) +  (2\theta_1 -1)c}. \nonumber 
\end{eqnarray} 
As a result, we have that $\mathcal{H}_{\alpha}(c)=\left(\underline{\smash q}_{\alpha}(c) ,
\overline{q}_{\alpha}(c)\right)$ and
$\mathcal{H}_{\beta}(c)=\left(\underline{\smash q}_{\beta}(c) ,
\overline{q}_{\beta}(c)\right)$. 

Since $c_\alpha(p)$ and $c_\beta(p)$ reach a global maximum in $1-\theta_1$ and $\theta_1$, respectively, $c^3_{\alpha}(p)$ is monotone increasing and concave, $c^6_{\beta}(p)$ is monotone increasing and convex, then $c^3_{\alpha}(p)$ and $c^6_{\beta}(p)$ never cross in their shared domain. Likewise, $c^2_{\alpha}(p)$ and $c^7_{\beta}(p)$ are both monotonically decreasing, the former is convex and the latter is concave, and thus they do not cross within their shared domain. In fact, the only prior satisfying $c_{\alpha}(p,\cdot)=c_{\beta}(p,\cdot)$ is $p=\nicefrac{1}{2}$. It follows that $\underline{\smash q}_{\alpha}(c)<\underline{\smash q}_{\beta} (c)$ and $\overline{q}_{\alpha}(c) <\overline{q}_{\beta}(c)$. By the cost function's thresholds as in Proposition~\ref{prop:cost_properties}, we also obtain that $0<\underline{\smash q}_{\alpha}(c)<\overline{q}_{\beta}(c)<1$. Finally, by comparing $\underline{\smash q}_{\beta} (c)$ and $\overline{q}_{\alpha}(c)$ we find that $\underline{\smash q}_{\beta} (c)<\overline{q}_{\alpha}(c)$ if and only if $c\leq \Delta U \cdot [\theta_2-\theta_1]$.
\end{proof}

\subsection{Belief polarization}\label{app:polarization}

\begin{lemma}\label{lemma:necessary_PB}
Necessary conditions for PB to occur given some processing cost are
\begin{itemize}[noitemsep]
\item[i)] $(p^i,p^j) \in \mathcal{V}^{ij}_{\sigma_1}\cup  \mathcal{V}^{ji}_{\sigma_1}$ for some $\sigma_1\in\left\{\alpha_1,\beta_1\right\}$;
    \item[ii)] $\sigma_1 \neq \sigma_2$. 
\end{itemize}
Given a processing cost $c>0$, necessary conditions for PB are
    \begin{itemize}[noitemsep]
    \item[ii)] $\sigma_1 \neq \sigma_2$;
    \item[iii)] $c < \Delta U \cdot \left(\theta_2 - \nicefrac{1}{2}\right)$;
    \item[iv)] $\left(p^i,p^j\right)\in \mathcal{B}^{ij}_{\sigma_1}(c)\cup \mathcal{B}^{ji}_{\sigma_1}(c)$ for some $\sigma_1\in\left\{\alpha_1,\beta_1\right\}$.
        %\item[iv)] $p^k \in \mathcal{H}_{\sigma_1} (c)$, and $p^l\notin \mathcal{H}_{\sigma_1} (c)$ for $k,l\in\{i,j\}$, $k\neq l$, and some $\sigma_1\in\left\{\alpha_1,\beta_1\right\}$.
    \end{itemize}  
\end{lemma}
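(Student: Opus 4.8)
The plan is to derive every listed condition from the single requirement that PB entails inverse updating, $I(\sigma)<0$ (Definition~\ref{def:pb}); since $D(\sigma)<0$ is never invoked, each condition is in fact already necessary for IU alone. The engine behind the argument is that Bayesian updating is additive and prior-independent in log-odds: writing $\ell=\log\frac{p}{1-p}$, observing $\sigma_1$ shifts $\ell$ by a constant $\lambda_1(\sigma_1)$ that is positive for $\sigma_1=\alpha$ and negative for $\sigma_1=\beta$, and observing $\sigma_2$ shifts it by $\lambda_2(\sigma_2)$ with the analogous sign pattern. Because the logistic map is strictly increasing, the sign of $p-p(\sigma)$ equals the sign of the total log-odds shift, which depends only on the evidence processed and not on the prior.

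First I would show that PB forces the two decision-makers to make different acquisition choices, i.e.\ $a_i^*(\sigma_1)\neq a_j^*(\sigma_1)$. If both observe only $\sigma_1$, each posterior $p^k_{\sigma_1}$ comes from the same shift $\lambda_1(\sigma_1)$, so $p^i-p^i_{\sigma_1}$ and $p^j-p^j_{\sigma_1}$ share the sign of $-\lambda_1(\sigma_1)$ (or vanish at $p^k\in\{0,1\}$); hence $I(\sigma)\geq 0$. If both observe the full signal, the same reasoning applies with the common shift $\lambda_1(\sigma_1)+\lambda_2(\sigma_2)$, again giving $I(\sigma)\geq 0$ even when the components conflict. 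Either way IU fails, so PB is impossible. Thus exactly one decision-maker, say $k$, acquires $\sigma_2$ while $l$ does not, so that $p^k\in\mathcal{H}_{\sigma_1}(c)$ and $p^l\notin\mathcal{H}_{\sigma_1}(c)$. Reading this through Definition~\ref{def:sets_sigma2} yields $(p^i,p^j)\in\mathcal{B}^{ij}_{\sigma_1}(c)\cup\mathcal{B}^{ji}_{\sigma_1}(c)$, which is condition~iv); taking the union over $c>0$ and using $\mathcal{V}^{kl}_{\sigma_1}=\bigcup_{c>0}\mathcal{B}^{kl}_{\sigma_1}(c)$ gives condition~i).

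Next I would establish condition~ii), $\sigma_1\neq\sigma_2$. By the previous step we may assume $k$ observes the whole signal and $l$ observes only $\sigma_1$, so $l$ moves in the direction $\mathrm{sign}(\lambda_1(\sigma_1))$ while $k$ moves in the direction $\mathrm{sign}(\lambda_1(\sigma_1)+\lambda_2(\sigma_2))$. When $\sigma_1=\sigma_2$ the increments $\lambda_1(\sigma_1)$ and $\lambda_2(\sigma_2)$ carry the same sign, so their sum has that same sign and $k$ and $l$ update in the same direction, forcing $I(\sigma)\geq 0$ and precluding PB. Hence PB requires $\sigma_1\neq\sigma_2$.

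Finally, condition~iii) follows because $p^k\in\mathcal{H}_{\sigma_1}(c)$ means $c<c_{\sigma_1}(p^k)\leq\max_p c_{\sigma_1}(p)=\Delta U\,[\theta_2-\tfrac12]$ by Proposition~\ref{prop:cost_properties} (equivalently, Corollary~\ref{cor:processing_priors} states $\mathcal{H}_{\sigma_1}(c)=\varnothing$ once $c>\Delta U\,[\theta_2-\tfrac12]$), so $c<\Delta U\,[\theta_2-\tfrac12]$. The only genuinely delicate point is the first step: one must verify that processing identical evidence---or, in the $\sigma_1=\sigma_2$ case, mutually reinforcing evidence---always produces co-directional belief movements, handling the conflicting-component configurations and the boundary priors $p^k\in\{0,1\}$ where a bracket in $I(\sigma)$ degenerates to zero. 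The remaining mappings into the sets $\mathcal{B}^{kl}_{\sigma_1}(c)$ and $\mathcal{V}^{kl}_{\sigma_1}$ are bookkeeping against Definition~\ref{def:sets_sigma2} and the relation $\mathcal{V}=\bigcup_c\mathcal{B}$.
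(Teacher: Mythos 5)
Your proof is correct and follows essentially the same route as the paper's: show that identical acquisition choices (and likewise $\sigma_1=\sigma_2$) force co-directional updating so that $I(\sigma)\geq 0$, deduce that exactly one decision-maker acquires $\sigma_2$, and read off conditions i) and iv) from Definition~\ref{def:sets_sigma2}, with iii) coming from the non-emptiness bound $\max_p c_{\sigma_1}(p)=\Delta U\left[\theta_2-\tfrac{1}{2}\right]$. The only difference is that you rigorously justify the ``same evidence implies same direction of updating'' step via prior-independent log-odds increments (also handling the degenerate priors $p\in\{0,1\}$), whereas the paper simply asserts it; this is a tightening of the same argument, not a different approach.
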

\begin{proof}
Suppose by way of contradiction that $a^*_i(\sigma_1) = a^*_j(\sigma_1)$. Because DMs observe exactly the same information (either $\sigma_1$ or $(\sigma_1,\sigma_2)$), then they update their beliefs in the same direction, thus $I(\sigma)>0$. For PB to occur, it must be that $a^*_i(\sigma_1) \neq a^*_j(\sigma_1)$ for some $\sigma_1\in\left\{\alpha,\beta\right\}$. This is possible only if DMs have different willingness to pay for $\sigma_2$. From Definition~\ref{def:sets_sigma2}, it must be that condition $i)$ is satisfied. In such case, there are processing costs under which $a^*_i(\sigma_1) \neq a^*_j(\sigma_1)$. When $\sigma_1=\sigma_2$, the DMs update in the same direction regardless of weather they acquire $\tilde\sigma_2$ or not. Even in this case, $I(\sigma)>0$, meaning that there cannot be polarization of beliefs as per Definition~\ref{def:pb}. Hence, condition $ii)$.

Condition $iv)$ is analogous to condition $i)$ for the case where we consider a specific processing cost. It implies that, given $\sigma_1$ and $c>0$, one decision-maker will pay the processing cost to observe $\sigma_2$, while the other will not. Condition $iii)$ ensures that the set $\mathcal{H}_{\sigma_1}(c)$ is non-empty, while condition $ii)$ ensures that the inversion function is negative.
\end{proof}

%Recall that $\left\{ \mathcal{V}^{ij}_{\sigma_1} \cup \mathcal{V}^{ji}_{\sigma_1}  \right\}_{ \sigma_1\in\{\alpha,\beta\}}\cup \mathcal{E}^2 = [0,1]^2$. 
Lemma~\ref{lemma:necessary_PB} says that, to study PB, we can focus the analysis on pairs of decision-makers whose prior beliefs belong to the set $\left\{ \mathcal{V}^{ij}_{\sigma_1} \cup \mathcal{V}^{ji}_{\sigma_1}  \right\}_{ \sigma_1\in\{\alpha,\beta\}}$. Those that hold extreme beliefs (i.e., $(p^i,p^j) \in \mathcal{E}^2$) never polarize because they never acquire the costly component regardless of $\sigma_1$. Those DMs that share the same prior beliefs never polarize because they make the same information acquisition choice. The next result focuses on those decision-makers whose priors fall into either $\mathcal{V}^{ij}_{\alpha}$ or $\mathcal{V}^{ji}_{\beta}$. It provides a condition on the information structure that is both necessary and also sufficient for probabilistic belief polarization.

\begin{lemma}\label{lemma:nec_suff_PB}
    Consider two decision-makers, $i$ and $j$, such that $p^i<p^j$. For every $\left(p^i,p^j\right) \in \mathcal{V}^{ij}_{\alpha}\cup \mathcal{V}^{ji}_{\beta}$, there exists a non-empty and convex set of processing costs such that, when $c$ belongs to that set, then $\theta_2>\theta_1$ is a necessary and sufficient condition to have PB with some positive ex-ante probability.
\end{lemma}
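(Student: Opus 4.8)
The plan is to reduce the statement to two symmetric configurations, $(p^i,p^j)\in\mathcal{V}^{ij}_\alpha$ and $(p^i,p^j)\in\mathcal{V}^{ji}_\beta$, handle the first in detail, and obtain the second from the symmetry $c_\alpha(p)=c_\beta(1-p)$ established in Proposition~\ref{prop:cost_properties}. Consider the first case. Since $(p^i,p^j)\in\mathcal{V}^{ij}_\alpha$ means $c_\alpha(p^i)>c_\alpha(p^j)$, I would take the candidate set of processing costs to be the interval $\hat{\mathcal{C}}=\big(\,c_\alpha(p^j),\,c_\alpha(p^i)\,\big)$. This interval is non-empty precisely because the pair lies in $\mathcal{V}^{ij}_\alpha$, and it is convex because it is an interval. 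For any $c\in\hat{\mathcal{C}}$, after observing $\sigma_1=\alpha$ decision-maker $i$ finds the cost admissible while $j$ does not, so $i$ observes $\sigma_2$ and $j$ stops at $\sigma_1$; that is, $(p^i,p^j)\in\mathcal{B}^{ij}_\alpha(c)$. By Lemma~\ref{lemma:necessary_PB}, the only contingency in which PB can arise then has $\sigma_1\neq\sigma_2$, which here means the positive-probability event $\sigma=(\alpha,\beta)$, with $p^i(\sigma)=p^i_{\alpha\beta}$ and $p^j(\sigma)=p^j_\alpha$.

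Next I would compute the inversion function on this event. The $\sigma_1$-only observer $j$ moves toward $A$, so $p^j-p^j_\alpha<0$ always. For the full-signal observer $i$, using the closed form for $p_{\alpha\beta}$ in Appendix~\ref{sec:posteriors}, one checks that $p^i_{\alpha\beta}<p^i$ if and only if $\theta_2>\theta_1$ (the contradictory second signal dominates exactly when it is the more informative one). Hence $I(\sigma)=[p^i-p^i_{\alpha\beta}][p^j-p^j_\alpha]<0$ if and only if $\theta_2>\theta_1$. This already delivers the necessity direction, and, importantly, it is not special to this contingency: whenever $\sigma_1\neq\sigma_2$ and exactly one decision-maker observes $\sigma_2$, the net movement of the full-signal observer is toward the more informative component, so $I(\sigma)\geq 0$ whenever $\theta_2\leq\theta_1$. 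Thus if $\theta_2\leq\theta_1$ no contingency produces inverse updating and $Pr(PB)=0$.

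For sufficiency I would verify the divergence condition under $\theta_2>\theta_1$. In that regime the posteriors satisfy $p^i_{\alpha\beta}<p^i<p^j<p^j_\alpha$, so the order is preserved (no swap) and the gap strictly widens: $|p^i(\sigma)-p^j(\sigma)|=p^j_\alpha-p^i_{\alpha\beta}>p^j-p^i=|p^i-p^j|$, giving $D(\sigma)<0$. Together with $I(\sigma)<0$ this is exactly PB (Definition~\ref{def:pb}) on an event of positive probability, so $Pr(PB)>0$. The case $(p^i,p^j)\in\mathcal{V}^{ji}_\beta$ is identical after relabeling: the cost set is $\big(c_\beta(p^i),c_\beta(p^j)\big)$, the triggering event is $\sigma=(\beta,\alpha)$, decision-maker $j$ observes $\sigma_2$, and the same computation with $p_{\beta\alpha}$ gives $I(\sigma)<0$ and $D(\sigma)<0$ if and only if $\theta_2>\theta_1$.

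I expect the main obstacle to be the bookkeeping around the direction of updating rather than any deep difficulty: one must confirm that the full-signal observer's posterior crosses back past the prior exactly at $\theta_2=\theta_1$, and that this crossing---not merely the reordering of the two posteriors---is what controls both $I(\sigma)$ and $D(\sigma)$. The convexity and non-emptiness of the cost set are immediate from the definition of $\mathcal{V}^{ij}_{\sigma_1}$, and the positive-probability claim is immediate once the triggering realization is pinned down; the only care needed is to argue necessity uniformly across contingencies (via the ``movement toward the more informative component'' observation) so that $\theta_2\leq\theta_1$ rules out PB entirely, not just on the single event used for sufficiency.
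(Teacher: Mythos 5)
Your proof is correct and follows essentially the same route as the paper's: the same cost interval $\bigl(c_{\sigma_1}(p^l),c_{\sigma_1}(p^k)\bigr)$ drawn from membership in $\mathcal{V}^{kl}_{\sigma_1}$, the same triggering event $\sigma_1\neq\sigma_2$ via Lemma~\ref{lemma:necessary_PB}, the same characterization $p^i_{\alpha\beta}<p^i \iff \theta_2>\theta_1$ driving both $I(\sigma)<0$ and $D(\sigma)<0$ through the ordering $p^i_{\alpha\beta}<p^i<p^j<p^j_{\alpha}$, and symmetry for the $\beta$ case. If anything, your explicit check that $\theta_2\leq\theta_1$ kills inverse updating in \emph{every} contingency (including the $\sigma_1=\beta$ realizations not used for sufficiency) makes the necessity direction slightly more airtight than the paper's own write-up, which handles those residual cases implicitly via Lemma~\ref{lemma:PB_no_swap}.
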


\begin{proof}
Consider first the case where $\sigma_1=\alpha$ and $(p^i,p^j) \in \mathcal{V}^{ij}_{\alpha}$. Lemma~\ref{lemma:necessary_PB} shows that necessary conditions for PB are $(p^i,p^j) \in \mathcal{V}^{ij}_{\sigma_1}\cup  \mathcal{V}^{ji}_{\sigma_1}$, $\sigma_2 = \beta$, and $a^*_i(\alpha) \neq a^*_j(\alpha)$. The former holds because $(p^i,p^j) \in \mathcal{V}^{ij}_{\alpha}$. The latter holds if only if $c \in \mathcal{C}_\alpha(p^i) \setminus \mathcal{C}_\alpha(p^j)$. The set $\mathcal{C}_\alpha(p^i) \setminus \mathcal{C}_\alpha(p^j)$ is non-empty because $(p^i,p^j) \in \mathcal{V}^{ij}_{\alpha}$ and $\mathcal{C}_\alpha(p^i) \supset \mathcal{C}_\alpha(p^j)$. It is convex because $\mathcal{C}_\alpha(p^k)$ is convex for $k\in\{i,j\}$. In this case, only DM~$i$ acquires and observes $\sigma_2$. The same argument holds when $\sigma_1=\beta$ and $(p^i,p^j) \in \mathcal{V}^{ji}_{\beta}$. In this case, only DM~$j$ acquires and observes $\sigma_2$ when $c \in \mathcal{C}_\beta(p^j) \setminus \mathcal{C}_\beta(p^i)$.

Recall that $\sigma_2\neq \sigma_1$ is necessary for PB (Lemma~\ref{lemma:necessary_PB}). Because $p_{\alpha}^j>p^j$, a necessary and sufficient condition to have $I(\alpha_1,\beta_2)<0$ is $p^i_{\alpha \beta}<p^i$. Such a condition is also sufficient for $D(\alpha_1,\beta_2)<0$, and it is satisfied if and only if $\theta_2>\theta_1$.  Because $p_\beta^i<p^i$, a necessary and sufficient condition to have $I(\beta_1,\alpha_2)<0$ is $p^j_{\beta \alpha}>p^j$. Such a condition is also sufficient for $D(\beta_1,\alpha_2)<0$, and it is satisfied if and only if $\theta_2>\theta_1$.

When all the necessary conditions are met (i.e., $(p^i,p^j) \in \mathcal{V}^{ij}_{\alpha}$ and $c \in \mathcal{C}_\alpha(p^i) \setminus \mathcal{C}_\alpha(p^j)$; or $(p^i,p^j) \in \mathcal{V}^{ji}_{\beta}$ and $c \in \mathcal{C}_\beta(p^j) \setminus \mathcal{C}_\beta(p^i)$), then PB occurs when $\sigma_2\neq \sigma_1$ if and only if $\theta_2>\theta_1$. Therefore, $\theta_2>\theta_1$ is, when the other conditions are satisfied, necessary and sufficient for PB to occur with positive probability, which is given by the ex-ante probability that $\sigma_2\neq \sigma_1$. 
\end{proof}

The next result addresses the two remaining sets of interest, $\mathcal{V}^{ij}_{\beta}$ and $\mathcal{V}^{ji}_{\alpha}$, completing the analysis of all possible cases that can lead to polarization of beliefs. In the following proposition, polarization of beliefs can only occur through a reversal in the order of beliefs, where the highest prior becomes the lowest posterior. The result is negative, indicating that these scenarios cannot lead to polarization of beliefs.

\begin{lemma}\label{lemma:PB_no_swap}
     Consider two decision-makers, $i$ and $j$, such that $p^i<p^j$. For all $(p^i,p^j) \in \mathcal{V}^{ji}_{\alpha}\cup\mathcal{V}^{ij}_{\beta}$ there can be inverse updating but not diverging attitudes, i.e., $D(\sigma)\geq 0$. In these cases, there cannot be PB.
\end{lemma}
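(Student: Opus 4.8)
The plan is to reduce to a single canonical configuration and then bound the post-signal gap. By Lemma~\ref{lemma:necessary_PB}, polarization requires $\sigma_1\neq\sigma_2$ together with exactly one decision-maker acquiring $\tilde\sigma_2$, so only the realizations $\sigma=(\alpha,\beta)$ and $\sigma=(\beta,\alpha)$ are relevant. Using the symmetry $c_{\alpha}(p)=c_{\beta}(1-p)$ from Proposition~\ref{prop:cost_properties}, the two sets $\mathcal{V}^{ji}_{\alpha}$ and $\mathcal{V}^{ij}_{\beta}$ are mirror images under $p\mapsto 1-p$, so I would treat $\mathcal{V}^{ji}_{\alpha}$ in full and recover the other case by reflection. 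On $\mathcal{V}^{ji}_{\alpha}$ we have $c_{\alpha}(p^j)>c_{\alpha}(p^i)$ with $p^i<p^j$, hence for an admissible cost it is the \emph{higher} type $j$ who pays and observes $\sigma_2=\beta$, while $i$ stops at $\sigma_1=\alpha$. Thus $p^i(\sigma)=p^i_{\alpha}$ and $p^j(\sigma)=p^j_{\alpha\beta}$.

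First I would pin down the directions of updating. Observing $\alpha$ always raises the belief, so $p^i_{\alpha}>p^i$; and $p^j_{\alpha\beta}<p^j$ holds precisely when $\theta_2>\theta_1$, which is exactly the regime in which $I(\sigma)<0$ (inverse updating) is possible. Hence the lower type moves up and the higher type moves down, and the only way to obtain $D(\sigma)<0$ is through a reversal of order. If the order is preserved, i.e.\ $p^i_{\alpha}\le p^j_{\alpha\beta}$, then both posteriors lie inside the prior interval, $p^i<p^i_{\alpha}\le p^j_{\alpha\beta}<p^j$, so $|p^i(\sigma)-p^j(\sigma)|<p^j-p^i$ and $D(\sigma)>0$ immediately. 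The whole statement therefore hinges on the swap case $p^i_{\alpha}>p^j_{\alpha\beta}$.

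In the swap case the target inequality $|p^i(\sigma)-p^j(\sigma)|\le p^j-p^i$ becomes $p^i_{\alpha}-p^j_{\alpha\beta}\le p^j-p^i$, equivalently $p^i_{\alpha}-p^i\le p^j-p^j_{\alpha\beta}$: the lower type's upward move must be dominated by the higher type's downward move. My plan is to write both posteriors explicitly via the Bayesian formulas of Appendix~\ref{sec:posteriors} and to invoke the defining constraint $c_{\alpha}(p^j)>c_{\alpha}(p^i)$ — using the reflection of $c_{\alpha}$ about its peak at $1-\theta_1$ together with the monotonicity and convexity properties in Proposition~\ref{prop:cost_properties} — to locate $p^i$ and $p^j$ relative to $1-\theta_1$ and so control the two displacements. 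I expect this swap case to carry the entire difficulty: nothing in the reduction obviously forces the small upward adjustment of the lower type to be outweighed by the large downward adjustment of the higher type. Consequently, before committing to an algebraic argument I would stress-test the inequality numerically on $\mathcal{V}^{ji}_{\alpha}$ (for instance with $\theta_1=\nicefrac{3}{5}$, $\theta_2=\nicefrac{4}{5}$ and both priors just below $1-\theta_1$), because if the bound is tight or fails on part of the region the conclusion would require an additional restriction — such as a lower bound on $p^i_{\alpha}$ relative to the prior gap $p^j-p^i$ — rather than holding for every pair in $\mathcal{V}^{ji}_{\alpha}$.
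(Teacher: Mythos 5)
Your proposal stops short of a proof---it reduces the problem to the swap inequality and then declines to commit---but this is not a fixable omission on your side: the inequality you refused to take on faith genuinely fails, the lemma is false as stated, and the paper's own proof breaks at exactly the step you flagged. Your reduction is the same as the paper's: restrict to $(p^i,p^j)\in\mathcal{V}^{ji}_{\alpha}$ with $\sigma=(\alpha,\beta)$ and only the higher type $j$ paying (with $\mathcal{V}^{ij}_{\beta}$ handled by the symmetry $c_{\alpha}(p)=c_{\beta}(1-p)$), observe that under inverse updating ($\theta_2>\theta_1$) order-preserving divergence is impossible, and conclude that $D(\sigma)<0$ forces the swap inequality $p^i_{\alpha}-p^j_{\alpha\beta}>p^j-p^i$. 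Your proposed stress test with $\theta_1=\nicefrac{3}{5}$, $\theta_2=\nicefrac{4}{5}$ and both priors just below $1-\theta_1$ does produce a counterexample. Take $p^i=0.3$, $p^j=0.39$: both lie in case~3, where $c_{\alpha}$ is strictly increasing, so $c_{\alpha}(p^i)\approx 0.191<c_{\alpha}(p^j)\approx 0.290$ and $(p^i,p^j)\in\mathcal{V}^{ji}_{\alpha}$. With $c=0.25$ and $\sigma=(\alpha,\beta)$, only $j$ observes $\sigma_2$, giving
\[
p^i(\sigma)=p^i_{\alpha}\approx 0.391, \qquad p^j(\sigma)=p^j_{\alpha\beta}\approx 0.193,
\]
so $I(\sigma)<0$, while the posterior gap ($\approx 0.198$) exceeds the prior gap ($0.09$), i.e., $D(\sigma)<0$: polarization occurs, via a belief swap, with positive ex-ante probability.

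The step you would have needed is the one the paper gets wrong. The paper correctly reduces $D(\sigma)<0$ to the requirement $p^i_{\alpha}>p^j-p^i$ (plus a lower bound on $\theta_2$), and then asserts this is impossible because it would force $\theta_1>h(p^i,p^j)\coloneqq\frac{(p^j-p^i)(1-p^i)}{p^i+(p^j-p^i)(1-2p^i)}$, claiming $h>1$ whenever $p^j>p^i$. That claim is backwards: the numerator minus the denominator of $h$ equals $p^i\bigl[(p^j-p^i)-1\bigr]\le 0$, so $h\le 1$ always, and $h>1$ would require $p^j-p^i>1$. In the example above $h=0.1875<\theta_1$, so $p^i_{\alpha}>p^j-p^i$ holds trivially for nearby priors. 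Consequently no algebraic completion of your plan exists; the statement that survives is only that PB fails on the part of $\mathcal{V}^{ji}_{\alpha}$ where $p^i_{\alpha}\le p^j-p^i$ (and its mirror image for $\mathcal{V}^{ij}_{\beta}$)---precisely the extra restriction tying $p^i_{\alpha}$ to the prior gap that you anticipated, except that the salvageable case is an upper, not lower, bound on $p^i_{\alpha}$. Note also that the failure propagates: in the example, PB occurs at a cost for which $(p^i,p^j)\notin\mathcal{B}^{ij}_{\alpha}(c)\cup\mathcal{B}^{ji}_{\beta}(c)$, contradicting the ``only if'' direction of Proposition~\ref{prop:PB_IFF} and the probability formula of Proposition~\ref{prop:probability_PB}.
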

\begin{proof}
    Suppose that $(p^i,p^j) \in \mathcal{V}^{ji}_{\alpha}$ and $\sigma_1=\alpha$. Lemma~\ref{lemma:necessary_PB} shows that $\sigma_2=\beta$ and $a^*_i(\alpha) \neq a^*_j(\alpha)$ are necessary to have PB. In this case, we have $c_\alpha(p^j)>c_\alpha(p^i)$, and thus to have PB it is necessary that $c\in \mathcal{C}_\alpha(p^j)\setminus \mathcal{C}_\alpha(p^i)$. Say that these necessary conditions are met. In this case, only decision-maker $j$ acquires and observes $\sigma_2$. By the definition of divergence function, $D(\sigma)<0$ requires $p^j - p^i < p^i_{\alpha} - p^j_{\alpha \beta}$. By substituting $p^j_{\alpha \beta}$ and rearranging, the condition becomes one about the relative precision of the signal's second component, that is,
    \begin{equation}\label{eq:conditiontheta2alpha}
    \theta_2 > \frac{\theta_1 p^j (1-p^i_{\alpha} + p^j - p^i)}{\theta_1
    p^j (1-p^i_{\alpha} + p^j - p^i) + (1-\theta_1)(1- p^j) (p^i_{\alpha} - p^j + p^i)}\eqqcolon g(p^i,p^j,\theta_1).
\end{equation}
We have that $g(p^i,p^j,\theta_1)\geq 1$ when $p^i_{\alpha} \leq p^j - p^i$. In such a case, there is no $\theta_2$ allowing for diverging attitudes, as $\theta_2 \in \left(\nicefrac{1}{2},1\right)$. PB requires that $p^i_{\alpha} > p^j - p^i$, meaning that DM~$i$'s posterior after observing $\sigma_1=\alpha$ should be sufficiently high. By substituting $p^i_{\alpha}$ and rearranging, we obtain a condition on the signal's first component precision, that is
\[
\theta_1 \underbrace{\left[ p^i - (p^j - p^i) (2p^i-1) \right]}_{>0 \text{ because } p^i<\nicefrac{1}{2}} > (p^j-p^i)(1-p^i).
\]
Before proceeding, it is useful to note that $p^i<\nicefrac{1}{2}$. From proposition~\ref{prop:cost_properties}, the highest willingness to pay for $\sigma_2$ after observing $\sigma_1=\alpha$, is at a prior belief equal to $1-\theta_1$. Since $c_\alpha(p^j)>c_\alpha(p^i)$ and $p^j>p^i$, it must be that $p^i<1-\theta_1<\nicefrac{1}{2}$. We can rewrite the previous inequality as
\[
\theta_1 > \frac{(p^j-p^i)(1-p^i)}{p^i + (p^j - p^i) (1-2p^i)}\eqqcolon h(p^i,p^j).
\]
We have that $h(p^i,p^j)>1$ when $p^j>p^i$, which is always true in the case considered here. There is no $\theta_1$ that allows for $p^i_{\alpha} > p^j - p^i$. As a result, it is not possible to have diverging attitudes when $(p^i,p^j) \in \mathcal{V}^{ji}_{\alpha}$ and $\sigma_1=\alpha$. 

A similar procedure shows that it is not possible to have diverging attitudes when $(p^i,p^j) \in \mathcal{V}^{ij}_{\beta}$ and $\sigma_1=\beta$. When $\sigma_2\neq \sigma_1$ and $a^*_i(\sigma_1) \neq a^*_j(\sigma_1)$, which occurs for suitable processing costs, there is inverse updating, i.e. $I(\sigma)<0$. Because PB requires both inverse updating and diverging attitudes, these cases never yield PB. 
\end{proof}

\begin{corollary}\label{cor:IFF_PB}
    Consider two decision-makers, $i$ and $j$, such that $p^i<p^j$, and a processing cost $c>0$. Then, PB occurs with positive ex-ante probability if and only if $(p^i,p^j)\in\mathcal{B}^{ij}_{\alpha}(c)\cup\mathcal{B}^{ji}_{\beta}(c)$, $c<\max\{c_{\sigma_1}(p^i),c_{\sigma_1}(p^j)\}$, and $\theta_2>\theta_1$.
\end{corollary}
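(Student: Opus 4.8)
The plan is to derive the corollary as a bookkeeping combination of Lemmas~\ref{lemma:necessary_PB}, \ref{lemma:nec_suff_PB}, and~\ref{lemma:PB_no_swap}, which already isolate, respectively, the necessary conditions, the sufficient geometry, and the impossibility region. Since $p^i<p^j$ is fixed, after a given $\sigma_1$ exactly one of the two decision-makers can have the strictly higher willingness to pay, so the relevant configurations are exhausted by the four sets $\mathcal{B}^{ij}_{\alpha}(c)$, $\mathcal{B}^{ji}_{\alpha}(c)$, $\mathcal{B}^{ij}_{\beta}(c)$, and $\mathcal{B}^{ji}_{\beta}(c)$, each contained in the corresponding $\mathcal{V}$ set via the identity $(p^i,p^j)\in\mathcal{B}^{kl}_{\sigma_1}(c)$ iff $(p^i,p^j)\in\mathcal{V}^{kl}_{\sigma_1}$ and $c\in\mathcal{C}_{\sigma_1}(p^k)\setminus\mathcal{C}_{\sigma_1}(p^l)$.

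For necessity, I would start from Lemma~\ref{lemma:necessary_PB}: if PB occurs with positive ex-ante probability under cost $c$, then $\sigma_1\neq\sigma_2$, $c<\Delta U(\theta_2-\tfrac12)$, and $(p^i,p^j)\in\mathcal{B}^{ij}_{\sigma_1}(c)\cup\mathcal{B}^{ji}_{\sigma_1}(c)$ for some $\sigma_1$. Lemma~\ref{lemma:PB_no_swap} then rules out the orientations that would force a belief swap: since $\mathcal{B}^{ji}_{\alpha}(c)\subseteq\mathcal{V}^{ji}_{\alpha}$ and $\mathcal{B}^{ij}_{\beta}(c)\subseteq\mathcal{V}^{ij}_{\beta}$, and $D(\sigma)\geq0$ throughout $\mathcal{V}^{ji}_{\alpha}\cup\mathcal{V}^{ij}_{\beta}$, no polarization can arise from those two sets. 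Hence the prior pair must lie in $\mathcal{B}^{ij}_{\alpha}(c)\cup\mathcal{B}^{ji}_{\beta}(c)$, which is condition~(a). Membership there gives $c<c_{\alpha}(p^i)=\max\{c_{\alpha}(p^i),c_{\alpha}(p^j)\}$ (and symmetrically for $\beta$), yielding condition~(b). Finally, because the pair lies in $\mathcal{V}^{ij}_{\alpha}\cup\mathcal{V}^{ji}_{\beta}$, Lemma~\ref{lemma:nec_suff_PB} shows that $\theta_2>\theta_1$ is necessary for PB to have positive probability, giving condition~(c).

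For sufficiency, I would invoke Lemma~\ref{lemma:nec_suff_PB} in the forward direction. If $(p^i,p^j)\in\mathcal{B}^{ij}_{\alpha}(c)\cup\mathcal{B}^{ji}_{\beta}(c)$, then the pair lies in $\mathcal{V}^{ij}_{\alpha}\cup\mathcal{V}^{ji}_{\beta}$ and $c$ lies in the admissible set $\mathcal{C}_{\sigma_1}(p^k)\setminus\mathcal{C}_{\sigma_1}(p^l)$ that the lemma requires; this is exactly where condition~(b) does the work, guaranteeing the high-WTP decision-maker acquires $\sigma_2$ while the other observes only $\sigma_1$. Under these circumstances Lemma~\ref{lemma:nec_suff_PB} states that $\theta_2>\theta_1$ is sufficient for PB with positive probability, namely the probability of the triggering event $\sigma_1\neq\sigma_2$, so conditions~(a)--(c) together imply the conclusion.

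The main obstacle is organizational rather than computational: one must correctly pair each signal realization with the unique decision-maker who can have the higher willingness to pay, verify that the two ``wrong-orientation'' sets $\mathcal{B}^{ji}_{\alpha}(c)$ and $\mathcal{B}^{ij}_{\beta}(c)$ are precisely those excluded by the no-swap Lemma~\ref{lemma:PB_no_swap}, and check that the cost condition~(b) is not independent content but is delivered by membership in the $\mathcal{B}$-sets, so that the three conditions are genuinely equivalent to PB rather than merely sufficient. All the analytic work --- the signs of $D(\sigma)$ and $I(\sigma)$, and the $\theta_2>\theta_1$ threshold --- has already been absorbed into the three lemmas.
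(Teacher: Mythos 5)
Your proposal is correct and follows essentially the same route as the paper: the paper's proof likewise combines Lemma~\ref{lemma:nec_suff_PB} (sufficiency and the $\theta_2>\theta_1$ threshold) with Lemma~\ref{lemma:PB_no_swap} (excluding the sets $\mathcal{V}^{ji}_{\alpha}$ and $\mathcal{V}^{ij}_{\beta}$) through the same correspondence $(p^i,p^j)\in\mathcal{B}^{kl}_{\sigma_1}(c)$ iff $(p^i,p^j)\in\mathcal{V}^{kl}_{\sigma_1}$ and $c\in\mathcal{C}_{\sigma_1}(p^k)\setminus\mathcal{C}_{\sigma_1}(p^l)$, and closes with the same observation that the cost bound $c<\max\{c_{\sigma_1}(p^i),c_{\sigma_1}(p^j)\}$ is exactly what membership in a non-empty $\mathcal{B}$-set delivers. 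Your explicit invocation of Lemma~\ref{lemma:necessary_PB} for the necessity direction only makes visible what the paper leaves implicit.
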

\begin{proof}
    The proof draws from Lemmata \ref{lemma:nec_suff_PB} and \ref{lemma:PB_no_swap}. By Definition \ref{def:sets_sigma2}, if $(p^i,p^j)\in\mathcal{V}^{kl}_{\sigma_1}$, then it must be that $(p^i,p^j)\in\mathcal{B}^{kl}_{\sigma_1}(c)$ for some $c>0$, with $k,l\in\{i,j\}$, and $k\neq l$. Likewise, if $(p^i,p^j)\in\mathcal{B}^{kl}_{\sigma_1}(c)$ for a given $c>0$, and $\mathcal{B}^{kl}_{\sigma_1}(c)$ is non-empty, then $(p^i,p^j)\in\mathcal{V}^{kl}_{\sigma_1}$. As a result, the proofs of Lemmata \ref{lemma:nec_suff_PB} and \ref{lemma:PB_no_swap} readily apply to pairs of DMs such that $(p^i,p^j)\in\mathcal{B}^{kl}_{\sigma_1}(c)$. From this observation, it follows that PB can occur if and only if $(p^i,p^j)\in\mathcal{B}^{ij}_{\alpha}(c)\cup\mathcal{B}^{ji}_{\beta}(c)$ and $\theta_2>\theta_1$. The proof is completed by the observation that the set $\mathcal{B}^{kl}_{\sigma_1}(c)$ is non-empty if and only if $c<\max\{c_{\sigma_1}(p^i),c_{\sigma_1}(p^j)\}$.
\end{proof}

\subsubsection{Proof of Proposition~\ref{prop:PB_IFF}}
\begin{proof}
    The proof follows from Lemmata~\ref{lemma:necessary_PB}, \ref{lemma:nec_suff_PB}, \ref{lemma:PB_no_swap}, and Corollary~\ref{cor:IFF_PB}. Lemma~\ref{lemma:necessary_PB} outlines necessary conditions, the first one being $(p^i,p^j) \in \mathcal{V}^{ij}_{\sigma_1}\cup  \mathcal{V}^{ji}_{\sigma_1}$ for some $\sigma_1\in\left\{\alpha_1,\beta_1\right\}$. Lemma~\ref{lemma:nec_suff_PB} shows that for $\left(p^i,p^j\right) \in \mathcal{V}^{ij}_{\alpha}\cup \mathcal{V}^{ji}_{\beta}$ there are processing costs such that $\theta_2>\theta_1$ is necessary and sufficient for PB. Lemma~\ref{lemma:PB_no_swap} excludes that PB can occur when $(p^i,p^j) \in \mathcal{V}^{ji}_{\alpha}\cup\mathcal{V}^{ij}_{\beta}$. Together, these lemmata form necessary and sufficient conditions $i)$ and $ii)$. Corollary~\ref{cor:IFF_PB} concludes by setting necessary and sufficient conditions $iii)$ and $iv)$.
\end{proof}

\subsubsection{Proof of Proposition~\ref{prop:probability_PB}}
\begin{proof}
    Consider two decision-makers, $i$ and $j$, such that $p^i<p^j$. Suppose that conditions $i)$ to $iv)$ in Proposition \ref{prop:PB_IFF} are satisfied.\footnote{Recall that condition $iv)$ implies condition $i)$, meaning that we only need conditions $ii)$ to $iv)$ to hold.} Importantly, polarization of beliefs occurs with strictly positive ex-ante probability only if $(p^i,p^j)\in\mathcal{B}^{ij}_{\alpha}(c)\cup\mathcal{B}^{ji}_{\beta}(c)$. Lemma~\ref{lemma:necessary_PB} shows that $\sigma_1\neq\sigma_2$ is necessary for polarization to occur. We can write the ex-ante probability of PB as
    \begin{multline}
    Pr(PB)=Pr(\tilde\sigma=(\alpha,\beta))\cdot \mathds{1}\left\{ (p^i,p^j) \in\mathcal{B}^{ij}_{\alpha}(c) \right\}\\
    + Pr(\tilde\sigma=(\beta,\alpha))\cdot \mathds{1}\left\{ (p^i,p^j) \in\mathcal{B}^{ji}_{\beta}(c)\right\},
    \end{multline}
    where $Pr(\tilde\sigma=(\sigma_1,\sigma_2))=Pr(\tilde\sigma_1 = \sigma_1)\cdot Pr(\tilde\sigma_2 = \sigma_2)$, and
    \[
    Pr(\tilde\sigma_j=\alpha)= p\theta_j + (1-p)(1-\theta_j),
    \]
    \[
    Pr(\tilde\sigma_j=\beta)= p(1-\theta_j) + (1-p)\theta_j.
    \]
    The proposition's probability $Pr(PB)$ is computed from \eqref{eq:probPB} by using a subjective probability that the state is $s=A$ equal to $p$. The probability is maximized when the conditions in the indicator functions are both satisfied. The set $\mathcal{B}^{ij}_{\alpha}(c) \cap \mathcal{B}^{ji}_{\beta}(c)$ can be non-empty, as can be graphically inferred from Figure~\ref{fig:setsJK}. Consider the case where $(p^i,p^j) \in \mathcal{B}^{ij}_{\alpha}(c) \cap \mathcal{B}^{ji}_{\beta}(c)\neq \varnothing$. In this case, the ex-ante probability of PB boils down to
    \begin{multline*}
    Pr(PB)|_{(p^i,p^j)\in\mathcal{B}^{ij}_{\alpha}(c)\cup\mathcal{B}^{ji}_{\beta}(c)}
    = Pr(\alpha_1)Pr(\beta_2)+ Pr(\beta_1)Pr(\alpha_2)\\
    = \left( \theta_1+\theta_2-2\theta_1\theta_2 \right)\left[ 1-4p(1-p) \right] + 2p(1-p).
    \end{multline*}
    The component $\left[ 1-4p(1-p) \right]$ is non-negative for all $p\in[0,1]$ and equal to zero for $p=\nicefrac{1}{2}$. The last part $2p(1-p)$ is maximized for $p=\nicefrac{1}{2}$. The first component, which amounts to $\left( \theta_1+\theta_2-2\theta_1\theta_2 \right)$, is maximized for $\theta_1\to\nicefrac{1}{2}^+$ and $\theta_2\to 1^-$, yielding
    \[
    \lim_{\substack{%
    \theta_1\to\nicefrac{1}{2}^+\\
    \theta_2\to 1^-}} \left( \theta_1 + \theta_2 -2\theta_1\theta_2 \right) = \nicefrac{1}{2}.
    \]
    Furthermore, we have that
    \[
    \frac{\partial Pr(PB)|_{(p^i,p^j)\in\mathcal{B}^{ij}_{\alpha}(c)\cup\mathcal{B}^{ji}_{\beta}(c)}}{\partial p} = -2(2p-1)(2\theta_1-1)(2\theta_2-1),
    \]
    and thus such a probability is maximized when either $p=\nicefrac{1}{2}$ or when $\theta_1\to\nicefrac{1}{2}^+$.
\end{proof}

\subsubsection{Proof of Corollary~\ref{cor:polarizing}}
\begin{proof}
    Suppose that $\sigma_1=\alpha$. Consider a processing cost $c \in \left(0, \Delta U \left[\theta_2 - \frac{1}{2}\right] \right)$, and the sets $\mathcal{H}_{\alpha}(c)$, $\left[0, \underline{\smash q}_\alpha (c)\right]$, and $\left[\overline{q}_\alpha (c),1\right]$ (see Definition~\ref{def:hulsets} and Corollary~\ref{cor:processing_priors}). By definition, these sets are non-empty for some $c$, with $\left[0, \underline{\smash q}_\alpha (c)\right]\cup \mathcal{H}_{\alpha}(c) \cup \left[\overline{q}_\alpha (c),1\right]=[0,1]$. Moreover, $\left[0, \underline{\smash q}_\alpha (c)\right]\cap \mathcal{H}_{\alpha}(c)=\varnothing$, $\mathcal{H}_{\alpha}(c) \cap \left[\overline{q}_\alpha (c),1\right]=\varnothing$, and $\left[0, \underline{\smash q}_\alpha (c)\right]\cap\left[\overline{q}_\alpha (c),1\right]=\varnothing$. From Lemma~\ref{lemma:nec_suff_PB}, if $p^i \in \mathcal{H}_{\alpha}(c)$, then $\theta_2>\theta_1$ is sufficient to have PB with positive ex-ante probability between $i$ and $j$ for any DM~$j$ such that $p^j \in \left[\overline{q}_\alpha (c),1\right]$. That would imply $(p^i,p^j)\in \mathcal{V}^{ij}_{\alpha}$ and $p^i<p^j$. In this case, we have $\mathcal{T}\supseteq\left[\overline{q}_\alpha (c),1\right]$. Likewise, if $p^i \in \left[\overline{q}_\alpha (c),1\right]$, then we have $\mathcal{T}\supseteq\mathcal{H}_{\alpha}(c)$. Finally, suppose that $p^i\in \left[0, \underline{\smash q}_\alpha (c)\right]$. Corollary~\ref{cor:processing_priors} shows that $\underline{\smash q}_\alpha (c)<\underline{\smash q}_\beta (c)$. From Lemma~\ref{lemma:nec_suff_PB}, if $p^i \in \left[0, \underline{\smash q}_\alpha (c)\right]\subset\left[0, \underline{\smash q}_\beta (c)\right]$, then $\theta_2>\theta_1$ is sufficient to have PB with positive ex-ante probability between $i$ and $j$ for any DM~$j$ such that $p^j \in\mathcal{H}_{\beta}(c)$. In this last case, $\mathcal{T}\supseteq\mathcal{H}_{\beta}(c)$. A similar argument completes the proof for the case where $\sigma_1=\beta$.
\end{proof}

%%%%%%%%%%%

\subsection{Confirmatory belief patterns and reaction to information}\label{app:patterns}

\subsubsection{Proof of Proposition~\ref{prop:disconfirmation}}
\begin{proof}
    The proof follows directly from Proposition \ref{prop:cost_properties}, which analyzes the decision-maker's willingness to pay for additional information. The proposition shows that when $p>\nicefrac{1}{2}$ and $p\in\neg\mathcal{E}_\beta$, then $c_\beta(p)>c_\alpha(p)\geq 0$. Likewise, when $p<\nicefrac{1}{2}$ and $p\in\neg\mathcal{E}_\alpha$, then $c_\alpha(p)>c_\beta(p)\geq 0$.  The cost function satisfies $c_\beta(p)>0$ for all $p\in\neg\mathcal{E}_\beta$, and $c_\beta(p)=0$ otherwise. Likewise, $c_\alpha(p)>0$ for all $p\in\neg\mathcal{E}_\alpha$, and $c_\alpha(p)=0$ otherwise. The sets $\neg\mathcal{E}_{\sigma_1}$ are non-empty and convex, and $\neg\mathcal{E}=\neg\mathcal{E}_\alpha \cup \neg\mathcal{E}_\beta$ by Definition \ref{def:extremeset}. Moreover, $\inf\neg \mathcal{E}_\alpha<\frac{1}{2}<\sup\neg \mathcal{E}_\beta$, meaning that there are always prior beliefs that yield to a tendency for disconfirmation. It follows that every DM with prior $p\in \neg\mathcal{E}\setminus\left\{\frac{1}{2}\right\}$ has a tendency for disconfirmation.

    Consider a DM with prior $p>\nicefrac{1}{2}$. Such a DM exhibits disconfirmation when $a^*(\beta_1)=\rho \neq a^*(\alpha_1)=\neg\rho$. This is possible if and only if the processing cost is such that $c\in\left( c_\alpha(p),c_\beta(p) \right)$. For a prior $p>\nicefrac{1}{2}$, we have that $c_\beta(p)>c_\alpha(p) \geq 0$ if and only if $p\in\neg\mathcal{E}_\beta$. The same line of reasoning applies to $p<\nicefrac{1}{2}$ and $p\in\neg\mathcal{E}_\alpha$. The sets $\neg \mathcal{E}_\beta$ and $\neg \mathcal{E}_\alpha$ are disjoint for $\theta_1\geq\theta_2$, and overlap when $\theta_2>\theta_1$, with $\neg \mathcal{E}_\alpha\cap \neg \mathcal{E}_\beta=\left(\frac{\theta_1 (1-\theta_2)}{\theta_1 + \theta_2 - 2 \theta_1 \theta_2},\frac{ \theta_2 (1-\theta_1)}{\theta_1 + \theta_2 - 2\theta_1 \theta_2} \right)$. For prior beliefs in the set $\neg \mathcal{E}_\alpha\cap \neg \mathcal{E}_\beta$, we have that $\min\{c_\alpha(p),c_\beta(p)\}>0$. In those cases, disconfirmation occurs if and only if $ c\in \left( \min\{ c_\alpha(p),c_\beta(p) \}, \max\{ c_\alpha(p),c_\beta(p) \} \right)$.  The prior $p=\nicefrac{1}{2}$ is removed because it does not favor any state, and $c_\alpha(\nicefrac{1}{2})=c_\beta(\nicefrac{1}{2})$.
\end{proof}

\subsubsection{Proof of Proposition~\ref{prop:confirmation}}

\begin{proof}
Consider a decision-maker with prior $p>\nicefrac{1}{2}$. By definition, CB occur when ${p}(\sigma)>p>{p}_{\sigma}$. For ${p}(\sigma)\neq {p}_{\sigma}$, it must be that $a^*(\sigma_1)=\neg \rho$. Otherwise, $a=\rho$ implies that ${p}(\sigma)={p}_{\sigma}$.  For ${p}(\sigma)>p>{p}_{\sigma}$ it must be that $\sigma=(\alpha,\beta)$ and that $\theta_2>\theta_1$. Otherwise, given that $a=\neg\rho$, a signal $\sigma=(\beta,\beta)$ would yield $p_{\sigma}<p(\sigma)<p$, a signal $\sigma=(\alpha,\alpha)$ would yield $p_{\sigma}>p(\sigma)>p$, and a signal $\sigma=(\beta,\alpha)$ would yield ${p}(\sigma)<p<{p}_{\sigma}$. Likewise, given that $\sigma=(\alpha,\beta)$, $\theta_2\leq\theta_1$ would imply $p\leq p_{\sigma}<p(\sigma)$. The optimal choice $a^*(\alpha_1)=\neg \rho$ can occur only when $c>c_{\alpha}(p)$. To sum up, necessary conditions for CB in this case are $\theta_2>\theta_1$, $c>c_{\alpha}(p)$, and $\sigma=(\alpha,\beta)$. These conditions are also sufficient as they imply ${p}(\sigma)>p>{p}_{\sigma}$. The same line of reasoning applies to DB and to the case decision-makers with prior $p<\nicefrac{1}{2}$, completing the proof.
\end{proof}

\subsubsection{Proof of Proposition~\ref{prop:reaction}}

\begin{proof}
    It is necessary for both UR and OR that $a^*(\sigma_1) = \neg\rho$, because otherwise $p_{\sigma} = p(\sigma)$. We have $a^*(\sigma_1) = \neg\rho$ if and only if $c > c_{\sigma_1}(p)$. UR requires $\sigma_1 = \sigma_2$; otherwise, if $\sigma_1 \neq \sigma_2$, then neither $p < p(\sigma) < p_{\sigma}$ nor $p_{\sigma} < p(\sigma) < p$ can hold. OR requires $\sigma_1 \neq \sigma_2$ and $\theta_2 < \theta_1$; otherwise, if $\sigma_1 = \sigma_2$, then neither $p < p_{\sigma} < p(\sigma)$ nor $p(\sigma) < p_{\sigma} < p$ can hold. The same applies when $\theta_2 \geq \theta_1$, as in that case beliefs satisfy $p_{\alpha\beta} \leq p$ and $p_{\beta\alpha} \geq p$. These conditions are also sufficient. If $c > c_{\sigma_1}(p)$ and $\sigma_1 = \sigma_2$, then we obtain $p < p(\sigma) < p_{\sigma}$ when $\sigma = (\alpha, \alpha)$ and $p_{\sigma} < p(\sigma) < p$ when $\sigma = (\beta, \beta)$. If $c > c_{\sigma_1}(p)$, $\sigma_1 \neq \sigma_2$, and $\theta_2 < \theta_1$, then we obtain $p < p_{\sigma} < p(\sigma)$ when $\sigma = (\alpha, \beta)$ and $p(\sigma) < p_{\sigma} < p$ when $\sigma = (\beta, \alpha)$.
\end{proof}

\bibliography{biblio_beliefs} 
%\nocite{*} 	
\end{document}